\documentclass{article}
\usepackage{graphicx} %
\usepackage{enumitem}
\usepackage[T1]{fontenc}
\usepackage[linesnumbered,ruled,vlined]{algorithm2e}
\usepackage{notation}

\usetikzlibrary{arrows.meta,positioning,calc}

\SetKwInput{KwInput}{Input}
\SetKwInput{KwOutput}{Output}
\SetKwInput{KwReturn}{Return}

\usepackage{arxiv}

\title{Computing Equilibrium beyond Unilateral Deviation}
\author{Mingyang Liu $^1$, Gabriele Farina$^1$, Asuman Ozdaglar $^1$\\
    $^1$ LIDS, EECS, Massachusetts Institute of Technology\\
    $^1$ \texttt{\{liumy19,gfarina,asuman\}@mit.edu}\\
}

\begin{document}

\maketitle

\begin{abstract}

Most familiar equilibrium concepts, such as Nash and correlated equilibrium, guarantee only that no single player can improve their utility by deviating unilaterally. They offer no guarantees against profitable coordinated deviations by coalitions. Although the literature proposes solution concepts that provide stability against multilateral deviations (\emph{e.g.}, strong Nash and coalition-proof equilibrium), these generally fail to exist. In this paper, we study an alternative solution concept that minimizes coalitional deviation incentives, rather than requiring them to vanish, and is therefore guaranteed to exist. Specifically, we focus on minimizing the average gain of a deviating coalition, and extend the framework to weighted-average and maximum-within-coalition gains. In contrast, the minimum-gain analogue is shown to be computationally intractable. For the average-gain and maximum-gain objectives, we prove a lower bound on the complexity of computing such an equilibrium and present an algorithm that matches this bound. Finally, we use our framework to solve the \emph{Exploitability Welfare Frontier} (EWF), the maximum attainable social welfare subject to a given exploitability (the maximum gain over all unilateral deviations).

\end{abstract}

\section{Introduction}
\label{section:intro}

Most equilibrium concepts studied so far, such as Nash equilibrium (NE) \citep{nash1950equilibrium-nash-def}, correlated equilibrium (CE) \citep{aumann1974subjectivity-CE-def}, coarse correlated equilibrium (CCE) \citep{moulin1978strategically-CCE-def}, and Stackelberg equilibrium \citep{von2010market-stackelberg-def}, guarantee only that \emph{no individual player} has a unilateral utility improving deviation. However, they offer no guarantees when multiple players deviate simultaneously by forming a coalition. In this paper, we address the following question:
\begin{quote}\itshape 
What is an appropriate notion for capturing multilateral deviations, and how can it be computed efficiently?
\end{quote}

Previous notions that address coalition deviations, such as strong NE \citep{strong-NE}\footnote{In this paper, we use the term \emph{strong equilibrium} to broadly refer to any equilibrium concept that considers multilateral deviations.} and coalition-proof equilibrium \citep{bernheim1987coalition-proof}, mostly fail to exist in general games (unlike NE). Therefore, instead of searching for a joint strategy immune to all coalition deviations, we focus on computing a joint strategy that \emph{minimizes} the maximum average gain\footnote{Our framework also extends to coalition objectives based on weighted sums or weighted averages of players' deviation gains, as well as objectives based on the maximum within-coalition deviation gain. In contrast, the corresponding minimum-gain variant is computationally intractable even for small games. See \Cref{theorem:strong-CCE-hardness,remark:extension} for details.} achievable by any coalition through joint deviation. In other words, we aim to compute the most stable strategy profile, even if a perfectly stable one does not exist. We refer to this notion as the \emph{Minimum Average-Strong Equilibrium} (MASE).

The difficulty of this optimization problem naturally depends on the complexity of the interactions between players. To formalize this, we introduce the \Cref{def:utility-dependence-graph}, $\cG(\cV,\cE)$, where each player is a vertex. An edge connects two players, $i$ and $j$, if and only if there is some player $k$ whose utility is affected by the actions of both $i$ and $j$. Intuitively, an edge $\rbr{i,j}$ indicates that the actions of $i$ and $j$ jointly influence some player's utility, and hence that coalitional deviations involving these players can create nontrivial joint effects. The graph $\cG$ provides a clear map of the game's interaction structure, and its properties can help us understand the computational complexity of finding the MASE. For games with simple interaction structures (\emph{e.g.}, a sparse \Cref{def:utility-dependence-graph}), one might expect to compute the MASE efficiently.

We first show that computing the MASE is computationally challenging in the general case. We establish two key hardness results that delineate the problem's complexity.

First, the problem is fundamentally harder than finding equilibria like NE or CE. In those cases, we are solving a feasibility problem: finding a strategy where the maximum gain from deviating is at most zero. For MASE, we must solve an optimization problem: minimizing this maximum gain. This distinction is crucial, and we show that even for the simplest case of single-player deviations (\emph{i.e.}, coalitions of size one)\footnote{Restricting to singleton coalitions aligns the deviation model with unilateral deviations, but it still differs from NE: NE only asks whether there exists a feasible solution with deviation gain $\leq 0$, whereas MASE minimizes the deviation gain}, approximating the MASE value to within a factor that is inverse polynomial in the number of players is {\sf NP}-hard. This indicates that even without considering complex coalitions, the problem is intractable without additional assumptions on the game's structure.

Second, we show that this complexity is intrinsically tied to the structure of the \Cref{def:utility-dependence-graph}. Building on the strong exponential time hypothesis (SETH) \citep{impagliazzo2001complexity-SETH} (see \Cref{theorem:tree-width-important} for details), we prove that solving MASE requires time that is at least exponential in the treewidth\footnote{Treewidth can be thought of as a formal measure of how sparse and "tree-like" a graph is.} of the \Cref{def:utility-dependence-graph}. This holds even when we only consider coalitions of a constant size. This result demonstrates that the treewidth is a fundamental barrier, and an exponential dependence on it is unavoidable.

Finally, we complement the lower bound with an algorithm whose running time has exponential dependence only on the treewidth of the \Cref{def:utility-dependence-graph}, rather than on the number of players. This demonstrates that our hardness result is tight and establishes the treewidth as the definitive parameter characterizing the complexity of computing the MASE. While the problem is hard in general, it becomes tractable for games where the underlying interaction structure is not too complex. The key idea of our algorithm involves breaking the game into small, overlapping, weakly interacting parts, solving the problem locally on each part, and then combining these local solutions consistently.

To summarize, our contributions are as follows:
\begin{enumerate}[nosep]
    \item \textbf{Complexity characterization.} We establish lower bounds on the computational complexity of computing MASE, showing that an exponential dependence on the treewidth of the \Cref{def:utility-dependence-graph} is unavoidable (\Cref{theorem:tree-width-important}). We further show that the minimum-gain variant over coalitions is computationally intractable even for small games (\Cref{theorem:strong-CCE-hardness}).
    
    \item \textbf{Algorithmic contribution.} We design an algorithm for computing MASE whose running time matches the lower bound in its exponential dependence on the treewidth of the \Cref{def:utility-dependence-graph}, rather than depending exponentially on the number of players.

    \item \textbf{Application to welfare-exploitability tradeoffs.} We show how our framework can be used to compute socially optimal strategies subject to a prescribed exploitability budget, thereby characterizing the exploitability welfare frontier (\Cref{section:tradeoff}).
\end{enumerate}

\paragraph{Technical Overview.}
In \Cref{section:computation-MASE}, we formulate the strong-equilibrium computation problem as a two-player zero-sum meta-game between a correlator and a deviator. The correlator chooses a correlated joint strategy for all players in the original game, whereas the deviator chooses a coalition together with a joint deviation for that coalition. The game is zero-sum because the correlator seeks to minimize the coalition's gain from deviation, while the deviator seeks to maximize it.

The main computational challenge is that the action spaces of both players in this meta-game are exponentially large: the correlator's actions are joint action profiles, and the deviator's actions are coalition-deviation pairs. Standard game-solving approaches, such as no-regret learning \citep{hazan2016introduction} and linear programming \citep{papadimitriou2008computing-against-hope}, typically scale with the largest action space among the players, so applying them directly is computationally intractable. To overcome this obstacle, we instantiate no-regret learning with Follow the Perturbed Leader (FTPL) \citep{hazan2016introduction}. Each FTPL update reduces to optimizing a linear objective over the relevant simplex, whose optimum is attained at an extreme point. Thus, it suffices to compute and store a pure strategy at each iteration rather than maintain the full exponential-dimensional distribution.

Finally, we solve the resulting linear optimization oracle via dynamic programming over a tree decomposition of the \Cref{def:utility-dependence-graph}. This yields a running time that is exponential only in the treewidth of the \Cref{def:utility-dependence-graph}, rather than in the number of players. Consequently, the algorithm is efficient for games whose \Cref{def:utility-dependence-graph} has bounded treewidth. Moreover,  its exponential dependence on the treewidth matches the lower bound in \Cref{theorem:tree-width-important}.

\section{Related Work}
\label{section:related-work}

In this section, we review the literature on strong equilibrium from three perspectives: existence, time complexity, and computation methods.

\paragraph{Existence of Strong Equilibrium.} \citet{strong-NE} introduced the strong NE, where no coalition (a nonempty subset of players) can deviate in a way that strictly improves the utility of all its members. However, even in simple two-player games such as Prisoner’s Dilemma \citep{prisoner-dilemma}, a strong NE does not exist when players can deviate simultaneously. To address this, \citet{bernheim1987coalition-proof} proposed the coalition-proof equilibrium, which restricts the set of deviations (see \Cref{section:relation} for details). Yet, this concept also fails to guarantee existence, already in three-player games \citep{bernheim1987coalition-proof}. More recently, \citet{rahn2015efficient-collaborative-game} studied the notion of $\alpha$-approximate $k$-equilibrium, where no coalition of size at most $k$ can deviate so that every member’s utility becomes at least $\alpha\geq 1$ times their original utility. They further showed that such equilibria exist in graph coordination games only under specific conditions, for instance, when $\alpha \geq 2$. Motivated by these non-existence results, we instead focus on minimizing the maximum average gain from coalition deviations \Cref{eq:MASE}, a quantity that is always well defined.

\paragraph{Complexity of Strong Equilibrium.} Since a strong NE degenerates to an NE when only singleton coalitions are considered, computing a strong NE is {\sf PPAD}-hard in general \citep{daskalakis2009complexity-PPAD,chen2006settling-ppad}. Beyond computation, \citet{conitzer2008new-np} showed that even deciding whether a strong NE exists is {\sf NP}-complete in two-player symmetric games, and \citet{berthelsen2022computational-exist-R} further established that the problem is $\mathsf{\exists\RR}$-complete for three-player games. Similarly, \citet{rahn2015efficient-collaborative-game} proved that determining the existence of a strong NE is {\sf NP}-complete in graph coordination games, even when restricting attention to coalitions of constant size. To the best of our knowledge, however, no hardness results are known for computing strong equilibria when correlation on the joint strategy is allowed, \emph{i.e.}, a correlated strategy immune to coalition deviations. In this paper, we show that computing a correlated strategy that minimizes the average gain from coalition deviations is {\sf NP}-hard. Moreover, we establish a lower bound based on the treewidth of the \Cref{def:utility-dependence-graph}, demonstrating an inherent computational barrier in solving the MASE considered here.

\paragraph{Computation of Strong Equilibrium.} \citet{holzman1997strong-congestion} and \citet{rozenfeld2006strong-congestion} developed algorithms to compute strong NE and correlated strong equilibria in congestion games under certain conditions in polynomial time. \citet{rahn2015efficient-collaborative-game} showed that a strong NE can also be computed in polynomial time when the graph coordination game is defined on a tree. In contrast, \citet{gatti2017verification-compute-SNE} proposed a spatial branch-and-bound algorithm for computing strong NE more generally, but its runtime is exponential. Along the same lines, \citet{nessah2014existence-SNE} also provided a computationally intractable algorithm. Of independent interest, \citet{papadimitriou2008computing-against-hope} introduced an efficient algorithm for computing optimal CE, \emph{e.g.}, a CE that maximizes the social welfare, in graphical games \citep{DBLP:conf/uai/KearnsLS01-graphical-game,kakade2003correlated-graphical-game} with bounded treewidth, using linear programming. In this paper, we develop a new algorithm for computing MASE based on no-regret learning, with a time complexity that matches the lower bound dictated by the treewidth of the \Cref{def:utility-dependence-graph}.

\section{Preliminaries}

For any vector $\bx \in \RR^n$, we use $x_i$ to denote its $i^{th}$ element and $\nbr{\bx}_p$ to denote its $p$-norm. By default, $\nbr{\bx}$ refers to the 2-norm. For a positive integer $N$, let $[N] \coloneqq {1,2,\dots,N}$. We denote the $(n-1)$-dimensional probability simplex by $\Delta^n\coloneqq \cbr{\bx\in[0,1]^n\colon \sum_{i=1}^n x_i=1}$. More generally, for any discrete set $S$, let $\abr{S}$ denote its cardinality and write $\Delta^S$ for the probability simplex over $S$, whose coordinates are indexed by elements of $S$ (\emph{e.g.}, $\Delta^n = \Delta^{\sbr{n}}$). Similarly, $\RR^S$ denotes the $\abr{S}$-dimensional real vector space with coordinates indexed by $S$. For any set $S_1, S_2$, $S_1 \times S_2$ denotes the Cartesian product of sets $S_1$ and $S_2$. Finally, we let $\ind(\text{argument})$ denote the indicator function, which equals $1$ if the argument is true and $0$ otherwise.

\subsection{Games}

A game is represented as a tuple $(N, \cbr{\cA_i}_{i=1}^N, \cbr{\cU_i}_{i=1}^N, \cS)$, where
\begin{itemize}[noitemsep, topsep=0pt]
    \item $N$ is the number of players.
    \item $\cA_i$ is the action set of player $i$. For convenience, let $\cA \coloneqq \bigtimes_{i=1}^N \cA_i$ denote the joint action set.%
    \item $\cU_i \colon \cA \to [0,1]$ is the utility function of player $i\in[N]$.
    \item $\cS$ is the set of coalitions, which is a set of subsets of players. For example, if only unilateral deviations are allowed (as in Nash equilibrium or coarse correlated equilibrium), then $\cS = \cbr{\cbr{1}, \cbr{2}, \dots, \cbr{N}}$.
\end{itemize}
For notational simplicity, for any subset of players $S \subseteq [N]$, we write $\cA_S \coloneqq \bigtimes_{i \in S} \cA_i$. Throughout the paper, let $A\coloneqq\max_{i\in [N]} \abr{\cA_i}$ denote the size of the largest action set.

For any joint action $\ba \in \cA$, let $a_i$ denote the action of player $i$, and let $\ba_{-i}=(a_1,a_2,\dots,a_{i-1},a_{i+1},\dots a_N)$ be the joint action of all players except $i$. More generally, for any subset $S \subseteq [N]$, we write $\ba_{-S}$ for the joint action of players outside $S$.

\subsection{Succinct Representation}

This paper focuses on multi-player games with a succinct representation. Specifically, each utility function $\cU_i$ can be encoded using a number of bits polynomial in the number of players $N$, rather than requiring $\cO\rbr{N\prod_{i=1}^N |\cA_i|}$ bits as in the general case. Examples of succinctly represented games include polymatrix games \citep{howson1972equilibria-polymatrix,eaves1973polymatrix} and congestion games \citep{rosenthal1973class-congestion}. Throughout the paper, we call an algorithm \emph{efficient} if its running time is polynomial in $N$, as opposed to polynomial in $\prod_{i=1}^N |\cA_i|$. We focus on succinct games because MASE can otherwise be solved by a linear program whose size grows exponentially with $N$ (see \Cref{section:lp}). Moreover, the study of strong equilibrium is particularly compelling in large games, where exponential dependence on $N$ is computationally prohibitive.

\section{Minimum Average-Strong Equilibrium (MASE)}

Several notions of strong equilibrium have been proposed, including the strong Nash equilibrium (NE) \citep{strong-NE}, the sum-strong NE \citep{hoefer2013strategic-equivalence-CSE-CSSE} (no improvement on the total gain of any coalition), and coalition-proof equilibrium \citep{bernheim1987coalition-proof} (permits only self-enforcing deviations, see \Cref{section:relation}). However, none of these notions are guaranteed to exist in general games. To build intuition, we first illustrate why a strong NE does not exist in Prisoner’s Dilemma. Later, we will show that the problem persists even when correlated strategies are allowed.

\begin{table}[t]
    \centering
    \begin{tabular}{|c|c|c|}
    \hline
     & \textbf{Confess (C)} & \textbf{Defect (D)} \\
    \hline
    \textbf{Confess (C)} & $(0.6,0.6)$ & $(0,1)$ \\
    \hline
    \textbf{Defect (D)} & $(1,0)$ & $(0.2,0.2)$ \\
    \hline
    \end{tabular}
    \caption{Utility matrix of Prisoner’s Dilemma. Each entry $(a,b)$ denotes the payoff of the row player ($a$) and the column player ($b$).}
    \label{table:PD}
\end{table}
\begin{lemma}
\label{lemma:non-existence}
    In Prisoner's Dilemma, no strong Nash nor strong correlated equilibrium exists when $\cS=\cbr{\cbr{1},\cbr{2},\cbr{1,2}}$ is the set of all non-empty subsets of players.
\end{lemma}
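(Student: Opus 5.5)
Since a strong Nash equilibrium is a product-form strong correlated equilibrium, it suffices to show that no correlated distribution $\mu\in\Delta^{\cA}$ over $\{C,D\}^2$ is immune to all deviations by coalitions in $\cS$. I will use the weakest natural deviation notion: a coalition $S$ deviates profitably if some fixed joint action $\ba'_S\in\cA_S$, played regardless of the recommendation, strictly improves every member's expected utility. This is the coarse-correlated/strong-CCE-style deviation. If the paper's intended notion allows recommendation-dependent deviations, the deviation set only grows, so non-existence under fixed deviations implies non-existence under the richer notion.

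Write $\mu=(p_{CC},p_{CD},p_{DC},p_{DD})$. The plan is to show that immunity to the two singleton coalitions forces $\mu$ to be concentrated on $(D,D)$, and that the grand coalition then deviates profitably.

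\textbf{Singleton coalitions.} Consider player 1 deviating to $D$ unconditionally. The expected utility becomes $p_{CC}+p_{DC}$ (payoff $1$ against $C$) plus $0.2(p_{CD}+p_{DD})$. The original expected utility is
\[
0.6p_{CC}+p_{DC}+0.2p_{DD}.
\]
The gain is therefore $0.4p_{CC}+0.2p_{CD}$. This is strictly positive unless $p_{CC}=p_{CD}=0$. By symmetry, player 2's deviation to $D$ gains $0.4p_{CC}+0.2p_{DC}$, which is positive unless $p_{CC}=p_{DC}=0$. Hence immunity to singleton deviations forces $\mu=\delta_{(D,D)}$. This matches the familiar fact that $D$ strictly dominates $C$, so the unique CCE of this game is $(D,D)$.

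\textbf{Grand coalition.} At $\delta_{(D,D)}$ each player receives $0.2$. If the coalition $\{1,2\}$ jointly deviates to $(C,C)$, each player receives $0.6>0.2$, so every member strictly improves. Therefore no $\mu$ survives all of $\cS$. Since mixed Nash profiles are special cases of $\mu$, no strong Nash equilibrium exists either.

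\textbf{Main obstacle.} The only delicate point is fixing the exact definition of deviation (unconditional versus recommendation-dependent, and strict improvement for all members). The argument above uses unconditional deviations, which form the smallest deviation class, so the conclusion holds under any reasonable variant. I would state this explicitly. The arithmetic of the singleton step is the one place to double-check carefully.
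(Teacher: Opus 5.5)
Your proof is correct and follows the paper's route. Singleton deviations to $D$ rule out any mass on $(C,C)$, $(C,D)$ and $(D,C)$, and from $\delta_{(D,D)}$ the grand coalition deviates profitably to $(C,C)$. Your explicit gains $0.4p_{CC}+0.2p_{CD}$ and $0.4p_{CC}+0.2p_{DC}$, together with your remark that unconditional deviations are the weakest deviation class, make precise what the paper states only informally.
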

\begin{proof}
Since correlated equilibria include all Nash equilibria, it suffices to examine strong correlated equilibria. A strong correlated equilibrium is a correlated joint strategy where no subset of players (a coalition) can jointly deviate in a way that strictly improves the utility of all its members.

As shown in \Cref{table:PD}, any strategy with positive weight on $(C,C), (C,D), (D,C)$ yields a profitable deviation for at least one singleton coalition, $\cbr{1}$ or $\cbr{2}$. Conversely, placing all weight on $(D,D)$ creates a deviation to $(C,C)$ that benefits the coalition $\cbr{1,2}$. Thus, no strong NE or strong correlated equilibrium exists.\qedhere
\end{proof}
In \Cref{lemma:non-existence-coalition-proof}, we further show that even the weaker notion, the coalition-proof equilibrium, may fail to exist in normal-form games, even when correlation is allowed.

The failure of strong equilibria stems from the fact that a single player may belong to multiple coalitions whose objectives are incompatible, so no strategy can simultaneously eliminate profitable deviations for all coalitions. Motivated by this, rather than requiring no profitable deviation exists, we instead look for a joint strategy that \emph{minimizes} the coalitional incentive to deviate, which always exists by the Weierstrass theorem. 

For each coalition $S\in\cS$, we adopt a transferable-utility viewpoint: after a deviation, members of $S$ may redistribute utility internally. Under this assumption, a coalition's incentive to deviate is naturally summarized by its aggregate gain. To make this notion comparable across coalitions of different sizes, we evaluate it on a per-capita basis by averaging over the players in $S$. This motivates the following definition of the \emph{Minimum Average-Strong Equilibrium} (MASE). Extensions to the weighted sum (average) and to the maximum within-coalition gain are discussed in \Cref{remark:extension}. By contrast, extending the criterion to the minimum within-coalition gain is computationally intractable even in small games, as shown in \Cref{theorem:strong-CCE-hardness}. Let $\Pi\subseteq \Delta^{\cA}$ be the set of joint strategies under consideration. Then a MASE $\pi^*\in\Pi$ is any solution to
\begin{align}
    \pi^*\in \argmin_{\pi\in \Pi}{\color{red}\max_{S\in \cS} \max_{\hat\ba_S\in\cA_S} \frac{1}{|S|}\sum_{i\in S} \EE_{\ba\sim \pi}\sbr{\cU_i\rbr{\hat\ba_S, \ba_{-S}} - \cU_i\rbr{\ba} }}, \numberthis[MASE]{eq:MASE}
\end{align}
where the {\color{red} expression} above is referred to as the \emph{coalition exploitability} of $\pi$. To make the solution concept computationally tractable, and to allow coordinated behavior, we work with correlated joint strategies, rather than restricting attention to independent mixed strategies as in NE. Accordingly, throughout the remainder of the paper we take $\Pi=\Delta^{\cA}$. This choice helps avoid the computational hardness inherited from NE: when the coalition family $\cS$ contains only singleton coalitions, any strong-equilibrium notion reduces to the usual unilateral-deviation requirement, \emph{i.e.}, it collapses to NE. Since computing an NE is {\sf PPAD}-hard even for two-player normal-form games \citep{chen2006settling-ppad,daskalakis2009complexity-PPAD}, insisting on an equilibrium notion built on independent strategies would generally be intractable.

Intuitively, \Cref{eq:MASE} selects the correlated strategy $\pi \in \Delta^{\cA}$ that minimizes the maximum average gain attainable by any coalition across all possible coalitions. If this value is less than or equal to zero, then no coalition can simultaneously deviate in a way that yields a strictly positive total gain.

A correlated strategy $\pi \in \Delta^{\cA}$ is called an \emph{$\epsilon$-MASE} if, for every MASE $\pi^* \in \Delta^{\cA}$,
\begin{align*}
    &\max_{S\in \cS} \max_{\hat\ba_S\in\cA_S} \frac{1}{|S|}\sum_{i\in S} \EE_{\ba\sim \pi}\sbr{\cU_i\rbr{\hat\ba_S, \ba_{-S}} - \cU_i\rbr{\ba} }\\
    \leq& \max_{S\in \cS} \max_{\hat\ba_S\in\cA_S} \frac{1}{|S|}\sum_{i\in S} \EE_{\ba\sim {\color{orange!50!black}\pi^*}}\sbr{\cU_i\rbr{\hat\ba_S, \ba_{-S}} - \cU_i\rbr{\ba} } + \epsilon.
\end{align*}
A strong Nash equilibrium requires that for any deviating coalition, at least one member does not strictly improve their utility. In contrast, $\epsilon$-MASE aims to minimize the average improvement over all players within any given coalition. From another perspective, $\epsilon$-MASE minimizes the incentive to deviate, even when coalition members can freely reallocate utility within the coalition.

\definecolor{cOuter}{HTML}{8EC5C1}   %
\definecolor{cMidA}{HTML}{4B90B0}    %
\definecolor{cMidB}{HTML}{2F6B8A}    %
\definecolor{cInner}{HTML}{1D4F6A}   %

\section{Relations among Equilibria}
\label{section:relation}

In this section, we delineate the relationships among various strong equilibrium concepts, including strong CCE (the counterpart to strong NE \citep{strong-NE}),\footnote{Here, strong CCE refers to robustness against coalition deviations, analogous to strong NE. This distinguishes it from the definition in \citet{strong-CCE-different-def}, where "strong" implies that any unilateral deviation strictly decreases the deviating player's utility.} strictly strong CCE (the counterpart to strictly strong NE \citep{van1996strong-strictly-strong}), and coalition-proof CCE (the counterpart to coalition-proof NE \citep{bernheim1987coalition-proof}).

To align with the standard equilibrium requirement that no player or coalition can gain by deviating, we say that a joint strategy $\pi$ is a $\text{MASE}^{\leq 0}$ if it satisfies\footnote{An $\epsilon$-MASE is an $\epsilon$-approximate minimizer of the coalition exploitability, whereas $\text{MASE}^{\leq 0}$ is a joint strategy whose coalition exploitability is at most zero.}
\begin{align}
    \max_{S\in \cS} \max_{\hat\ba_S\in\cA_S} \frac{1}{|S|}\sum_{i\in S} \EE_{\ba\sim \pi}\sbr{\cU_i\rbr{\hat\ba_S, \ba_{-S}} - \cU_i\rbr{\ba} }\leq 0. \numberthis[$\text{MASE}^{\leq 0}$]{eq:def-MASE-leq-0}
\end{align}
This notion coincides with the sum-strong correlated equilibrium \citep{hoefer2013strategic-equivalence-CSE-CSSE} (no improvement on the total gain of any coalition), as the sign of the aggregate deviation gain is invariant to scaling by the coalition size (\emph{i.e.}, the average gain is non-positive if and only if the sum is non-positive).

We can unify these equilibrium notions under a single generalized definition. Note that while standard definitions assume $\cS$ contains all non-empty subsets of $[N]$ for strictly strong, strong, and coalition-proof CCE, we generalize $\cS$ in this work to represent an arbitrary collection of subsets to align with the MASE framework. The unified condition is given by:
\begin{align}
    \max_{S\in \cS} \max_{\hat\pi_S\in \cC_S} f\rbr{\cbr{\EE_{\ba\sim \pi, \hat\ba_S\sim \hat\pi_S}\sbr{\cU_i\rbr{\hat\ba_S, \ba_{-S}} - \cU_i\rbr{\ba} }}_{i\in S}} \leq 0,\label{eq:generalized-condition}
\end{align}
where
\begin{align}
    (\cC_S, f(G))\coloneqq \begin{cases}
        \rbr{\Delta^{\cA_S}, \frac{1}{|G|} \sum_{g\in G} g} & \text{MASE}^{\leq 0}\\
        \rbr{\Delta^{\cA_S}, \max_{g\in G} g - \ind\rbr{\min_{g'\in G} g'< 0}} & \text{strictly strong CCE}\\
        \rbr{\Delta^{\cA_S}, \min_{g\in G} g }& \text{strong CCE}\\
        \rbr{\text{self-enforcing strategies}, \min_{g\in G} g }& \text{coalition-proof CCE}
    \end{cases}
\end{align}
The definition of self-enforcing strategies is discussed later in this section. For $\text{MASE}^{\leq 0}$, maximizing over $\hat\pi_S \in \Delta^{\cA_S}$ in \Cref{eq:generalized-condition} is equivalent to maximizing directly over $\hat\ba_S \in \cA_S$ in \Cref{eq:MASE}. This is because the function $f$ corresponding to $\text{MASE}^{\leq 0}$ is linear in $\hat\pi_S$, so its maximum over the simplex is attained at an extreme point (vertex), \emph{i.e.}, at a deterministic choice $\hat\pi_S$ with $\hat\pi_S(\hat \ba_S)=1$ for some $\hat \ba_S\in\cA_S$.

For strictly strong CCE, since $\cU_i\in\sbr{0,1}$, each gain $g\in\sbr{-1,1}$, and thus $\max_{g\in G} g \leq 1$. Hence, we have
\begin{align}
    f(G)\leq 0\quad\text{if and only if}\quad \max_{g\in G} g\leq 0 \text{ or } \min_{g'\in G} g' < 0.
\end{align}
Indeed, if $\min_{g'\in G} g' < 0$, then $f(G)=\max_{g\in G} g-1\leq 0$ automatically. Otherwise, $\min_{g'\in G} g' \geq 0$ and the condition reduces to $\max_{g\in G} g\leq 0$, which forces $g\leq 0$ for all $g\in G$. Equivalently, $f(G)\leq 0$ fails exactly when $\min_{g'\in G} g' \geq 0$ and $\max_{g\in G} g > 0$, \emph{i.e.}, when a deviation weakly benefits every coalition member and strictly benefits at least one.

\paragraph{Distinction between strictly strong and strong CCE.} A strictly strong CCE rules out any coalition deviation that makes all members weakly better off and at least one member strictly better off. A strong CCE rules out any coalition deviation that makes every member strictly better off.

\paragraph{Distinction between coalition-proof and strong CCE.} The primary distinction between strong CCE and coalition-proof CCE lies in the set of admissible deviations. For strong CCE, any deviation within $\Delta^{\cA_S}$ is considered feasible.

In contrast, coalition-proof CCE restricts the set of feasible deviations to those that are self-enforcing. Specifically, a deviation is admissible only if it constitutes a coalition-proof equilibrium in the reduced game restricted to players in $S$, holding the strategies of players in $[N]\setminus S$ fixed at $\pi_{-S}$. Further details are provided in \citet{bernheim1987coalition-proof}. Consequently, since the set of deviations in coalition-proof CCE is a subset of that in strong CCE, any strong CCE is inherently a coalition-proof CCE. Relations between these strong equilibria are summarized in \Cref{fig:strong-equilibrium-relation}.

In \Cref{lemma:non-existence-coalition-proof}, we extend the non-existence result for (non-correlated) coalition equilibria in \citet{bernheim1987coalition-proof} to establish the non-existence of coalition-proof CCE, a class that includes all coalition equilibria.

\begin{figure}
    \centering
\begin{tikzpicture}[line join=round, line cap=round]
  \def\sx{0.85}      %
  \def\sy{0.85}      %
  \newlength{\RxBase}\setlength{\RxBase}{7.5cm}
  \newlength{\RyBase}\setlength{\RyBase}{4.5cm}
  \newlength{\stepxBase}\setlength{\stepxBase}{0.9cm}
  \newlength{\stepyBase}\setlength{\stepyBase}{0.7cm}

  \pgfmathsetlengthmacro{\Rx}{\sx*\RxBase}
  \pgfmathsetlengthmacro{\Ry}{\sy*\RyBase}
  \pgfmathsetlengthmacro{\stepx}{\sx*\stepxBase}
  \pgfmathsetlengthmacro{\stepy}{\sy*\stepyBase}

  \coordinate (C0) at ( 0.0, 0.00); %
  \coordinate (C1) at ( 0.0, {0.60 * \sy}); %
  \coordinate (C2) at ( 0.0, {1.20 * \sy}); %
  \coordinate (C3) at ( 0.0, {1.80 * \sy}); %

  \newcommand{\OvalAt}[4]{%
    \draw[fill=#3, draw=#4!70!black, line width=0.9pt]
      #1 ellipse ({\Rx-#2*\stepx} and {\Ry-#2*\stepy});
  }

  \OvalAt{(C0)}{0}{cOuter}{cOuter} %
  \OvalAt{(C1)}{1}{cMidA}{cMidA}   %
  \OvalAt{(C2)}{2}{cMidB}{cMidB}   %
  \OvalAt{(C3)}{3}{cInner}{cInner} %

  \node[font=\large\bfseries, text=white, align=center] at (C3) {$\text{MASE}^{\leq 0}$\\(Sum-Strong CCE)};
  \node[font=\large, text=white] at ($(C0) + ( 0, {-\Ry+1*\stepy})$) {Coalition-Proof CCE};
  \node[font=\large, text=white] at ($(C1) + ( 0, {-\Ry+2*\stepy})$) {Strong CCE};
  \node[font=\large, text=white] at ($(C2) + ( 0, {-\Ry+3*\stepy})$) {Strictly Strong CCE};
\end{tikzpicture}
    \caption{The relationship between different strong equilibrium notions.}
    \label{fig:strong-equilibrium-relation}
\end{figure}

\begin{lemma}[Non-existence of Coalition-Proof CCE] 
\label{lemma:non-existence-coalition-proof}
Coalition-proof CCE may not exist, even in a three-player normal-form game.
\end{lemma}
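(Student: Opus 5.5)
The plan is to exhibit an explicit three-player game, adapted from the classical non-existence example of \citet{bernheim1987coalition-proof}, and to verify that no $\pi\in\Delta^{\cA}$ satisfies \Cref{eq:generalized-condition} with $f(G)=\min_{g\in G} g$ and self-enforcing deviation sets. Here $\cS$ is the family of all non-empty subsets of $\{1,2,3\}$. I would unfold the recursive definition level by level:
\begin{enumerate}[nosep]
    \item Singleton coalitions: the self-enforcing deviations are all of $\Delta^{\cA_i}$, so $\pi$ must be an ordinary CCE.
    \item A two-player coalition $S$: a deviation $\hat\pi_S$ is self-enforcing iff it is a CCE of the reduced two-player game $\ba_S\mapsto \EE_{\ba_{-S}\sim\pi_{-S}}[\cU_i(\ba_S,\ba_{-S})]$, $i\in S$.
    \item The grand coalition: a deviation is self-enforcing iff it is itself immune to self-enforcing deviations of all proper subcoalitions.
\end{enumerate}
Call a joint strategy \emph{internally stable} if it survives levels 1 and 2. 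Then a coalition-proof CCE is an internally stable $\pi$ that is not strictly Pareto dominated, for all three players, by another internally stable strategy.

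\textbf{Key structural observation.} This determines the shape of the construction. If the set $\mathcal{I}$ of internally stable strategies were nonempty and compact, then a maximizer of total utility over $\mathcal{I}$ would not be strictly dominated inside $\mathcal{I}$. Hence a coalition-proof CCE would exist. Non-existence therefore requires $\mathcal{I}$ to be empty, or at least to have no undominated point, which can only happen if $\mathcal{I}$ fails to be closed. The open conditions ``every member strictly gains'' in level 2 make non-closedness possible. I would aim for the cleanest case, $\mathcal{I}=\emptyset$: design payoffs so that every CCE $\pi$ of the three-player game admits some pair $S$ and a CCE $\hat\pi_S$ of the reduced game against $\pi_{-S}$ that strictly improves both members of $S$.

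\textbf{Construction.} Concretely, I would take two actions per player and let player~3 pick one of two $2\times 2$ matrices for players~1 and~2. The payoffs would be arranged in the spirit of Bernheim--Peleg--Whinston. Each matrix should have a pure equilibrium that is good for the pair $\{1,2\}$, while player~3's choice shifts which pair among $\{1,3\},\{2,3\}$ can profitably coordinate. The result should be a cycle: whichever outcome the correlator concentrates on, some pair has a jointly improving move that is itself stable against unilateral deviations in the induced two-player game. With payoffs in $[0,1]$, the steps are:
\begin{enumerate}[nosep]
    \item Write the CCE constraints for $\pi$ as linear inequalities over the eight joint actions.
    \item For each pair $S$, identify a candidate pure deviation $\hat\ba_S$ and check that it is a Nash equilibrium, hence a CCE, of the reduced game for all relevant marginals $\pi_{-S}$.
    \item Show that for every CCE $\pi$ at least one pair's candidate strictly improves both members.
\end{enumerate}

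\textbf{Main obstacle.} Step 3 is where I expect the difficulty, because correlation enlarges the set of CCEs well beyond the mixed Nash equilibria handled in the original example. I would handle it by case analysis on the marginal $\pi_3$ (which matrix player~3 plays with substantial probability), combined with averaging the CCE inequalities. The goal is to show that the expected payoff of some pair is bounded strictly below the payoff its pure self-enforcing deviation guarantees. If a uniform strict gap cannot be obtained, I would tune the payoffs, for instance by making the deviation targets strict Nash equilibria with a margin. If emptiness of $\mathcal{I}$ still proves too strong to arrange, I would fall back on exhibiting non-closedness of $\mathcal{I}$ together with an explicit dominating sequence.
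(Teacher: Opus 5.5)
Your reduction matches the paper's. Since every coalition-proof CCE is in particular a CCE, it suffices to exhibit a game in which every CCE $\pi$ admits a pair $S$ with a self-enforcing deviation that strictly benefits both members (your $\mathcal{I}=\emptyset$). But the proposal stops where the proof has to start. \textbf{No game is specified.} The step you call the main obstacle, that \emph{every} CCE has such a pair, is the entire content of the lemma. It is left open, and the fallbacks you list (retuning payoffs, switching to non-closedness of $\mathcal{I}$) show the argument has not been found. Your design also makes this step hard. If player~3's matrix choice governs which pair can profitably coordinate, each pair's deviation value depends on the outsider's correlated play, and you are pushed into case analysis over the whole CCE polytope. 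In step~2, the pair's deviation will typically not be an equilibrium of the reduced game for arbitrary outsider play. What makes it self-enforcing is the CCE constraint on $\pi$ itself, as explained below.

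The missing idea, which the paper uses, is to let every pair secure a fixed payoff vector \emph{independently of the outsider} while capping total payoff. In the paper's five-coin game, each player votes for a ranking; any two matching votes impose it, paying $3,2,0$; otherwise everyone gets $0$. Write $u_i\coloneqq\EE_{\ba\sim\pi}[\cU_i(\ba)]$ and assume without loss of generality $u_1\ge u_2\ge u_3$. Since $u_1+u_2+u_3\le 5$, for \emph{every} $\pi$ you get $u_2\le 5/2<3$ and $u_3\le 5/3<2$. So $\{2,3\}$ jointly voting $P_2\succ P_3\succ P_1$ strictly benefits both, with no analysis of the CCE polytope. The CCE property enters only for self-enforcement. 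Any unilateral deviation of player~3 inside the reduced game earns at most what the same fixed ranking earns against $\pi$ in the original game, hence at most $u_3<2$. Player~2 already receives the maximal payoff $3$, which also makes the deviation Pareto-undominated in the reduced game. Your level-2 description as ``any CCE of the reduced game'' drops this undominatedness requirement of the coalition-proof definition, though it can be restored by compactness. The two-action format is not itself the obstacle. Take a cyclic partner-choice game in which a mutually chosen pair $\{i,i+1\}$ (indices mod~$3$) pays $0.6$ to $i$ and $0.4$ to $i+1$, and all other profiles pay $0$. It yields the same conclusion by an analogous budget argument (total payoff at most $1$) plus the same CCE comparison.
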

The proof is postponed to \Cref{section:relation-proof}.

The non-existence possibility in \Cref{lemma:non-existence-coalition-proof} automatically implies the non-existence of $\text{MASE}^{\leq 0}$, strictly strong CCE, and strong CCE. This motivates our focus on minimizing the deviation gap rather than seeking a joint strategy that strictly achieves a non-positive deviation gap.

\section{Hardness of Computing MASE}
\label{section:hardness}

Recall that we call an algorithm \emph{efficient} if it runs in time polynomial in $N$. In this section, we first establish the computational hardness of computing $\epsilon$-MASE.

\begin{theorem}
\label{theorem:complexity-strong-equilibrium}
Computing $\epsilon$-MASE is {\sf NP}-hard, even when $\cS$ only contains singletons (coalitions of size one) and $1/\epsilon$ is polynomial in the number of players.
\end{theorem}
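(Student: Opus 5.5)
We reduce from 3-SAT using a simple succinct (graphical) game. The idea is to make the singleton-coalition exploitability equal to the largest probability, under the correlated strategy, that some clause is violated. Given a 3-CNF formula with variables $x_1,\dots,x_n$ and clauses $C_1,\dots,C_m$, we build a game with $N=n+m$ players and $\cS=\cbr{\cbr{1},\dots,\cbr{N}}$.

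\textbf{Construction.} Each \emph{variable player} $i\in[n]$ has actions $\cbr{\text{T},\text{F}}$ and constant utility $0$. Each \emph{clause player} $n+j$ has actions $\cbr{\text{in},\text{out}}$ and utility as follows:
\begin{itemize}[nosep]
    \item if it plays $\text{out}$, it receives $1$;
    \item if it plays $\text{in}$, it receives $\ind\rbr{C_j \text{ is satisfied by the variable players' actions}}$.
\end{itemize}
Each clause utility depends on only four players, so the game is succinct (a graphical game of polynomial size).

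\textbf{Computing the value.} Fix a correlated strategy $\pi$. Variable players have zero deviation gain. For clause player $n+j$, the best deviation is $\text{out}$, worth $1$, because its value under $\text{in}$ is at most $1$. Hence its gain is $1-\EE_\pi[\cU_{n+j}]\ge 0$, and this is at least $1-p_j$, where $p_j$ is the probability under $\pi$ that $C_j$ is satisfied. Equality holds exactly when the clause player plays $\text{in}$ almost surely, and moving all clause players' mass onto $\text{in}$ weakly lowers every gain. The coalition exploitability of the optimal $\pi$ is therefore
\[
\min_{\mu\in\Delta^{\cbr{\text{T},\text{F}}^n}}\ \max_{j\in[m]}\rbr{1-\Pr_{\mu}\sbr{C_j \text{ satisfied}}}.
\]
We then split into two cases.
\begin{itemize}[nosep]
    \item If the formula is satisfiable, the point mass on a satisfying assignment shows the value is $0$.
    \item If it is unsatisfiable, every assignment violates at least one clause, so for every $\mu$ we have $\sum_j (1-p_j)\ge 1$ and thus $\max_j(1-p_j)\ge 1/m$.
\end{itemize}
There is a gap of $1/m$ between the two cases. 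Taking $\epsilon=1/(2m+1)$, so that $1/\epsilon$ is polynomial in $N$, any $\epsilon$-MASE has exploitability at most $\epsilon<1/m$ if the formula is satisfiable, and at least $1/m$ otherwise.

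\textbf{Main obstacle: extracting the answer.} The main difficulty is turning an $\epsilon$-MASE into a decision, because a distribution over $\cA$ is exponentially large. I would handle this as follows.
\begin{itemize}[nosep]
    \item Any efficient algorithm must output $\pi$ in a polynomial-size form, either as an explicit list of polynomially many support points with weights or as a sampler.
    \item In the satisfiable case, the exploitability bound gives $1-p_j\le\epsilon$ for every $j$. A union bound then shows that an assignment drawn from $\pi$ violates some clause with probability at most $m\epsilon<1$, so some assignment in the support satisfies every clause.
    \item With an explicit polynomial support, we check each support point against the formula in polynomial time. We answer ``satisfiable'' iff some point satisfies all clauses.
    \item In the unsatisfiable case no support point satisfies all clauses, so the answer is correct in both cases.
\end{itemize}
If instead the output is given as a sampler, I would sharpen $\epsilon$ to $1/(2m)^2$. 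The failure probability is then at most $m\epsilon\le 1/(4m)$, so repeated sampling finds a satisfying assignment with high probability, giving a randomized reduction.

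Hence an efficient $\epsilon$-MASE algorithm, even with only singleton coalitions and $1/\epsilon=\mathrm{poly}(N)$, would decide 3-SAT, which establishes {\sf NP}-hardness.
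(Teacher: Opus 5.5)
Your reduction fails at the step ``this is at least $1-p_j$.'' Under $\pi$, the clause player's expected utility is $\Pr_\pi[a_{n+j}=\text{out}]+\Pr_\pi[a_{n+j}=\text{in}\wedge C_j\text{ satisfied}]$. Its gain is therefore exactly $\Pr_\pi[a_{n+j}=\text{in}\wedge C_j\text{ violated}]$, which is \emph{at most} $1-p_j$, not at least. Because out pays $1$ unconditionally, it weakly dominates in. Moving mass onto in weakly \emph{raises} the gain, so your monotonicity claim is reversed. Concretely, any profile in which every clause player plays out gives every player gain $0$. The optimal value is then $0$ for every formula, so there is no gap, and the extraction step never gets started.

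This is a structural obstruction, not a detail of your payoffs. When $\cS$ contains only singletons, the objective is exactly the CCE deviation gap, and a CCE always exists, so the MASE value of every such game is $\le 0$. A bound like ``value $\ge 1/m$ when unsatisfiable'' is therefore impossible in any construction. Moreover, putting constant-utility players (your variable players) into $\cS$ pins the value at exactly $0$. Hardness must come from how far \emph{below} zero the gap can be driven. That requires a gadget that lets the correlator make every deviation in $\cS$ strictly unprofitable, by an amount that encodes the hard quantity.

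This is what the paper does. Its agent-players receive $u_i$ when their binary actions all agree and $-u_i$ otherwise. Randomizing the common action makes any fixed deviation forfeit the allocation utility, so the optimal gap becomes minus an egalitarian welfare. NP-hardness then follows from the inapproximability of max-min allocation, with $\cS$ restricted to those agents' singletons.

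Your 3-SAT idea can be repaired in the same spirit:
\begin{enumerate}
\item Add a two-action coin player $c$.
\item Give clause player $n+j$ utility $\ind(C_j\text{ satisfied})\cdot\ind(a_{n+j}=a_c)$.
\item Take $\cS=\{\{n+1\},\dots,\{n+m\}\}$.
\end{enumerate}
Each gain is then at least $-p_j/2$. Equality holds when the coin is uniform and independent of the assignment and every clause player copies it. The value is therefore $-\tfrac12\max_\mu\min_j p_j$: it equals $-\tfrac12$ if the formula is satisfiable and is at least $-\tfrac12+\tfrac{1}{2m}$ otherwise. Your union-bound extraction (support checking or sampling) then goes through unchanged with $\epsilon=1/(4m)$.
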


The proof is deferred to \Cref{section:proof-mase-hardness}. Importantly, \Cref{theorem:complexity-strong-equilibrium} highlights a fundamental distinction from the case of CCE, which can be computed efficiently \citep{papadimitriou2008computing-against-hope}. The reason is that for CCE it suffices to find a correlated strategy $\pi \in \Delta^{\cA}$ such that the deviation gap, $\max_{i\in [N]}\max_{\hat a_i\in\cA_i} \EE_{\ba\sim \pi}\sbr{\cU_i\rbr{\hat a_i, \ba_{-i}} - \cU_i\rbr{\ba} }$, is less or equal to zero, whereas here we must find a strategy that minimizes the gap. Together with the linear programming characterization in \Cref{section:lp}, this implies that computing $\epsilon$-MASE is actually {\sf NP}-complete. In fact, \citet{anagnostides2025polynomial-minty} recently showed that even minimizing the \emph{average} deviation gap of CCE across all players (instead of the maximum gap considered here) is also {\sf NP}-complete.

\subsection{Lower Bound on the Exponential Dependence on Treewidth}

Next, we present a more refined hardness result: a lower bound for computing MASE. To do so, we first formalize the notion of dependencies among players’ utilities.

For each player $i \in [N]$, define the relevant set $\cN(i)\subseteq [N]$ consisting of all players $j \in [N]$ (including $j=i$) such that the action of $j$ can affect the utility of $i$. Formally, $j\in [N]$ is in $\cN(i)$ if and only if there exist $\ba_{-j}\in \cA_{-j}$ and $a_j,a_j'\in\cA_j$ such that $\cU_i(a_j, \ba_{-j})\not= \cU_i(a_j', \ba_{-j})$. This leads to the following graph representation.

\begin{definition}[Utility Dependency Graph]
\begin{enumerate}[left=0mm,nosep,leftmargin=*, align=left, labelsep=0pt, labelwidth=0pt,label={},ref={Utility Dependency Graph}] \item The utility dependence graph $\cG=(\cV,\cE)$ is an undirected graph with vertex set $\cV=[N]$ representing the players, and edge set $\cE=\bigcup_{k\in [N]} \cbr{(i,j)\given i,j \in \cN(k),i\not=j}$. \label{def:utility-dependence-graph} 
\end{enumerate}
\end{definition}
Since $\cU_i$ depends only on the actions of players in $\cN(i)$, we may equivalently write $\cU_i(\ba_C)=\cU_i(\ba_C,\ba_{-C}')$ for arbitrary $\ba_{-C}'\in\cA_{-C}$, where $C\supseteq \cN(i)$. It is worth noting that this definition differs from the graph of a graphical game \citep{kakade2003correlated-graphical-game,DBLP:conf/uai/KearnsLS01-graphical-game}. Here, players $i$ and $j$ are connected if both influence the utility of some other player $k$, even if $i$ and $j$ do not directly affect each other. Whereas in graphical games, two players $i$ and $j$ are connected if and only if at least one can influence the other’s utility.

With this graph structure in place, we can connect the hardness of computing MASE to the treewidth of $\cG$. Intuitively, treewidth measures how close a graph is to being a tree: the treewidth of $\cG$ is one when $\cG$ is a tree, and it is $N-1$ when $\cG$ is a complete graph. Throughout this section, let $\cO^*$ denote asymptotic complexity with factors polynomial in $N$ suppressed.

\begin{theorem}[Treewidth]
\label{theorem:tree-width-important}
    Suppose a tree decomposition of the \Cref{def:utility-dependence-graph} is given. Under the Strong Exponential Time Hypothesis (SETH) \citep{impagliazzo2001complexity-SETH},\footnote{SETH assumes that SAT cannot be solved in $\cO^*((2-\zeta)^n)$ for any $\zeta>0$, where $n$ is the number of variables in the SAT instance.} \Cref{eq:MASE} cannot be computed in $\cO^*((A-\zeta)^{\tw{\cG}})$ for any $\zeta>0$. Moreover, under the additional assumption that {\sf BPP=P},\footnote{This assumption implies that any problem with a polynomial-time randomized algorithm also has a polynomial-time deterministic algorithm.} $\frac{1}{9N^2}$-approximate MASE cannot be computed in $\cO^*((A-\zeta)^{\tw{\cG}})$ for any $\zeta>0$.
\end{theorem}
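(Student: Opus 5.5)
We reduce from a CSP/coloring-type problem whose SETH-based lower bound is parameterized by treewidth, following the Lokshtanov--Marx--Saurabh lower bounds. Under SETH, for every $q\geq 3$ and $\zeta>0$, deciding $q$-colorability (more generally, a binary CSP over a domain of size $q$) cannot be done in $\cO^*((q-\zeta)^{\tw{H}})$ time, even when a tree decomposition of the input graph $H$ is given. The plan is to encode such an instance as a game and to make MASE decide it. Variables become players with $\cA_i=[q]$, so that $A=q$. Each constraint edge $(u,v)$ of $H$ is represented by a utility that depends only on $a_u$ and $a_v$. The utility dependence graph $\cG$ must have treewidth $\tw{H}+\cO(1)$, and any given decomposition of $H$ must transfer to one of $\cG$.

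\textbf{Construction.} We use coalitions of constant size. For each constraint edge, add a player $k_{uv}$ whose utility reflects whether $(a_u,a_v)$ violates the constraint, so that $\cN(k_{uv})\subseteq\{u,v,k_{uv}\}$. Then $\cG$ is $H$ plus triangle attachments, which raises treewidth by at most a constant and keeps the given decomposition usable. The coalitions are the singletons, and we may also need small coalitions $\{u,v\}$. Utilities are chosen so that the coalition exploitability of a correlated $\pi$ is at most a threshold $\tau$ if and only if $\pi$ is supported on satisfying assignments. For instance, give $k_{uv}$ a unilateral deviation whose gain equals the violation probability, offset by penalties that force pure behaviour. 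Satisfiability should then give value $\tau$, while unsatisfiability forces every $\pi$ to put mass on violating profiles.

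\textbf{Main obstacle.} Correlation can spread mass across many profiles, so the exploitability must depend on $\pi$ in a way that detects a violation at a single edge. Using the max over coalitions, each edge's violation probability is measured separately. However, $\pi$ could put small violation mass on every edge without ever being a pure satisfying assignment. I would first try to exploit linearity: a correlated $\pi$ with every edge's violation probability equal to $0$ has support only on satisfying profiles, so the exact MASE value separates the two cases. This gives the first claim, because an exact algorithm running in $\cO^*((A-\zeta)^{\tw{\cG}})$ would decide $q$-CSP too fast.

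\textbf{Approximation claim.} Here we need a gap: unsatisfiable instances must have exploitability at least $\tau+\frac{1}{9N^2}$. A correlated $\pi$ may make each edge violated with probability only about $1/|E|$ (an averaging argument over a random profile from $\pi$ gives at least one violation per profile). So we rescale each edge's gain by $|E|$ and normalize utilities back into $[0,1]$, losing polynomial factors in $N$, which yields a $\Theta(1/N^2)$ gap. The {\sf BPP=P} assumption enters because the clean gap likely requires isolating a unique solution (Valiant--Vazirani-style) or randomly weighting constraints so that the minimizer is essentially pure. That randomized reduction must then be derandomized, and I would track the constant $9$ there. Verifying the gap robustly against correlated strategies is the step I expect to be hardest.
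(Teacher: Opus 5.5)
\textbf{There is a genuine gap.} The proposal never fixes a coalition structure under which coalition exploitability actually measures edge violations, and each mechanism you suggest fails.
\begin{itemize}
\item A unilateral deviation by $k_{uv}$ cannot detect violations under correlation. The correlator may set $a_{k_{uv}}$ as a pointwise best response to $(a_u,a_v)$, assuming $a_{k_{uv}}$ enters no one else's utility. Then $\EE_{\ba\sim\pi}[\cU_{k_{uv}}(\hat a,\ba_{-k_{uv}})-\cU_{k_{uv}}(\ba)]\le 0$ for every $\hat a$. If $k_{uv}$ has a single action, the gain is identically $0$.
\item With only singleton coalitions, every CCE is a $\text{MASE}^{\leq 0}$. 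So the value is $\le 0$ on every instance, and a threshold at $0$ separates nothing.
\item A coalition $\{u,v\}$ does not contain $k_{uv}$, so $k_{uv}$'s utility never enters its average. Moving the violation indicator into $\cU_u$ instead would put all of $u$'s neighbours into $\cN(u)$. That creates cliques in $\cG$ and destroys the treewidth bound.
\end{itemize}

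The missing idea is to place the players who can repair an edge and the player who scores it in the same coalition. This is what the paper does. Take $\cU_u\equiv 0$ for vertex players. Give each edge player a single action and $\cU_{k_{uv}}(\ba)=\ind(a_u\neq a_v)$. Set $\cS=\{\{u,v,k_{uv}\}\colon (u,v)\in E\}$. The triangle coalition's best deviation colors $u,v$ differently and attains average $\frac13$. Hence the coalition exploitability of $\pi$ is exactly $\frac13\max_{(u,v)\in E}\Pr_{\ba\sim\pi}(a_u=a_v)$. This is $0$ if and only if $\pi$ is supported on proper colorings. Moreover $A=q$ and $\tw{\cG}\le\max(\tw{G},2)$, and no penalties forcing pure behaviour are needed.

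\textbf{Your account of the approximate part is also off.} Multiplying each gain by $|E|$ and renormalizing to $[0,1]$ changes nothing. Also, {\sf BPP=P} is not used for Valiant--Vazirani isolation or random weighting; isolation would not obviously preserve the coloring and treewidth structure anyway. With the construction above, your averaging argument gives exploitability at least $\frac{1}{3|E|}\Pr_{\ba\sim\pi}(\ba_{[|V|]}\text{ improper})$. If $G$ is $q$-colorable, the optimum is $0$. So any $\frac{1}{9N^2}$-approximate MASE satisfies $\Pr_\pi(\text{improper})\le\frac{3|E|}{9N^2}\le\frac13$, using $|E|\le N^2$.

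The only randomness is sampling $\ba\sim\pi$, which is needed because $\pi$ may have exponentially large support, and then checking whether $\ba_{[|V|]}$ is a proper coloring. This one-sided-error algorithm never errs on non-colorable instances and succeeds with probability at least $\frac23$ on colorable ones. {\sf P=RP}, which follows from {\sf BPP=P}, derandomizes it. The constant $9$ is simply the coalition size $3$ times the $3$ coming from the $\frac13$ failure probability.
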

The proof is deferred to \Cref{section:tree-width-proof}. For approximate MASEs we assume {\sf BPP=P}, which is standard in the literature \citep{arora2009computational-complexity-BPP-P}, because the reduction involves sampling joint actions from the approximate MASE. Since enumerating all $\ba \in \cA$ is computationally infeasible, we rely on randomized sampling. This yields only a randomized algorithm for the original {\sf NP}-hard problem, and the assumption {\sf BPP=P} ensures that such a randomized algorithm can be derandomized into a deterministic one, completing the reduction. Finally, we note that the argument can be extended to the maximum-within-coalition variant, showing that its computation also requires exponential dependence on the treewidth of the \Cref{def:utility-dependence-graph}.

\Cref{theorem:tree-width-important} shows that the computational complexity of solving MASE is inherently tied to the treewidth of the \Cref{def:utility-dependence-graph}. Intuitively, when the treewidth is large, each player’s utility depends on many others, making even the evaluation of coalition deviations computationally demanding (enumerating over all $\hat\ba_S\in\cA_S$). In contrast, when the treewidth is small, such as zero (each player’s utility depends only on their own action), computing MASE becomes trivial, since each player’s utility can be maximized independently. In polymatrix games \citep{eaves1973polymatrix}, the treewidth of the \Cref{def:utility-dependence-graph} can be bounded by that of its corresponding graph. Further details are provided in \Cref{section:polymatrix}.

Furthermore, we show that the minimum-value analogue of strong CCE is {\sf NP}-hard to approximate, even in small games whose joint action space is only polynomial in the number of players and the treewidth of the \Cref{def:utility-dependence-graph} is two.
\begin{theorem}
\label{theorem:strong-CCE-hardness}
Approximating
\begin{align*}
    \min_{\pi\in \Delta^{\cA}}\max_{S\in \cS} \max_{\hat\pi_S\in\Delta^{\cA_S}} \min_{i\in S}
    \EE_{\ba\sim \pi, \hat\ba_S\sim \hat\pi_S}
    \sbr{\cU_i\rbr{\hat\ba_S, \ba_{-S}} - \cU_i\rbr{\ba}}
    \numberthis[Minimum Strong CCE]{eq:strong-CCE}
\end{align*}
within an additive error of $\frac{1}{N^2}$ is {\sf NP}-hard, even for a small game in which the joint action set size $|\cA|$ is polynomial in the number of players $N$ and the treewidth of the \Cref{def:utility-dependence-graph} is two.
\end{theorem}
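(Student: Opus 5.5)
The plan is a polynomial-time reduction from an {\sf NP}-hard satisfiability-type problem, where the combinatorial structure lives entirely in the coalition family $\cS$ and in \emph{dummy} players. The feature of \Cref{eq:strong-CCE} to exploit is the following. For a fixed deviation $\hat\pi_S$, the quantity $\min_{i\in S}(\cdot)$ is a minimum of functions linear in $\pi$, so it is concave in $\pi$. The outer $\max_S\max_{\hat\pi_S}$ is therefore a maximum of concave functions, which is not convex. Equivalently, the condition ``no coalition member gains'' is a \emph{disjunction} over members, and disjunctions are what encode clauses. (For MASE the average made the objective convex and LP-solvable, cf.\ \Cref{section:lp}; the $\min$ breaks exactly this.)

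To keep $|\cA|$ polynomial in $N$ and $\tw{\cG}\le 2$, I would use three \emph{active} players $1,2,3$, each with an action set of size polynomial in the instance, together with $\mathrm{poly}(N)$ dummy players. Each dummy has a single action, and its utility depends only on $(a_1,a_2,a_3)$. A dummy never influences anyone, so it lies in no $\cN(k)$ other than its own and contributes no edges. Hence $\cG$ is at most a triangle on $\{1,2,3\}$ plus isolated vertices, giving treewidth $\le 2$, and $|\cA|=\prod_{i\le 3}|\cA_i|=\mathrm{poly}(N)$.

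For a coalition $S=\{i,d\}$ with $i$ active and $d$ a dummy, the dummy's ``gain'' is $\EE[\cU_d(\hat a_i,\ba_{-i})-\cU_d(\ba)]$. This is determined by the active player's deviation, so a coalition deviation is profitable only if it helps $i$ \emph{and} does not hurt $d$. I would encode a 3-SAT instance (or a gap version, such as gap-MAX-3SAT or gap independent set, to get the additive $1/N^2$ directly) as follows:
\begin{itemize}[nosep]
\item Player $1$'s action is a pair (variable, truth value). Players $2,3$ act as verifiers; via matching-pennies-style payoffs with player $1$, they force the marginal of $\pi$ on variables to be near uniform and the truth value chosen for each variable to be consistent, i.e., essentially deterministic. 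These requirements are enforced through singleton coalitions, where the $\min$ is just the player's own gain.
\item Each clause gets a dummy whose utility rewards the event that player $1$'s literal satisfies that clause. The coalition pairing player $1$ with this dummy can profit (positive $\min$) exactly when the clause is violated under the assignment read off from $\pi$.
\end{itemize}

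The proof then has two directions. \emph{Completeness}: for a satisfying assignment, take $\pi$ to be the natural product/uniform strategy built from it, and check that every coalition value is $\le 0$. Singletons are handled by the verifier payoffs; clause coalitions are handled because each clause dummy is not hurt by any deviation that player $1$ would want. \emph{Soundness}: if no assignment satisfies more than a $(1-\delta)$ fraction of clauses, then every $\pi$ has some coalition with value $\ge 1/N^2$ (after scaling utilities into $[0,1]$). Choosing the number of clauses and variables relative to $N$ makes the gap exceed the additive error $\frac1{N^2}$, which yields {\sf NP}-hardness of approximation.

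The main obstacle is soundness against \emph{correlated} $\pi$. The correlator may mix over inconsistent truth values, or correlate player $1$'s literal choice with players $2,3$, so that each individual clause coalition sees small gain even though no single assignment works. My first attempt is a rounding argument. First, use the singleton verifier constraints to show that any $\pi$ with small value has, for each variable $x$, a dominant truth value (mass $\ge 1-O(N\epsilon)$ conditional on $x$). Then show that the induced deterministic assignment loses at most $O(N\epsilon)$ on each clause coalition's $\min$-gain. If the per-variable consistency penalty is too weak, I would fall back to reducing from a problem with a built-in gap, such as gap-independent-set via a Motzkin–Straus-type correspondence, where fractional solutions are already accounted for.
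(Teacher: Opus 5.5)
\textbf{There is a genuine gap.} Neither of your gadgets uses the non-convexity of the $\min$. The soundness step you flag as the main obstacle is also false as stated. Write $g_i(\pi,\hat\pi_S)$ for member $i$'s expected gain.

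\emph{The clause coalitions carry no clause information.} The dummy $d_C$ has a single action, so $\{1,d_C\}$ has the same deviation space $\Delta^{\cA_1}$ as $\{1\}$. Since $\min(g_1,g_{d_C})\le g_1$, its value never exceeds that of $\{1\}$, so if $\{1\}\in\cS$ these coalitions are redundant. Now suppose $\{1\}\notin\cS$. Whenever player~1 deviates onto the literals of $C$, the dummy's gain is $1-\Pr_{\pi}[a_1 \text{ satisfies } C]$. This is close to $1$ for every $\pi$ with near-uniform variable marginal, whether or not the read-off assignment satisfies $C$. So the $\min$ is just player~1's own gain, and ``value $\le 0$'' collapses to linear constraints $g_1(\pi,\hat a_1)\le 0$. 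That is a conjunction essentially independent of $C$, not a clause.

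\emph{Consistency cannot come from singletons.} Each singleton value $\max_{\hat a_i} g_i(\pi,\hat a_i)$ is convex in $\pi$, so the set of $\pi$ passing all singleton checks is convex. Your verifier payoffs do not see the formula. So if they accept the strategies built from two assignments $\sigma\neq\sigma'$, they also accept the even mixture, in which a variable's truth value is split evenly. Hence the first step of your rounding (a dominant truth value per variable from singleton constraints) cannot hold. Consistency is non-convex and can only come from the $\min$ over coalition members, which your design never really exploits.

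\textbf{The missing idea is to make the $\min$ act as a second, \emph{independent} minimizing agent.} The paper's construction is as follows.
\begin{itemize}
\item There is a single coalition $S$: $3|V|$ single-action players indexed by (vertex, color), plus one ``predictor'' player. An outside player's action is a (vertex, color) proposal.
\item Write $\min_{i\in S}$ as $\min_{\alpha\in\Delta^{S}}$ and exchange it with $\max_{\hat\pi_S}$ by the minimax theorem. This gives $\min_{\pi_{-S}}\min_{\alpha}\max_{\hat\pi_S}$, where $\alpha$ is a second (vertex, color) proposal chosen independently of the correlator's marginal $\pi_{-S}$.
\item The correlator's remaining freedom, its correlation with the coalition's own actions, only affects the baseline. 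Correlating the predictor with the outside player's vertex pins the baseline at its maximum $1$.
\item What remains is the Borgs et al.\ min--min--max game. Inconsistency between the two independent proposals is penalized, and the max-player guesses a vertex. Up to the constant baseline, the value is $1/|V|$ if $G$ is 3-colorable and at least $1/|V|+1/(3|V|^2)$ otherwise. This gap exceeds twice the allowed error $1/N^2$ for $N=3|V|+2$.
\end{itemize}
The product structure $\pi_{-S}\otimes\alpha$, which correlation cannot undo, is what defeats the correlated-$\pi$ obstacle you identified. Your Motzkin--Straus fallback points toward a quadratic structure of this kind, but it gives no source for the second independent copy.
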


The proof is deferred to \Cref{section:strong-CCE-hardness-proof}. \Cref{theorem:strong-CCE-hardness} shows that computing \Cref{eq:strong-CCE} is intractable unless ${\sf P} = {\sf NP}$. Accordingly, throughout the remainder of the paper, we focus on aggregation rules based on the (weighted) average and the maximum deviation gain.

\begin{remark}
    The hard instance in \Cref{theorem:complexity-strong-equilibrium} has a joint action set whose size is exponential in the number of players, whereas the instance in \Cref{theorem:strong-CCE-hardness} does not. This distinction is important: by the linear-programming formulation in \Cref{section:lp}, MASE can be computed in time polynomial in $|\cA|$ for any finite game. In contrast, \Cref{theorem:strong-CCE-hardness} rules out even an approximation algorithm whose running time is polynomial in $|\cA|$ for \Cref{eq:strong-CCE}, unless ${\sf P}={\sf NP}$. Therefore, this hardness cannot be attributed to the size of the joint action space. Rather, it stems from the intrinsic min-max-min structure of the minimum strong CCE objective.
\end{remark}

\section{Efficient Computation of MASE}
\label{section:computation-MASE}

Although an \Cref{eq:MASE} lives in an exponentially large space (of size $|\cA|$), it can still be computed efficiently. This is because the equilibrium always admits a compact representation.
\begin{theorem}[Efficient Representation]
\label{theorem:existence}
    For any $\epsilon\geq 0$, at least one of the $\epsilon$-MASE can be represented as a linear combination of $\sum_{S\in\cS} |S| \cdot A^{\tw{\cG}}+1$ pure strategies, where $\tw{\cG}$ is the treewidth of \Cref{def:utility-dependence-graph}.
\end{theorem}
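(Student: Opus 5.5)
Take a MASE $\pi^*$ (it exists by compactness, as noted after \Cref{eq:MASE}). The plan is to show that it can be replaced by a sparse distribution with the same coalition exploitability. The tool is Carathéodory's theorem applied to a suitable linear "sketch" of $\pi$. Any $\pi'$ that agrees with $\pi^*$ on the sketch is then itself a MASE, and hence an $\epsilon$-MASE for every $\epsilon\ge 0$.

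The first step is to identify which linear statistics of $\pi$ the objective depends on. For a fixed $S\in\cS$, $i\in S$ and $\hat\ba_S$, the term $\EE_{\ba\sim\pi}[\cU_i(\hat\ba_S,\ba_{-S})-\cU_i(\ba)]$ depends only on the marginal of $\pi$ on $\cN(i)$. Indeed, $\cU_i(\hat\ba_S,\ba_{-S})$ depends on $\ba_{\cN(i)\setminus S}$, and $\cU_i(\ba)$ depends on $\ba_{\cN(i)}$. Every pair in $\cN(i)$ is joined by an edge of $\cG$, so $\cN(i)$ is a clique. By the standard property of tree decompositions, every clique is contained in some bag, and bags have size at most $\tw{\cG}+1$.

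The second step packs these statistics into few coordinates. The whole vector of gains $\{g_{S,\hat\ba_S}(\pi)\}$ is determined by the collection of marginals $\pi_{\cN(i)}$ for $i\in\bigcup_{S}S$, counted once per pair $(S,i)$. I would choose, for each $(S,i)$, the map $\pi\mapsto(\EE_\pi[\cU_i(\hat\ba_S,\ba_{-S})-\cU_i(\ba)])_{\hat\ba_S}$. However, its dimension is $|\cA_S|$, which is too large. The better choice is to fix one player $j_i\in\cN(i)$ and write the gain as a linear function of the marginal on $\cN(i)$. Since marginals sum to one, the number of free coordinates is about $A^{|\cN(i)|}-1\le A^{\tw{\cG}+1}$. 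This is still off by a factor of $A$ from the target $|S|\cdot A^{\tw{\cG}}$.

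To hit the stated count exactly, I would instead reduce by one dimension per player. The gain for $i$ can be rewritten as $\sum_{\ba_{\cN(i)\setminus S}}\pi(\ba_{\cN(i)\setminus S})\,\cU_i(\hat\ba,\cdot)-\EE_\pi\cU_i$. This involves the marginal on $\cN(i)\setminus S$, a set of size at most $\tw{\cG}$ when $S\cap\cN(i)\ni i$ (note $i\in\cN(i)$ can be assumed WLOG), giving at most $A^{\tw{\cG}}$ numbers, plus the scalar $\EE_\pi\cU_i$. I expect this to be the main obstacle: matching the exact constant, and making sure the scalar $\EE_\pi \cU_i$ is absorbed, for example by folding it into the count or noting that it is one of the marginal combinations. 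Summing over $S$ and $i\in S$ gives $d\le\sum_{S}|S|A^{\tw{\cG}}$ linear functionals $L_1,\dots,L_d$ of $\pi$, and the coalition exploitability is a function of $(L_1(\pi),\dots,L_d(\pi))$ alone.

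The final step applies Carathéodory. The point $(L_k(\pi^*))_k$ lies in the convex hull of $\{(L_k(\delta_\ba))_k:\ba\in\cA\}\subset\RR^d$, because $\pi^*$ is a convex combination of the pure strategies $\delta_\ba$ and each $L_k$ is linear. Carathéodory then yields $d+1$ pure strategies $\ba^{(1)},\dots,\ba^{(d+1)}$ and weights $\lambda$ with $\pi'=\sum\lambda_m\delta_{\ba^{(m)}}$ satisfying $L(\pi')=L(\pi^*)$. Hence $\pi'$ has the same coalition exploitability as $\pi^*$ and is a MASE. It therefore uses $\sum_{S\in\cS}|S|\cdot A^{\tw{\cG}}+1$ pure strategies, as claimed.
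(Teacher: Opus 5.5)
Your proof is correct. It shares the paper's skeleton: a linear sketch of $\pi$ that determines the coalition exploitability, followed by Carath\'{e}odory on the images of pure strategies. Your choice of sketch is genuinely different, though, and it is what actually makes the stated count go through.

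The paper uses the gain vector itself, one coordinate per triple $(S,i,\hat\ba_{S\cap\cN(i)})$, so its dimension is $D=\sum_{S\in\cS}\sum_{i\in S}|\cA_{S\cap\cN(i)}|$. It then asserts $|\cN(i)|\le\tw{\cG}$. Since $\cN(i)$ is a clique, only $|\cN(i)|\le\tw{\cG}+1$ is guaranteed, and when $\cN(i)\subseteq S$ the factor $|\cA_{S\cap\cN(i)}|$ can be $A^{\tw{\cG}+1}$. For example, two players whose utilities both depend on both actions, with $\cS=\{\{1,2\}\}$, give $D=2A^2$ against the claimed $2A$. So that parametrization, counted as written, only yields exponent $\tw{\cG}+1$.

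Your dual parametrization uses the marginal of $\pi$ on the non-deviating neighbours $\cN(i)\setminus S$ plus the scalar $\EE_{\pi}\cU_i$. It exploits that $S\cap\cN(i)$ and $\cN(i)\setminus S$ partition a clique of size at most $\tw{\cG}+1$, so whenever $S\cap\cN(i)\neq\emptyset$ you get $|\cN(i)\setminus S|\le\tw{\cG}$. If $S\cap\cN(i)=\emptyset$, the gain of $i$ is identically zero and needs no coordinates. Hence you do not need the ``WLOG $i\in\cN(i)$'', which is not obviously harmless, since enlarging $\cN(i)$ can enlarge $\cG$.

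The loose end you flagged closes as you guessed. On $\Delta^{\cA}$ the entries of the marginal on $\cN(i)\setminus S$ sum to one, so keep only $|\cA_{\cN(i)\setminus S}|-1$ of them together with $\EE_{\pi}\cU_i$. That is at most $A^{\tw{\cG}}$ linear functionals per pair $(S,i)$, and the gain is an affine function of them on the simplex. Because the Carath\'{e}odory output $\pi'$ is again a probability distribution, the dropped entry is recovered identically for it. This gives $d\le\sum_{S\in\cS}|S|A^{\tw{\cG}}$ and hence $d+1$ pure strategies, as claimed.
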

The proof is deferred to \Cref{section:proof-existence}. Intuitively, \Cref{theorem:existence} shows that there must be an $\epsilon$-MASE that always has a sparse representation. Since each pure strategy can be encoded by the corresponding joint action rather than by a full vector in $\Delta^{\cA}$, this sparsity provides a structural explanation for why efficient computation is possible. We emphasize, however, that \Cref{theorem:existence} is not used as a subroutine in the algorithm developed below. Rather, it serves as intuition for the compactness that the algorithm exploits.

\subsection{Meta-Game between the Correlator and Deviator}

To compute an \Cref{eq:MASE}, we reformulate the problem as a \emph{meta-game} between two players: the \emph{correlator} and the \emph{deviator} \citep{hart1989existence-phi-zero-sum}. The correlator chooses the correlated strategy $\pi \in \Delta^{\cA}$, while the deviator selects deviations. The game is zero-sum: the correlator aims to minimize the coalition’s gain from deviation, and the deviator aims to maximize it. Formally:
\begin{align}
    \min_{\pi\in\Delta^{\cA}}~\max_{\mu\in\Delta^{\bigtimes_{S\in\cS} \cA_S}} ~ F(\pi,\mu), \label{eq:def-sum}
\end{align}
where
\begin{align}
    F(\pi,\mu)\coloneqq \sum_{S\in\cS} \sum_{\hat \ba_S\in\cA_S} \frac{\mu(S,\hat \ba_S) }{|S|}\sum_{i\in S} \EE_{\ba\sim \pi}\sbr{\cU_i\rbr{\hat \ba_S, \ba_{-S}} - \cU_i\rbr{\ba}}.
\end{align}
Here, we extend the deviator's decision space from a discrete to a continuous set. This relaxation does not strengthen the deviator, since the objective is linear in $\mu$, and the maximum is always attained at an extreme point. Therefore, \Cref{eq:def-sum} is equivalent to the original definition in \Cref{eq:MASE}.

A natural idea is to apply no-regret learning algorithms simultaneously for the correlator and deviator. However, directly updating the full distributions $\pi$ and $\mu$ is infeasible, because the underlying spaces are exponentially large.

Fortunately, \Cref{theorem:existence} implies that maintaining the full distributions is unnecessary: it suffices to keep track of a polynomial number of pure strategies, and use their convex combination as the approximate equilibrium. This motivates our use of \emph{Follow the Perturbed Leader} (FTPL) \citep{hazan2016introduction}, where each decision at a timestep is a pure strategy, which can be represented compactly.

Let $\pi^{(t)} \in \Delta^{\cA}$ and $\mu^{(t)} \in \Delta^{\bigtimes_{S \in \cS} \cA_S}$ denote the decision variables at timestep $t \geq 1$ for the correlator and the deviator, respectively. The interaction between these two players can be described by the update rule
\begin{equation}
\begin{split}
    &\pi^{(t+1)}\in \argmin_{\pi\in\Delta^{\cA}} \sum_{\tau=1}^t F\rbr{\pi,\mu^{(\tau)}} - \inner{\tilde\bn^{(t+1)}}{\pi} \\
    &\mu^{(t+1)}\in \argmax_{\mu^{(t)}\in \Delta^{\bigtimes_{S\in\cS} \cA_S}}~ \sum_{\tau=1}^t F\rbr{\pi^{(\tau)},\mu} + \inner{\tilde\bbm^{(t+1)}}{\mu},
\end{split}
\label{eq:update-rule}
\end{equation}
where $\tilde\bn^{(t+1)}$ and $\tilde\bbm^{(t+1)}$ are noise vectors sampled independently at each timestep from some distribution, which we will specify later. These noise terms play the role of regularizers in online mirror descent (OMD) \citep{hazan2016introduction}, ensuring stability in the updates by controlling $\EE\sbr{\nbr{\pi^{(t+1)} - \pi^{(t)}}}$ and $ \EE\sbr{\nbr{\mu^{(t+1)} - \mu^{(t)}}}$.

Since $F(\pi,\mu)$ is bilinear in $(\pi, \mu)$, both the minimization and maximization problems admit solutions at vertices of their respective decision spaces. In other words, the $\argmin$ for the correlator and the $\argmax$ for the deviator always contain at least one pure strategy.

In what follows, we will explain in detail how to update $\pi$ efficiently under this framework. The update of $\mu$ is deferred to \Cref{section:mu-update}.

\subsection{Efficient Update of $\pi$}

The key step in updating $\pi^{(t+1)}$ is to select a pure strategy, \emph{i.e.}, a joint action $\ba^{(t+1)} \in \cA$ with $\pi^{(t+1)}(\ba^{(t+1)})=1$, that minimizes the objective. To gain insight into this update rule, we first examine how to compute $\argmin_{\pi \in \Delta^{\cA}} F(\pi,\mu)$ for a fixed $\mu$.

Suppose we want to find a joint action $\tilde \ba \in \cA$ such that the pure strategy $\tilde \pi$ with $\tilde \pi(\tilde \ba)=1$ minimizes $F(\tilde \pi,\mu)$. Expanding the definition, we obtain
\begin{align*}
    F(\tilde \pi,\mu)=&\sum_{i=1}^N~ \sum_{S\in\cS\colon i\in S}~ \sum_{\hat \ba_S\in\cA_S} \frac{\mu(S,\hat \ba_S) }{|S|} \rbr{\cU_i\rbr{\hat \ba_S, \tilde \ba_{-S}} - \cU_i\rbr{\tilde \ba}}\\
    =&\sum_{i=1}^N~ {\color{red}\sum_{S\in\cS\colon i\in S}~ \sum_{\hat \ba_S\in\cA_S} \frac{\mu(S,\hat \ba_S) }{|S|} \rbr{\cU_i\rbr{\hat \ba_{S\cap \cN(i)}, \tilde \ba_{\cN(i)\setminus S}} - \cU_i\rbr{\tilde \ba_{\cN(i)}}}}.
\end{align*}
Therefore, for each candidate $\tilde \ba \in \cA$, only the local actions $\tilde \ba_{\cN(i)}$ matter for the {\color{red} expression} above. If we can evaluate this {\color{red} expression} efficiently,\footnote{This is possible since $\mu$ is a linear combination of pure strategies when updated according to \Cref{eq:update-rule}.} then for each player $i \in [N]$ we may search for $\tilde \ba_{\cN(i)}$ that minimizes it. However, a difficulty arises because $\cN(i)$ and $\cN(j)$ may overlap across different players. Hence, we must ensure that the local assignments remain globally consistent.

To address this, we now introduce the concept of a tree decomposition and show how it enables us to optimize $F$ efficiently. Throughout the paper, we assume that a tree decomposition is given, and analyze the complexity only with respect to this decomposition.

\begin{figure}
    \centering
    \includegraphics[width=0.95\linewidth]{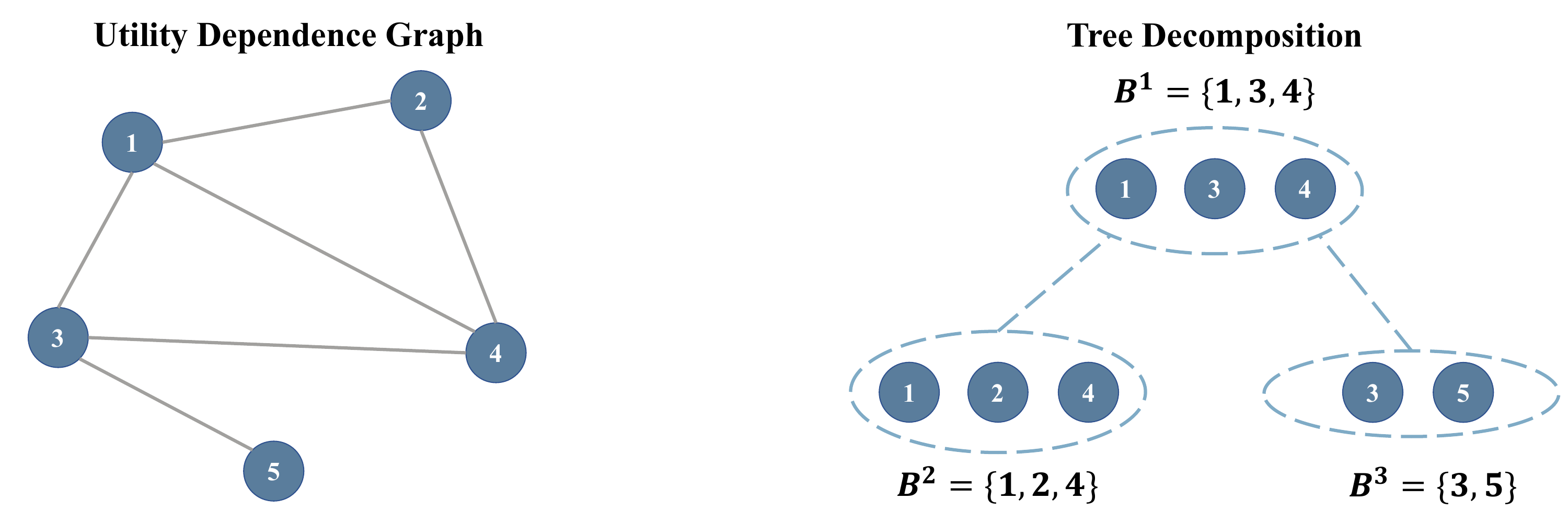}
    \caption{An illustration of a tree decomposition of the \Cref{def:utility-dependence-graph}.}
    \label{fig:tree-decomposition}
\end{figure}

\paragraph{Tree decomposition.}
A tree decomposition $\cT\coloneqq\cbr{B^1,B^2,\dots,B^K}$ of the \Cref{def:utility-dependence-graph} $\cG=(\cV, \cE)$ is a tree with $K$ nodes (bags), each $B^k \subseteq \cV$ where $\cV=[N]$, satisfying the following properties \citep{diestel2025graph}:
\begin{enumerate}[ref={\bf Property~\arabic* of Tree Decomposition},nosep]
    \item $\bigcup_{k=1}^K B^k = [N]$.\label{item:td-property1}
    \item For every edge $(i,j)\in \cE$, there exists $k$ with $\cbr{i,j}\subseteq B^k$.\label{item:td-property2}
    \item For any player $i\in [N]$, if $i$ appears in two bags $B,B'\in\cT$, then every bag on the path from $B$ to $B'$ also contains $i$.\label{item:td-property3}
\end{enumerate}
The treewidth of $\cG$ is
\begin{align}
    \tw{\cG}\coloneqq \min_{\cT}\max_{B\in\cT}\rbr{|B|-1},
\end{align}
that is, the minimum, over all tree decompositions, of the maximum bag size minus one. As illustrated in \Cref{fig:tree-decomposition}, the tree decomposition separates the game into overlapping bags. For example, since $B^2$ and $B^3$ only overlap at $B^1$, then $B^2$ and $B^3$ can be optimized independently, with consistency later enforced at $B^1$.

As illustrated in \Cref{fig:tree-decomposition}, a tree decomposition separates the game into overlapping local components. For example, if the subtrees rooted at $B^2$ and $B^3$ intersect only through $B^1$, then the assignments within these two subtrees can be optimized independently once their shared actions on $B^1$ are fixed; global consistency is subsequently enforced through the overlaps between adjacent bags.

Since any clique in $\cG$ is contained in some bag \citep{diestel2025graph}, for every player $i\in [N]$ there exists a bag $B$ with $\cN(i)\subseteq B$. We arbitrarily assign each player $i$ to such a bag.

\paragraph{Dynamic programming on the tree.}
We begin by choosing an arbitrary bag as the root of the tree decomposition and denote it by $B^r$. For each bag $B \in \cT$, let $C(B)$ denote the set of its children. With this setup, we maintain a vector $\bd^{(t+1)} \in \RR^{\bigtimes_{B\in\cT} \cA_B}$, defined as
\begin{equation}
\begin{split}
    d^{(t+1)}(B, \ba_B)=&\sum_{\tau=1}^t \sum_{S\in\cS} \frac{1}{|S|}\sum_{\substack{i\in S\colon\\i\text{ assigned to }B}}{\color{blue}\sum_{\hat \ba_S\in\cA_S} }\mu^{(\tau)}\rbr{S, \hat \ba_S} \rbr{\cU_i\rbr{(\hat\ba_{S\cap B},\ba_{B\setminus S})} - \cU_i(\ba_B)}\\
    &+\sum_{B'\in C(B)} \min_{\substack{\ba_{B'}'\in\cA_{B'}\colon\\\ba_{B\cap B'}=\ba'_{B\cap B'}}} d^{(t+1)}(B', \ba_{B'}')-n^{(t+1)}(B, \ba_B),
\end{split}
\label{eq:pi-dp}
\end{equation}
where $n^{(t+1)}(B, \ba_B)\sim {\rm  Exp}\rbr{\eta}$\footnote{$\Pr(x\geq w)=\exp(-\eta w)$ when $x\sim \Exp\rbr{\eta}$.} is sampled from an exponential distribution. Therefore, in \Cref{eq:update-rule}, $\tilde n^{(t+1)}(\ba)=\sum_{B\in\cT} n^{(t+1)}(B, \ba_B)$. 
Since each $i$ assigned to $B$ satisfies $\cN(i)\subseteq B$, the utility $\cU_i(\ba_B)$ can be written in terms of $\ba_B$ alone. Moreover, the summation ${\color{blue} \sum_{\hat \ba_S \in \cA_S}}$ can be computed efficiently, since $\mu^{(\tau)}$ is updated via \Cref{eq:update-rule} and is therefore a pure strategy.

\paragraph{Reconstructing the strategy.}
The optimal joint action $\ba^{(t+1)}\in\cA$ is then reconstructed recursively from the root $B^r$ to the leaves:
\begin{equation}
\begin{split}
    &\ba_{B^r}^{(t+1)}=\argmin_{\ba_{B^r}\in \cA_{B^r}} d^{(t+1)}(B^r, \ba_{B^r})\\
    \forall B\in C(B^r),~~~& \ba_{B\setminus B^r}^{(t+1)}=\argmin_{{\color{green!50!black}\ba_{B\setminus B^r}\in\cA_{B\setminus B^r}}} d^{(t+1)}\rbr{B, (\ba_{B\setminus B^r}, \ba^{(t+1)}_{B\cap B^r})}.
\end{split}
\label{eq:find-argmax-dp-pi}
\end{equation}
By \Cref{item:td-property1}, every player's action will be included. Since $\argmin_{{\color{green!50!black}\ba_{B\setminus B^r}\in\cA_{B\setminus B^r}}}$ is taken over $\cA_{B\setminus B^r}$, no contradictions arise by \Cref{item:td-property3}. We then set $\pi^{(t+1)}(\ba^{(t+1)})=1$.

The regret bound of this procedure is summarized below.
\begin{theorem}
\label{theorem:pi-no-regret}
Consider \Cref{eq:update-rule}. For any $\delta>0$, with probability at least $1-\delta$, the following holds:
\begin{align*}
    \max_{\hat\pi\in\Delta^{\cA}} \sum_{t=1}^T F\rbr{\pi^{(t)},\mu^{(t)}} - F\rbr{\hat\pi,\mu^{(t)}}\leq 2\abr{\cT} \frac{1+\rbr{\tw{\cG}+1}\log A}{\eta} + 2\eta\abr{\cT} T + \sqrt{2T\log \frac{1}{\delta}}.
\end{align*}
\end{theorem}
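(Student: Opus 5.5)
The plan is the standard FTPL analysis, following the ``be-the-leader'' template of \citet{hazan2016introduction}. The decision space is viewed through the tree decomposition. Write $\bell^{(t)}\in\RR^{\cA}$ for the loss vector $\bell^{(t)}(\ba)=F(\pi_{\ba},\mu^{(t)})$, where $\pi_{\ba}$ is the point mass on $\ba$. By the assignment of each player $i$ to a bag $B\supseteq\cN(i)$, this loss decomposes as $\bell^{(t)}(\ba)=\sum_{B\in\cT}\ell^{(t)}_B(\ba_B)$. Hence $\ba\mapsto\sum_B \pi_{\ba}$-indicators of $(B,\ba_B)$ embeds $\cA$ into $\{0,1\}^{\bigtimes_B\cA_B}$, and every loss is linear with coefficients of magnitude at most $1$: each $i$ contributes at most $1/|S|$ per coalition and $\mu^{(t)}$ is a point mass, so, if needed, I would normalize so that $\abr{\ell^{(t)}_B}\le 1$ per bag.

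Step 1 is to check that \Cref{eq:pi-dp}--\Cref{eq:find-argmax-dp-pi} indeed return $\argmin_{\ba}\sum_{\tau\le t}\bell^{(\tau)}(\ba)-\sum_B n^{(t+1)}(B,\ba_B)$. This is a routine induction over the tree, using \ref{item:td-property3} so that subtrees interact only through the separator $B\cap B'$. Thus the algorithm is FTPL over the combinatorial set with $D=|\cT|$ coordinates active per point. Step 2 is the be-the-leader lemma. Comparing with $\hat\pi$, which we may take to be a pure $\hat\ba$ by linearity, gives
\[
\sum_t \bell^{(t)}(\ba^{(t)})-\bell^{(t)}(\hat\ba)\le \sum_t\rbr{\bell^{(t)}(\ba^{(t)})-\bell^{(t)}(\ba^{(t+1)})}+\text{(perturbation range)}.
\]
Here the perturbation term is bounded by $2\,\EE\max_{\ba}\sum_B n(B,\ba_B)\le 2\sum_B \EE\max_{\ba_B}n(B,\ba_B)$. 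Since a maximum of $|\cA_B|\le A^{\tw{\cG}+1}$ i.i.d.\ $\Exp(\eta)$ variables has mean at most $(1+(\tw{\cG}+1)\log A)/\eta$, this yields the first term $2|\cT|\frac{1+(\tw{\cG}+1)\log A}{\eta}$. A fresh sample each round is fine in expectation because the adversary $\mu^{(t)}$ depends only on past play; alternatively one argues that the expected loss is unchanged under a single fixed perturbation.

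Step 3 is the stability term, which I expect to be the main obstacle. I need $\EE[\bell^{(t)}(\ba^{(t)})-\bell^{(t)}(\ba^{(t+1)})]\le 2\eta|\cT|$. The exponential-perturbation argument goes as follows. Shifting the noise on each coordinate $(B,\ba_B)$ by the loss $\ell^{(t)}_B$ (of size $\le1$) changes the density by a factor at most $e^{\eta\cdot(\text{shift})}$, because the density of $\Exp(\eta)$ satisfies $p(x+c)\ge e^{-\eta c}p(x)$ on the support. Therefore $\EE[\bell^{(t)}(\ba^{(t+1)})]\ge e^{-\eta|\cT|}\EE[\bell^{(t)}(\ba^{(t)})]$ after handling signs by shifting losses to be nonnegative, which costs the factor $2$. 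This gives per-round regret at most $(1-e^{-\eta|\cT|})\cdot|\cT|$-scale. The delicate point is to get linear rather than quadratic dependence on $|\cT|$. I would first try the coordinate-wise coupling (Kalai--Vempala style with $\ell_1$ loss norm $\le|\cT|$ and $\ell_\infty$ action norm $1$), which yields $\eta\cdot\|\ell\|_1\cdot\max\|\ba\|_\infty$. If this is too lossy, I would normalize per bag so that the total loss over a point is at most $1$, giving $2\eta|\cT|$.

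Step 4 converts expectation to high probability. Define $X_t=F(\pi^{(t)},\mu^{(t)})-\EE_t[F(\pi^{(t)},\mu^{(t)})]$, which is a martingale difference sequence with $|X_t|\le 1$ given the history. Azuma--Hoeffding then gives $\sum_t X_t\le\sqrt{2T\log(1/\delta)}$ with probability $1-\delta$. Adding this to the expected-regret bound from Steps 2--3, and noting that the comparator term is deterministic given $\mu^{(1:T)}$, yields the stated inequality.
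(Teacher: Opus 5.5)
\textbf{Gap in Step 3: the density-shift coupling does not give $2\eta\abr{\cT}$.} Your Steps 1, 2 and 4 match the paper: DP optimality over the tree decomposition, the be-the-leader bound with perturbation term $2\sum_{B}\EE\max_{\ba_B} n(B,\ba_B)$, and the passage from expectation to high probability. Step 3 has a genuine gap.

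The coupling rests on the identity $\ba^{(t+1)}(\tilde\bn)=\ba^{(t)}(\tilde\bn-\ell^{(t)})$. Here $\ell^{(t)}$ is the \emph{entire} loss vector in $\RR^{\bigtimes_{B}\cA_B}$. So the density ratio is $\exp\rbr{-\eta\sum_{B\in\cT}\sum_{\ba_B\in\cA_B}\ell^{(t)}_B(\ba_B)}$, summed over all $\sum_B\abr{\cA_B}$ coordinates, not just the $\abr{\cT}$ coordinates active at one point.

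You have swapped the two norms. $\abr{\cT}$ is the $\ell_1$-norm of the \emph{action} embeddings. The losses have $\ell_\infty$-norm at most one, but their $\ell_1$-norm is of order $\sum_B\abr{\cA_B}$. Per-bag constant shifts leave the argmin unchanged but do not reduce this sum. Normalizing so that each point's total loss is at most one only bounds $\max_{\ba}\ell^{(t)}(\ba)$, not the density ratio. Carried out correctly, your coupling gives a per-round stability term of order $\eta A^{\tw{\cG}+1}$ (of order $\eta\abr{\cA}$ when $\cT$ is a single bag). That bound is incomparable with the one claimed and does not yield $2\eta\abr{\cT}$.

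\textbf{Missing idea: memorylessness of the exponential noise, applied bag by bag.} The paper avoids a global change of measure. It reconstructs $\ba^{(t)}$ top-down and, for each bag $B$, conditions on the parent's choice agreeing at times $t$ and $t+1$ and on all noise except $n(B,\ba'_B)$ at the locally selected action. Being selected at time $t$ means $n(B,\ba'_B)\ge w$ for a threshold $w$ determined by the remaining values.

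The DP values satisfy $\abr{d^{(t+1)}(B,\ba_B)-d^{(t)}(B,\ba_B)}\le 1$ (\Cref{lemma:variation-d-max}). This holds because each DP value is a minimum over completions of a cumulative loss whose one-round increment is bounded. Hence $n(B,\ba'_B)\ge w+2$ forces the same local choice at $t+1$. By memorylessness this has conditional probability at least $e^{-2\eta}$, independent of $\abr{\cA_B}$.

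A union bound over bags then gives $\Pr\rbr{\ba^{(t)}\neq\ba^{(t+1)}}\le 2\eta\abr{\cT}$. Since $F$ is bounded, this controls the stability term and produces the $2\eta\abr{\cT}T$ in the theorem. You need this per-bag memoryless argument together with the DP-value stability lemma in place of your Step 3.
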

The proof is given in \Cref{section:pi-no-regret-proof}. Importantly, \Cref{theorem:pi-no-regret} shows that by setting $\eta=1/\sqrt{T}$, we obtain $\cO(\sqrt{T})$ regret. Since the update rule for $\mu$ mirrors that of $\pi$, the detailed analysis is deferred to \Cref{section:mu-update}. We now formally state the regret bound for $\mu$ in the following theorem.
\begin{theorem}
\label{theorem:mu-no-regret}
Consider the updates in \Cref{eq:update-rule}. For any $\delta>0$, with probability at least $1-\delta$, the following holds:
\begin{align*}
    \max_{\hat\mu\in\Delta^{\bigtimes_{S\in\cS} \cA_S}} \sum_{t=1}^T F\rbr{\pi^{(t)},\hat\mu} - F\rbr{\pi^{(t)},\mu^{(t)}}\leq& 2\abr{\cT} \frac{1+\rbr{\tw{\cG}+1}\log A}{\eta} + 2\eta\abr{\cT} T + \sqrt{2T\log \frac{1}{\delta}}.
\end{align*}
\end{theorem}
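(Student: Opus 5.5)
The plan is to mirror the proof of \Cref{theorem:pi-no-regret}. The deviator's update is FTPL over $\Delta^{\bigtimes_{S\in\cS}\cA_S}$ with a structured exponential perturbation, implemented by the same kind of dynamic program. First I will fix the deviator's linear oracle. For a fixed $\pi$ (a mixture of the pure strategies $\ba^{(1)},\dots,\ba^{(t)}$), the deviator's payoff is linear in $\mu$. The vertex $(S,\hat\ba_S)$ has value
\[
g^{(t)}(S,\hat\ba_S)=\frac{1}{|S|}\sum_{\tau}\sum_{i\in S}\rbr{\cU_i(\hat\ba_{S\cap\cN(i)},\ba^{(\tau)}_{\cN(i)\setminus S})-\cU_i(\ba^{(\tau)})}.
\]
This decomposes over the bags to which the players $i\in S$ are assigned, with each term depending only on $\hat\ba_{S\cap B}$. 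As in \Cref{section:mu-update}, the noise is $\tilde m^{(t+1)}(S,\hat\ba_S)=\sum_{B}m^{(t+1)}(S,B,\hat\ba_{S\cap B})$, with i.i.d. ${\rm Exp}(\eta)$ entries. The maximization over $\hat\ba_S$ for each $S$ is then a tree DP over $\cT$, and we take the maximum over $S$ on top. This reduces the problem to the abstract statement: FTPL for a linear objective over the vertex set $\cX\subseteq\{0,1\}^D$ of one-hot vectors in the structured parameterization, with perturbation equal to a sum of $|\cT|$ blocks, each of size at most $A^{\tw{\cG}+1}$.

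Second, I will prove the expected regret bound by the standard be-the-leader argument (Kalai--Vempala / \citet{hazan2016introduction}). The expected regret is at most the perturbation range term plus a stability term. For the range term, the expected maximum of the noise is bounded per bag: the maximum of $M$ i.i.d. ${\rm Exp}(\eta)$ variables has mean $H_M/\eta\le(1+\log M)/\eta$. With $M\le A^{\tw{\cG}+1}$ this gives $(1+(\tw{\cG}+1)\log A)/\eta$ per bag. Summing over the $|\cT|$ bags, with a factor $2$ from the upper and lower sides, gives the first term. For the stability term, I will use the one-sided exponential-noise trick. Shifting each block's noise by the loss increment, which is at most $1$ per bag because the utilities lie in $[0,1]$ and the $1/|S|$ averaging keeps the per-bag contribution bounded, changes the density by a factor of at most $e^{\eta}$ per bag. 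Hence
\[
\Pr(\text{leader changes})\le 1-e^{-\eta|\cT|}\le\eta|\cT|,
\]
and each change costs at most $2$, which yields $2\eta|\cT|T$.

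Third, I will convert the expectation into a high-probability statement. The realized regret minus its conditional expectation forms a martingale difference sequence with increments bounded in $[-1,1]$, since $F\in[-1,1]$. Azuma--Hoeffding then gives the additional $\sqrt{2T\log(1/\delta)}$ with probability $1-\delta$.

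The main obstacle I expect is the stability step. Because the noise is a sum over bags rather than independent per vertex, the density-ratio argument has to be done blockwise. The bound on the loss increment must also be attributed per bag, and this requires care for coalitions $S$ spanning several bags, where assigned players appear in different bags.
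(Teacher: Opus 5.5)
Your architecture matches the paper's: blockwise exponential FTPL realized by the tree DP, a be-the-leader split into a noise-range term and a stability term, then a martingale step. But the stability step fails as you state it, and that step is where the paper's proof does its real work. Shifting a whole block of i.i.d.\ $\mathrm{Exp}(\eta)$ coordinates by the loss-increment vector $s_B$ multiplies the density by $e^{\eta\nbr{s_B}_1}$. The $\ell_1$ norm here runs over \emph{all} coordinates of the block, up to $A^{\tw{\cG}+1}$ of them per coalition. So a bound of $1$ on each coordinate's increment does not give a factor $e^{\eta}$ per bag. The paper instead argues bag by bag along the root-to-leaf reconstruction. It conditions on the parent's assignment and on every noise entry of $B$ other than the current leader's, and uses memorylessness, $\Pr(n\geq w+2\mid n\geq w)=e^{-2\eta}$. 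It then combines over bags, treating the coalition choice as one more choice at the root. This requires bounding how far the leader's margin at $B$ can move in one round. What is compared at $B$ is the DP value $g_S(B,\cdot)$, which already contains the maxima over the child subtrees, not the payoff of the players assigned to $B$.

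The missing ingredient is \Cref{lemma:variation-g-max}. By \Cref{lemma:optimality}, $g_S(B,\hat\ba_B)$ equals the maximum over subtree completions of the cumulative $1/|S|$-weighted gains of the players in $S\cap\mathrm{st}(B)$, plus noise. So one round changes it by at most $\sum_{i\in S\cap\mathrm{st}(B)}1/|S|\leq 1$. This handles coalitions whose members are assigned to different bags, which is the obstacle you flagged but did not resolve. The leader's value and a competitor's value can each move by $1$, so the margin can move by $2$. That is why the paper gets $e^{-2\eta}$ per bag and $1-e^{-2\eta\abr{\cT}}\leq 2\eta\abr{\cT}$, not your $1-e^{-\eta\abr{\cT}}$. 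Two smaller points. First, your noise is indexed by the coalition, so the per-bag maximum in the range term runs over pairs $(S,\hat\ba_{S\cap B})$, not over at most $A^{\tw{\cG}+1}$ entries. Bounding it as you propose introduces a $\log\abr{\cS}$ term; the paper's proof, which only says the rest follows as for $\pi$, does not address this either. Second, the be-the-leader inequality holds for one noise draw against a fixed sequence, while $\pi^{(t)}$ adapts to $\mu^{(1)},\dots,\mu^{(t-1)}$. Your Azuma step should therefore go through \Cref{lemma:cesa-oblivious}, with fresh noise each round.
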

The complete proof is provided in \Cref{section:mu-update}.

\subsection{Computation of Equilibrium}

For any $\delta' > 0$, by setting $\delta = \frac{\delta'}{2}$ in \Cref{theorem:pi-no-regret} and \Cref{theorem:mu-no-regret}, and applying the union bound, we obtain that with probability at least $1 - \delta'$, the following holds:
\begin{equation}
\begin{split}
    &\max_{\hat\mu\in\Delta^{\bigtimes_{S\in\cS} \cA_S}} \sum_{t=1}^T F\rbr{\pi^{(t)},\mu} - \min_{\hat\pi\in\Delta^{\cA}}\sum_{t=1}^T F\rbr{\hat\pi,\mu^{(t)}}\\
    \leq& 4\abr{\cT} \frac{1+\rbr{\tw{\cG}+1}\log A}{\eta} + 4\eta\abr{\cT} T + 2\sqrt{2T\log\frac{2}{\delta'}}.
\end{split}
\label{eq:regret-sum}
\end{equation}
We now connect this bound to the convergence of the average strategy profile. Let $\pi^*, \mu^*$ be the solution to \Cref{eq:def-sum}, and define the average strategies $\overbar\pi \coloneqq \frac{1}{T} \sum_{t=1}^T \pi^{(t)}$ and $\overbar\mu \coloneqq \frac{1}{T} \sum_{t=1}^T \mu^{(t)}$. The left-hand side of \Cref{eq:regret-sum} corresponds to the duality gap: $\max_{\hat\mu\in\Delta^{\bigtimes_{S\in\cS} \cA_S}} F(\overbar\pi,\hat\mu) - \min_{\hat\pi\in\Delta^{\cA}} F(\hat\pi,\overbar\mu)$. Since $\pi^*, \mu^*$ are optimal solutions to \Cref{eq:regret-sum}, they satisfy
\begin{align*}
    \min_{\hat\pi\in\Delta^{\cA}} F(\hat\pi,\overbar\mu)\leq F(\pi^*,\mu^*)\leq \max_{\hat\mu\in\Delta^{\bigtimes_{S\in\cS} \cA_S}} F(\overbar\pi,\hat\mu).
\end{align*}
Combining these pieces, we arrive at the following finite-time convergence guarantee:
\begin{theorem}
    Let $\pi^*, \mu^*$ be the solution of \Cref{eq:def-sum}, and define $\overbar\pi \coloneqq \frac{1}{T}\sum_{t=1}^T \pi^{(t)},~\overbar\mu \coloneqq \frac{1}{T}\sum_{t=1}^T \mu^{(t)}$. Then, for any $\delta > 0$, with probability at least $1 - \delta$, we have
    \begin{align*}
        \max_{\hat\mu\in\Delta^{\bigtimes_{S\in\cS} \cA_S}} F(\overbar\pi,\hat\mu)\leq F(\pi^*,\mu^*) + 4\abr{\cT} \frac{1+\rbr{\tw{\cG}+1}\log A}{\eta T} + 4\eta\abr{\cT} + 2\sqrt{\frac{2\log\frac{2}{\delta}}{T}}.
    \end{align*}
\end{theorem}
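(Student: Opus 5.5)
The statement follows from the union-bounded regret inequality \Cref{eq:regret-sum} together with the bilinearity of $F$. I will convert the sum of the two regrets into a duality gap for the averaged strategies, and then sandwich the optimal value $F(\pi^*,\mu^*)$ inside that gap.

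\textbf{Step 1: apply the regret bounds.} Invoke \Cref{theorem:pi-no-regret} and \Cref{theorem:mu-no-regret}, each with failure probability $\delta/2$. By the union bound, with probability at least $1-\delta$ both regret inequalities hold simultaneously. Adding them cancels the common terms $\sum_t F(\pi^{(t)},\mu^{(t)})$. This yields
\begin{align*}
\max_{\hat\mu}\sum_{t=1}^T F(\pi^{(t)},\hat\mu)-\min_{\hat\pi}\sum_{t=1}^T F(\hat\pi,\mu^{(t)})\le 4|\cT|\frac{1+(\tw{\cG}+1)\log A}{\eta}+4\eta|\cT|T+2\sqrt{2T\log\tfrac{2}{\delta}},
\end{align*}
which is \Cref{eq:regret-sum} with $\delta'=\delta$.

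\textbf{Step 2: pass to the averages.} Because $F$ is linear in each argument separately, for every fixed $\hat\mu$ we have $\sum_t F(\pi^{(t)},\hat\mu)=T\,F(\overbar\pi,\hat\mu)$. Likewise, for every fixed $\hat\pi$ we have $\sum_t F(\hat\pi,\mu^{(t)})=T\,F(\hat\pi,\overbar\mu)$. Dividing the inequality from Step 1 by $T$ therefore gives
\begin{align*}
\max_{\hat\mu}F(\overbar\pi,\hat\mu)-\min_{\hat\pi}F(\hat\pi,\overbar\mu)\le \text{RHS}/T.
\end{align*}

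\textbf{Step 3: lower-bound the subtracted term.} It remains to show $\min_{\hat\pi}F(\hat\pi,\overbar\mu)\le F(\pi^*,\mu^*)$. By von Neumann's minimax theorem, which applies since $F$ is bilinear on a product of simplices, $(\pi^*,\mu^*)$ is a saddle point. Hence $F(\pi^*,\mu^*)=\max_{\mu}\min_{\pi}F(\pi,\mu)\ge\min_{\pi}F(\pi,\overbar\mu)$. Substituting this into Step 2 and rearranging gives the claimed bound. The right-hand side matches exactly: $4|\cT|(1+(\tw{\cG}+1)\log A)/(\eta T)+4\eta|\cT|+2\sqrt{2\log(2/\delta)/T}$.

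The only point requiring care is Step 3. One must interpret ``solution of \Cref{eq:def-sum}'' as a saddle point, so that both inequalities of the sandwich hold; minimax existence on these compact convex sets guarantees this. Everything else is routine.
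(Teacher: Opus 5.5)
Your proposal is correct and is the paper's own argument: union-bound the two regret theorems at level $\delta/2$ to obtain \Cref{eq:regret-sum}, use bilinearity to read its left-hand side as $T$ times the duality gap of $(\overbar\pi,\overbar\mu)$, and bound $\min_{\hat\pi}F(\hat\pi,\overbar\mu)$ above by $F(\pi^*,\mu^*)$. One small note: only this lower half of the sandwich is needed, and it already follows from weak duality, $\min_{\hat\pi}F(\hat\pi,\overbar\mu)\le\min_{\hat\pi}\max_{\mu}F(\hat\pi,\mu)=F(\pi^*,\mu^*)$ (when $\mu^*$ is a best response to $\pi^*$), so the full minimax theorem is not essential.
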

With $\eta = \frac{1}{\sqrt{T}}$, the average strategy $\overbar\pi$ constitutes an $\cO\rbr{\frac{|\cT|\cdot \tw{\cG}\log A + \sqrt{\log \frac{2}{\delta}}}{\sqrt T}}$-MASE. The overall running time is $\cO\rbr{T\cdot|\cS|\cdot|\cT|\cdot A^{\tw{\cG}+1}}$. Hence, the exponential dependence aligns with the lower bound in \Cref{theorem:tree-width-important}.

\begin{remark}[Extension to Weighted Sum and Maximum over Coalition Gains]
\label{remark:extension}
The correlator-deviator framework in this section extends directly to other coalition objectives, including a weighted sum (or weighted average) of deviation gains and the maximum deviation gain within a coalition. In particular, we can optimize
\begin{align}
    &\min_{\pi\in \Delta^{\cA}}\max_{S\in \cS}\max_{\hat\ba_S\in\cA_S}
    \sum_{i\in S} w_{S,i}\,\EE_{\ba\sim \pi}\sbr{\cU_i\rbr{\hat\ba_S,\ba_{-S}}-\cU_i\rbr{\ba}}
    \numberthis[\textsc{Weighted Sum}]{eq:weighted-sum-objective}\\
    &\min_{\pi\in \Delta^{\cA}}\max_{S\in \cS}\max_{\hat\ba_S\in\cA_S}\max_{i\in S}
    \EE_{\ba\sim \pi}\sbr{\cU_i\rbr{\hat\ba_S,\ba_{-S}}-\cU_i\rbr{\ba}}
    \numberthis[\textsc{Maximum}]{eq:maximum-objective}\,,
\end{align}
where $\bw:=\cbr{w_{S,i}}_{S\in\cS,\, i\in S}$ is an arbitrary collection of weights (\emph{e.g.}, $\sum_{i\in S} w_{S,i}=1$ if one prefers a weighted \emph{average} for each coalition $S$).
For \Cref{eq:weighted-sum-objective}, the algorithm and analysis are unchanged. We focus on the uniform average solely for ease of exposition.
For \Cref{eq:maximum-objective}, it suffices to augment the deviator's decision to also select the player whose gain is evaluated, \emph{i.e.}, replace $\mu(S,\hat\ba_S)$ by $\mu(S,i,\hat\ba_S)$ over triples $(S,i,\hat\ba_S)$ with $i\in S$. The resulting meta-game is then equivalent to \Cref{eq:maximum-objective}.
\end{remark}

\section{Experiments}

In this section, we compare our algorithm against several baselines: Follow the Regularized Leader with a Euclidean regularizer (\textcolor[HTML]{ff9800}{FTRL}), \textcolor[HTML]{a31214}{Hedge}, Follow the Perturbed Leader with an exponential noise distribution (\textcolor[HTML]{b85c38}{FTPL}; all players run FTPL independently), and Online Mirror Descent with a Euclidean regularizer (\textcolor[HTML]{25597a}{OMD}) \citep{hazan2016introduction}. We also plot the ground-truth MASE computed via linear programming (\textcolor[HTML]{2e7d32}{LP}) in \Cref{section:lp}. Finally, to highlight the distinction between MASE and CCE, we also plot the CCE that minimizes the coalition exploitability defined below (\textcolor[HTML]{00897b}{Optimal CCE}).

We evaluate the algorithms on three criteria:
\begin{itemize}[nosep]
    \item \textbf{Exploitability.} ($\max_{i\in [N]} \max_{\hat a_i\in\cA_i} \EE_{\ba\sim \pi}\!\sbr{\cU_i(\hat a_i, \ba_{-i}) - \cU_i(\ba)}$): the maximum gain a single player can obtain by deviating unilaterally. Exploitability $\leq 0$ indicate a Nash equilibrium (or a correlated equilibrium if $\pi$ is correlated).
    \item \textbf{Coalition exploitability.} ($\max_{\mu\in\Delta^{\bigtimes_{S\in\cS}\cA_S}} F(\pi,\mu)$): the maximum average gain when a coalition deviates simultaneously. We take $\cS$ to be the set of all non-empty player subsets.
    \item \textbf{Social welfare.} ($\sum_{i=1}^N \EE_{\ba\sim\pi}\sbr{\cU_i(\ba)}$): the sum of all players' utilities.
\end{itemize}

Utility definitions and additional details are provided in \Cref{section:experiment-details}. In Prisoner's Dilemma \citep{prisoner-dilemma}, the MASE corresponds to players choosing $(C,D)$ and $(D,C)$ with probability $0.5$ each, yielding a social welfare of $1.0$. In contrast, because the unique NE/CCE in this game is $(D,D)$, the baselines converge to that outcome, with a lower social welfare of $0.4$. Thus, in Prisoner's Dilemma, MASE promotes cooperation and achieves higher utility.

In the Stag Hunt, there are two Nash equilibria, one of which attains higher utility. As shown in \Cref{fig:PD}, all baselines converge to the worse equilibrium, whereas MASE converges to the better one.
Finally, in terms of exploitability (unilateral deviations), MASE remains close to the baselines, while the baselines are substantially more fragile to multilateral deviations.

\begin{figure}[t]
    \centering
    \includegraphics[width=0.95\linewidth]{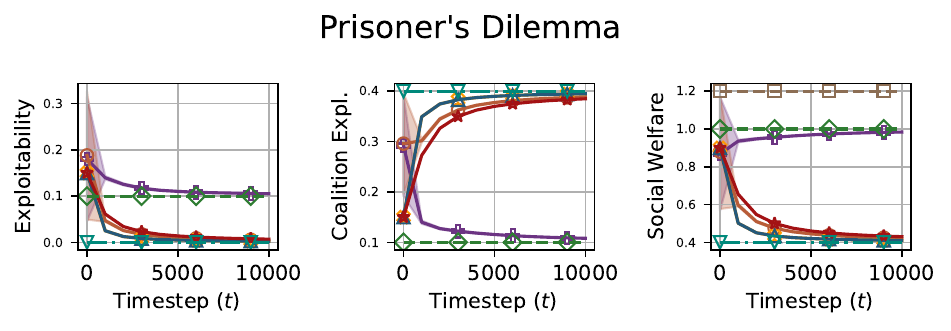}
    \includegraphics[width=0.95\linewidth]{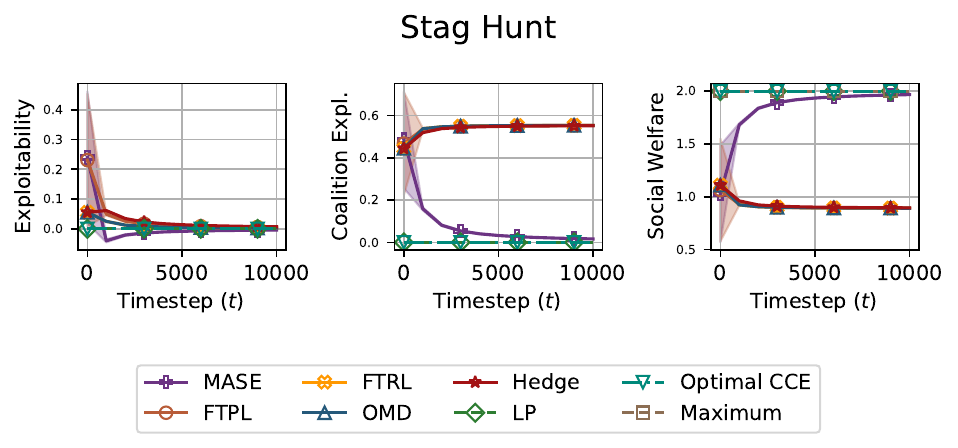}
    \caption{\textcolor[HTML]{2e7d32}{LP} denotes the linear programming solution from \Cref{section:lp}, and \textcolor[HTML]{8C6E54}{Maximum} denotes the maximum achievable social welfare. The baselines are comparatively fragile to multilateral deviations, while MASE is more robust and achieves higher social welfare. At the same time, MASE’s exploitability is close to that of the baselines.}
    \label{fig:PD}
\end{figure}

\subsection{Gap between Optimal CCE and Baseline Algorithms}

In \Cref{fig:optimal-CCE}, we compare the coalition exploitability achieved by the baseline algorithms, MASE, and the optimal CCE, \emph{i.e.}, the CCE with the smallest coalition exploitability. The figure shows that the optimal CCE remains close to MASE in terms of coalition exploitability, whereas none of the baseline algorithms approaches it. Instead, the baseline methods exhibit substantially larger coalition exploitability than MASE, highlighting the importance of explicitly accounting for multilateral deviations.

\begin{figure}[t]
    \centering
    \includegraphics[width=0.65\linewidth]{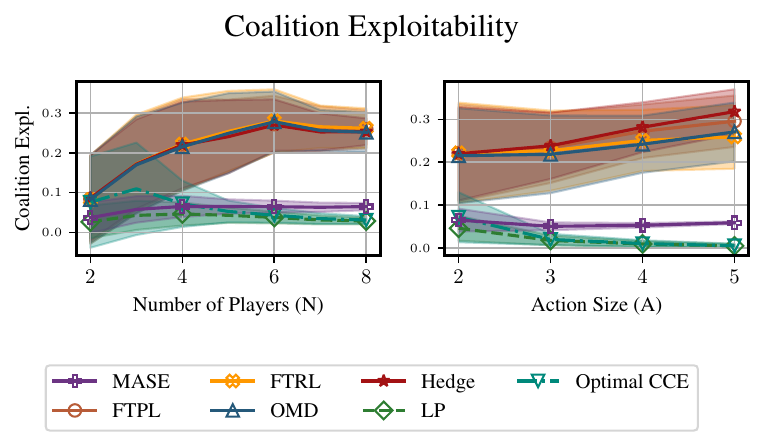}
    \caption{Coalition exploitability in random normal-form games with varying numbers of players and action-set sizes, where $\cS$ contains all possible coalitions ($|\cS|=2^N-1$). In each game, all players have the same action-set size $A$, and each entry of the utility matrix is sampled independently and uniformly from $[0,1]$, and then normalized so that the minimum and maximum entries are exactly $0$ and $1$, respectively.}
    \label{fig:optimal-CCE}
\end{figure}

\subsection{Coalition Exploitability in Larger Games}

As shown in \Cref{fig:polymatrix-results}, the coalition exploitability of the average strategy generated by classical no-regret learning algorithms increases as the game size grows. Note that we only consider coalitions of size no more than two. This trend underscores the importance of minimizing coalition exploitability. As games become larger, the equilibria to which these algorithms converge become increasingly fragile to coalition deviations, necessitating approaches that explicitly account for such multilateral deviations. Further details are provided in \Cref{section:polymatrix}.

\begin{figure}
    \centering
    \includegraphics[width=0.95\linewidth]{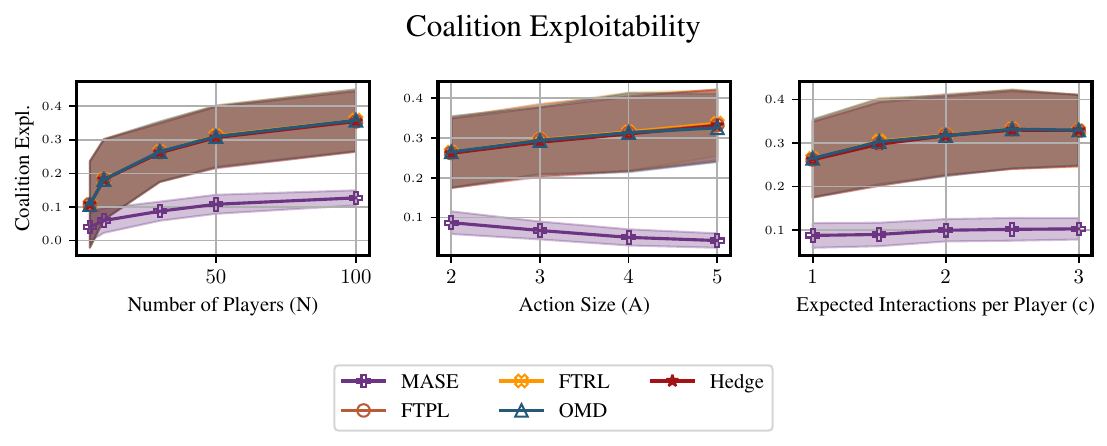}
    \caption{The coalition exploitability of random polymatrix games of different sizes when coalitions with no more than two players are considered. A larger expected number of interactions per player ($c$) generally corresponds to a larger treewidth of the \Cref{def:utility-dependence-graph}.}
    \label{fig:polymatrix-results}
\end{figure}

\section{Application: Tradeoff between Exploitability and Social Welfare}
\label{section:tradeoff}

In \Cref{fig:PD}, we see that allowing exploitability to increase from $0.0$ to $0.1$ raises the social welfare of MASE from $0.4$ to $1.0$. This begs a classical equilibrium-selection question, finding an equilibrium that maximizes social welfare \citep{conitzer2008new-np,papadimitriou2008computing-against-hope}, and its natural extension:
\begin{quote}
    Given a tolerance $\epsilon \ge 0$, what is the optimal $\epsilon$-approximate equilibrium that maximizes social welfare?
\end{quote}
In other words, if we allow an equilibrium to be exploitable, \emph{i.e.}, we permit unilateral deviations to improve a player’s expected utility by up to $\epsilon$, how much additional social welfare can we obtain? We formalize this welfare-robustness trade-off via the \emph{Exploitability Welfare Frontier} (EWF) in the next section.

\subsection{Exploitability Welfare Frontier (EWF)}

We plot, in \Cref{fig:pareto-frontier}, the maximum achievable social welfare as a function of the exploitability $\epsilon$. We refer to this curve as the \emph{Exploitability Welfare Frontier (EWF)}, defined below.

\begin{definition}[Exploitability Welfare Frontier (EWF)]
\begin{enumerate}[left=0mm,nosep,leftmargin=*, align=left, labelsep=0pt, labelwidth=0pt,label={},ref={$\ewf^{\rm CCE}$}] 
\item \label{def:exploitability-welfare-frontier} The exploitability welfare frontier for CCE, denoted $\ewf^{\rm CCE}(\epsilon)\colon [0,1]\to[0,N]$, is the maximum social welfare attainable among all $\epsilon$-approximate CCEs. Formally,
\begin{align}
    \ewf^{\rm CCE}(\epsilon) \coloneqq \max_{\pi\in \epsilon\text{-CCE}} \sum_{i=1}^N \EE_{\ba\sim\pi}\sbr{\cU_i(\ba)}.
\end{align}
where
\begin{align}
    \epsilon\text{-CCE}\coloneqq \cbr{\pi\in\Delta^{\cA}\colon \max_{i\in [N]} \max_{\hat a_i\in\cA_i} \EE_{\ba\sim \pi}\!\sbr{\cU_i(\hat a_i, \ba_{-i}) - \cU_i(\ba)}\leq \epsilon}.
\end{align}
\end{enumerate}
\end{definition}
Analogously, one can define $\ewf^{\rm NE}$ and $\ewf^{\rm CE}$ for Nash and correlated equilibria, respectively.

\begin{wrapfigure}[13]{r}{0.5\textwidth}
  \centering
  \includegraphics[width=0.35\textwidth]{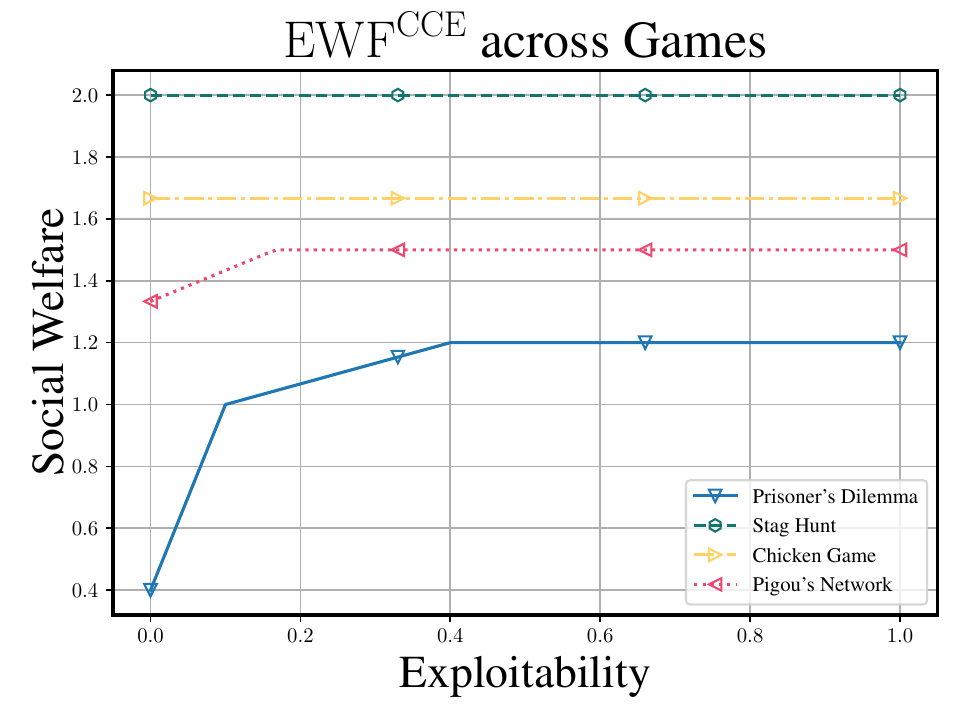}
  \caption{$\ewf^{\rm CCE}$ over four different games.}
  \label{fig:pareto-frontier}
\end{wrapfigure}
Prior work has investigated related welfare–approximation trade-offs for NE (\emph{e.g.}, in congestion games \citep{christodoulou2011performance-congestion-tradeoff} and two-player normal-form games \citep{czumaj2015approximate-bilinear}). However, computational barriers limit what is achievable. \citet{czumaj2015approximate-bilinear}'s algorithm relies on an efficient oracle for computing approximate NE, yet computing NE is {\sf PPAD}-hard \citep{chen2006settling-ppad,daskalakis2009complexity-PPAD}. Moreover, \citet{deligkas2016inapproximability-bilinear-hard} shows that finding the welfare-maximizing approximate NE can require superpolynomial time even in two-player normal-form games.
Hence, to ensure computational tractability, we leave these extensions out of scope and focus on $\ewf^{\rm CCE}$ throughout.

We establish basic structural properties of \Cref{def:exploitability-welfare-frontier} in the following lemmas.
\begin{lemma}
    \label{lemma:pareto-frontier-concavity}
    In any game, \Cref{def:exploitability-welfare-frontier} is non-decreasing, concave, and piecewise linear as a function of $\epsilon$.
\end{lemma}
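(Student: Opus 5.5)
Write $\ewf^{\rm CCE}(\epsilon)$ as the value of a parametric linear program. The decision variable is $\pi\in\RR^{\cA}$, subject to $\pi\geq 0$, $\sum_{\ba}\pi(\ba)=1$, and, for every $i\in[N]$ and $\hat a_i\in\cA_i$,
\begin{align*}
\sum_{\ba\in\cA}\pi(\ba)\rbr{\cU_i(\hat a_i,\ba_{-i})-\cU_i(\ba)}\leq \epsilon .
\end{align*}
The objective is to maximize the linear function $\sum_{\ba}\pi(\ba)\sum_i\cU_i(\ba)$. The parameter $\epsilon$ enters only the right-hand side. Feasibility holds for every $\epsilon\geq 0$, since a CCE exists (any NE is one), so the feasible set is nonempty. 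It is also compact, being a closed subset of the simplex. Hence the maximum is attained and finite on $[0,1]$.

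\textbf{Monotonicity.} If $\epsilon\leq\epsilon'$, every $\epsilon$-CCE is an $\epsilon'$-CCE. The feasible set only grows, so the maximum cannot decrease.

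\textbf{Concavity.} Take $\epsilon_1,\epsilon_2\in[0,1]$ and $\lambda\in[0,1]$. Let $\pi_1,\pi_2$ be the corresponding maximizers. Then $\pi_\lambda=\lambda\pi_1+(1-\lambda)\pi_2$ lies in $\Delta^{\cA}$. Every deviation constraint is linear in $\pi$, so $\pi_\lambda$ satisfies each constraint with right-hand side $\lambda\epsilon_1+(1-\lambda)\epsilon_2$. The objective is also linear, so its value at $\pi_\lambda$ is $\lambda\,\ewf^{\rm CCE}(\epsilon_1)+(1-\lambda)\,\ewf^{\rm CCE}(\epsilon_2)$. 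This gives concavity.

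\textbf{Piecewise linearity.} This is the step needing the most care, since concavity alone does not give it. Use LP duality. Attach multipliers $y_{i,\hat a_i}\geq 0$ to the deviation constraints and a free multiplier $z$ to the normalization constraint. Strong duality holds because the primal is feasible and bounded, so
\begin{align*}
\ewf^{\rm CCE}(\epsilon)=\min_{(y,z)\in D}\; z+\epsilon\sum_{i,\hat a_i}y_{i,\hat a_i},
\end{align*}
where the dual feasible region $D$ does not depend on $\epsilon$. $D$ is a polyhedron, and for each $\epsilon\geq 0$ the minimum of this linear function over it is finite.

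\begin{itemize}
\item $D$ contains no line: the constraint $y\geq 0$ rules out any line in $y$, and the constraints $z\geq\sum_i\cU_i(\ba)-\sum y(\cdots)$ bound $z$ below for fixed $y$. So $D$ is pointed.
\item By the fundamental theorem of linear programming, the minimum of a bounded linear objective over a pointed polyhedron is attained at a vertex.
\item $D$ has finitely many vertices $v_1,\dots,v_m$. Therefore $\ewf^{\rm CCE}(\epsilon)=\min_k\rbr{z_k+\epsilon\, s_k}$ with $s_k=\sum y^{(k)}\geq 0$.
\end{itemize}

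This is a minimum of finitely many affine functions of $\epsilon$, so it is piecewise linear. It also gives a second proof of concavity, and of monotonicity since every slope $s_k\geq 0$.
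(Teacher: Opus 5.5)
Your proof is correct and takes essentially the same route as the paper. Both write the frontier as an LP whose feasible set grows with $\epsilon$, pass to the dual, whose feasible region does not depend on $\epsilon$, and express the value as a minimum of finitely many affine functions indexed by the dual's vertices. Your direct convex-combination proof of concavity and your explicit check that the dual polyhedron is pointed, which the paper uses without proof when it asserts a vertex optimum, tighten the argument without changing the approach.
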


\begin{lemma}
\label{lemma:slope-unbounded}
    The slope of the \Cref{def:exploitability-welfare-frontier} can be as small as zero and is unbounded above.
\end{lemma}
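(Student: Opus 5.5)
Lemma~\ref{lemma:slope-unbounded} makes two separate claims, so I will prove each with an explicit small game whose $\ewf^{\rm CCE}$ I can compute in closed form. By Lemma~\ref{lemma:pareto-frontier-concavity} the frontier is concave and piecewise linear, so on any interval of $\epsilon$ its slope is simply a difference quotient. To show that the slope \emph{can be} zero, I need one game with a flat segment. To show that it is \emph{unbounded above}, I need a family of games, indexed by a parameter $M$, whose slope near $\epsilon=0$ is at least of order $M$.

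\textbf{Slope zero.} The simplest construction is a game in which every player has a dominant action that also maximizes the sum of utilities. One choice is $N=1$ with $\cU_1(a)=1$ for one action and $0$ for the others. A more natural choice is a two-player game with constant payoffs. In either case the welfare-maximizing profile is already an exact CCE, so $\ewf^{\rm CCE}(\epsilon)$ equals the maximum welfare for every $\epsilon\in[0,1]$. The frontier is therefore constant and its slope is $0$. A natural alternative is to use Prisoner's Dilemma (Table~\ref{table:PD}) at large $\epsilon$: welfare saturates at its maximum once $\epsilon$ is large enough, which again gives slope zero beyond that point. I will take the trivial constant game as the primary argument because it needs no calculation.

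\textbf{Unbounded slope.} I will use a scaled two-player Prisoner's-Dilemma-type game with parameter $\delta\in(0,1)$:
\begin{itemize}
    \item $(C,C)\mapsto(1,1)$;
    \item $(C,D)\mapsto(0,1)$ and $(D,C)\mapsto(1,0)$, i.e.\ deviating from $C$ to $D$ does not change the deviator's payoff when the opponent plays $C$;
    \item $(D,D)\mapsto(\delta,\delta)$, so defection yields gain $\delta$ when the opponent plays $D$ (for comparison, $(C,D)$ gives the row player $0$ and $(D,D)$ gives $\delta$).
\end{itemize}
I would rather make the unique exact CCE a low-welfare profile and the deviation gains tiny, so the payoffs need tuning. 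The cleaner design is as follows. Each player $i$ chooses $a_i\in\{C,D\}$ with
\[
\cU_i(D,\cdot)=\cU_i(C,\cdot)+\delta\ \text{ whenever the opponent plays } D,
\]
and the two actions give equal utility to $i$ when the opponent plays $C$. Cooperation should be worth $1$ to the opponent, and the profile $(D,D)$ should give each player only $\delta$.

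Under this design I will do the following:
\begin{enumerate}
    \item Verify that at $\epsilon=0$ every CCE places mass only on outcomes whose welfare is $O(\delta)$. This is the delicate part.
    \item Verify that at $\epsilon=\delta$ the welfare-maximizing profile $(C,C)$, with welfare $2$, is feasible.
    \item Conclude that the average slope on $[0,\delta]$ is at least $(2-O(\delta))/\delta$, which tends to $\infty$ as $\delta\to0$.
    \item Use concavity from Lemma~\ref{lemma:pareto-frontier-concavity}: the right-derivative at $0$ is at least this average slope, so it is arbitrarily large.
\end{enumerate}

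\textbf{Main obstacle.} The hardest step is designing the payoffs so that the exact-CCE welfare is genuinely small while utilities stay in $[0,1]$. A CCE constraint compares $\EE_\pi[\cU_i(\ba)]$ with the utility of a fixed deviation. To force welfare down, deviating must earn at least a fixed amount against any mixture, while the high-welfare outcome must become feasible once $\epsilon$ is of order $\delta$.

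My first concrete attempt is a matching-pennies-like game. I add a "cooperate" action that is worth $1$ to both players but lets the opponent gain $\delta$ by deviating to an action that gives everyone $0$, and I check the CCE linear program directly. If this does not work, I fall back on computing $\ewf^{\rm CCE}$ exactly for the $2\times2$ family via its LP, which is straightforward for two actions. The right-derivative at $0$ then comes out as an explicit expression in the payoffs, which I will show diverges as $\delta\to0$.
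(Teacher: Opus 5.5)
Your zero-slope half is fine (any game whose welfare-maximizing profile is already an exact CCE works; the paper uses the Stag Hunt for this reason). Your steps 3--4 are also sound: by concavity, the right-derivative at $0$ is at least the average slope on $[0,\delta]$. The gap is in the unbounded-slope half, where the only concrete game you write down fails. In both of your designs, a player's payoff is the same for $C$ and $D$ whenever the opponent plays $C$; in the first one, $\cU_1(C,C)=\cU_1(D,C)=1$. Hence no unilateral deviation from $(C,C)$ gains anything, so $(C,C)$ is a pure Nash equilibrium and in particular an exact CCE. It is the profile you intend as the welfare maximizer (welfare $2$), so $\ewf^{\rm CCE}(0)=2$ and the frontier is constant on $[0,1]$. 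Your step 1 is therefore false, and this game witnesses slope zero rather than a large slope. Your fallbacks do not repair this. The ``matching-pennies-like'' game is not coherent as described, since a deviation to an action that gives everyone $0$ cannot gain $\delta$ over a payoff of $1$. And saying you will solve the LP for ``the $2\times 2$ family'' does not identify a family whose slope diverges, which is the whole content of the claim.

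The missing idea is that defection must be strictly profitable against every opponent action, so that any mass on $C$ is excluded at $\epsilon=0$, while the gain at some high-welfare profile is only of order $\delta$. For example, take $(C,C)\mapsto(1-\delta,1-\delta)$, $(C,D)\mapsto(0,1)$, $(D,C)\mapsto(1,0)$, $(D,D)\mapsto(\delta,\delta)$. Switching to $D$ raises player $i$'s expected payoff by exactly $\delta\Pr_{\pi}(a_i=C)$, so the unique CCE is $(D,D)$ with welfare $2\delta$. Meanwhile $(C,C)$ has exploitability $\delta$ and welfare $2-2\delta$, giving an average slope of at least $(2-4\delta)/\delta$ on $[0,\delta]$.

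The paper takes a slightly different route. It keeps a Prisoner's Dilemma in which $(C,C)$ is exploitable by the constant $0.4$, and sets the row player's payoff at $(C,D)$ to $0.2-\zeta$, just below its payoff $0.2$ at $(D,D)$. It then interpolates $(1-p)(D,D)+p(C,D)$, which has exploitability $\zeta p$ and welfare $0.4+(0.8-\zeta)p$, so the slope at $0$ is at least $(0.8-\zeta)/\zeta$. The high-welfare endpoint need not be the welfare maximizer; it only needs its profitable deviations to be tiny.
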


The proof is deferred to \Cref{section:proof-tradeoff}. By \Cref{lemma:pareto-frontier-concavity}, the \Cref{def:exploitability-welfare-frontier} is concave. Hence, its slope is non-increasing in $\epsilon$ wherever it is differentiable. The same concavity argument also yields the following lower bound on the price of stability \citep{nisan2007algorithmic-game-theory} of $\epsilon$-CCE.

\begin{corollary}
\label{corollary:price-of-stability}
For any finite game with positive optimal social welfare and any $\epsilon\in\sbr{0,1}$, the price of stability of $\epsilon$-CCE satisfies
\begin{align*}
    \frac{\max_{\pi\in \epsilon\text{-CCE}} \sum_{i=1}^N \EE_{\ba\sim\pi}\sbr{\cU_i(\ba)}}{\max_{\pi\in \Delta^{\cA}} \sum_{i=1}^N \EE_{\ba\sim\pi}\sbr{\cU_i(\ba)}}\geq \epsilon.
\end{align*}
\end{corollary}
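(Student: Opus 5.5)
The plan is to derive the corollary from the concavity in \Cref{lemma:pareto-frontier-concavity} together with the values of $\ewf^{\rm CCE}$ at the two endpoints $\epsilon=0$ and $\epsilon=1$.

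First, I will identify the endpoint values. Since every utility lies in $[0,1]$, each unilateral deviation gain $\EE_{\ba\sim\pi}[\cU_i(\hat a_i,\ba_{-i})-\cU_i(\ba)]$ is at most $1$. Hence every $\pi\in\Delta^{\cA}$ is a $1$-CCE, and so $\ewf^{\rm CCE}(1)=\max_{\pi\in\Delta^{\cA}}\sum_{i}\EE_{\ba\sim\pi}[\cU_i(\ba)]=:W^*>0$, which is exactly the denominator in the statement. At the other endpoint, a CCE always exists, because every NE is a CCE and Nash's theorem applies. The feasible set of $\ewf^{\rm CCE}(0)$ is therefore nonempty. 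Since welfare is nonnegative, $\ewf^{\rm CCE}(0)\ge 0$.

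Next, I apply concavity on $[0,1]$. Writing $\epsilon=(1-\epsilon)\cdot 0+\epsilon\cdot 1$, concavity gives
\begin{align*}
\ewf^{\rm CCE}(\epsilon)\ \ge\ (1-\epsilon)\,\ewf^{\rm CCE}(0)+\epsilon\,\ewf^{\rm CCE}(1)\ \ge\ \epsilon\, W^*.
\end{align*}
Dividing by $W^*>0$ yields the claimed ratio bound $\ge\epsilon$.

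The main point to verify is that concavity, as asserted in \Cref{lemma:pareto-frontier-concavity}, holds on the closed interval including the endpoint $\epsilon=0$, where the feasible set must be nonempty. This is where the existence of an exact CCE enters. If one prefers not to rely on the lemma, the inequality can be checked directly. Take $\pi_0$ to be a welfare-maximizing $0$-CCE and $\pi_1$ a welfare-maximizing joint strategy, and form the mixture $\pi=(1-\epsilon)\pi_0+\epsilon\pi_1$. The deviation gain is linear in $\pi$, so the gain under $\pi$ is at most $(1-\epsilon)\cdot 0+\epsilon\cdot 1=\epsilon$; thus $\pi$ is an $\epsilon$-CCE. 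Its welfare is at least $\epsilon W^*$, which gives the same conclusion. Either route is short, and I expect no real obstacle beyond confirming this linearity and the nonemptiness at $\epsilon=0$.
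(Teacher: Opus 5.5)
Your proof is correct and takes the same route as the paper, which says only that the corollary follows from the concavity of $\ewf^{\rm CCE}$ in \Cref{lemma:pareto-frontier-concavity}. Your argument supplies the details the paper leaves implicit: $\ewf^{\rm CCE}(1)$ equals the unconstrained welfare optimum, $\ewf^{\rm CCE}(0)\ge 0$ because a CCE exists, and your direct mixture argument is the convexity reasoning behind that concavity, also valid on its own.
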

\Cref{corollary:price-of-stability} follows directly from the concavity of the EWF function.

\subsection{Finding a Pareto Optimal Strategy}
\label{section:binary-search}

A natural next question is: given a tolerance $\epsilon\geq 0$ on exploitability, how do we compute a strategy that maximizes social welfare among all $\epsilon$-approximate CCEs? We show that this can be computed efficiently using a variant of our MASE framework. Specifically, we solve the following weighted objective:
\begin{equation}
    \argmin_{\pi\in \Delta^{\cA}}\max_{S\in \cS} \max_{\hat\ba_S\in\cA_S} \frac{w_S}{|S|}\sum_{i\in S} \EE_{\ba\sim \pi}\sbr{\cU_i\rbr{\hat\ba_S, \ba_{-S}} - \cU_i\rbr{\ba}}, \label{eq:weighted-mase}
\end{equation}
where $\bw\in\RR^{\cS}$ is a vector of non-negative weights. We have the following lemma.
\begin{lemma}
\label{lemma:equivalence-tradeoff}
    For any $\epsilon > 0$, computing the CCE with exploitability no more than $\epsilon$ that maximizes social welfare is equivalent to \Cref{eq:weighted-mase} by setting $\cS=\cbr{\cbr{i}}_{i\in [N]}\cup \cbr{[N]}$ and using the weights:
    \begin{align*}
        w_S=\begin{cases}
            w & \text{if } |S|=1 \\
            1-w & \text{if } S=[N]
        \end{cases}
    \end{align*}
    for some $w \in [0, 1)$. Conversely, solving \Cref{eq:weighted-mase} with these parameters corresponds to finding a point on the Pareto frontier of social welfare and exploitability.
\end{lemma}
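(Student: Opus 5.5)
The plan is to write the weighted objective \Cref{eq:weighted-mase} with $\cS=\cbr{\cbr{i}}_{i\in[N]}\cup\cbr{[N]}$ explicitly, and then recognize it as a scalarization of the two-criterion problem (exploitability, social welfare). For a singleton $S=\cbr{i}$ the inner term is $w\max_{\hat a_i}\EE_{\ba\sim\pi}\sbr{\cU_i(\hat a_i,\ba_{-i})-\cU_i(\ba)}$. Maximizing over $i$ gives $w\cdot\mathrm{Expl}(\pi)$, where $\mathrm{Expl}$ is the exploitability defining $\epsilon$-CCE. For $S=[N]$ the deviation $\hat\ba$ replaces $\ba$ entirely, so the term becomes $\frac{1-w}{N}\rbr{\mathrm{OPT}-\mathrm{SW}(\pi)}$. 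Here $\mathrm{OPT}=\max_{\hat\ba}\sum_i\cU_i(\hat\ba)$ is a constant independent of $\pi$ and $\mathrm{SW}(\pi)=\sum_i\EE_{\pi}\cU_i(\ba)$. Hence \Cref{eq:weighted-mase} is exactly
\begin{align*}
\min_{\pi\in\Delta^{\cA}}\max\cbr{w\,\mathrm{Expl}(\pi),\ \tfrac{1-w}{N}\rbr{\mathrm{OPT}-\mathrm{SW}(\pi)}}.
\end{align*}
This is a Chebyshev (max-type) scalarization of minimizing $\mathrm{Expl}$ and maximizing $\mathrm{SW}$.

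For the forward direction, fix $\epsilon>0$. Let $\pi_\epsilon$ attain $\ewf^{\rm CCE}(\epsilon)$, and set $W=\ewf^{\rm CCE}(\epsilon)$. The weight should equalize the two terms at $\pi_\epsilon$, i.e. $w\epsilon=\frac{1-w}{N}(\mathrm{OPT}-W)$, which gives $w=\frac{(\mathrm{OPT}-W)/N}{\epsilon+(\mathrm{OPT}-W)/N}\in[0,1)$ since $\epsilon>0$. If $W=\mathrm{OPT}$, then $w=0$ and any welfare-optimal $\pi$ works. The claim to verify is that every minimizer of the scalarized problem with this $w$ is an $\epsilon$-CCE maximizing welfare (when $\mathrm{Expl}(\pi_\epsilon)<\epsilon$ one may instead restrict to minimizers satisfying the constraint).

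The argument runs as follows. The optimal value $v^*$ satisfies $v^*\le w\epsilon$, because $\pi_\epsilon$ gives value $\le\max\cbr{w\epsilon,w\epsilon}$. Conversely, suppose $\pi$ had value $<w\epsilon$. Then $\mathrm{Expl}(\pi)<\epsilon$ and $\mathrm{SW}(\pi)>W$, contradicting the definition of $W$ (the case $w=0$ is handled separately). So $v^*=w\epsilon$. Any minimizer then has $\mathrm{Expl}\le\epsilon$ (for $w>0$) and $\mathrm{SW}\ge W$, hence is welfare-optimal in $\epsilon$-CCE.

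For the converse, take any $w\in[0,1)$ and any minimizer $\pi^w$. Suppose some $\pi'$ weakly dominated $\pi^w$, with $\mathrm{Expl}(\pi')\le\mathrm{Expl}(\pi^w)$ and $\mathrm{SW}(\pi')\ge\mathrm{SW}(\pi^w)$, one of them strict. Then $\pi'$ is also a minimizer. This shows $\pi^w$ is at least weakly Pareto optimal. To obtain that $\pi^w$ lies on the EWF, I would argue via the structure from \Cref{lemma:pareto-frontier-concavity}. Set $\epsilon'=\mathrm{Expl}(\pi^w)$. If $\mathrm{SW}(\pi^w)<\ewf^{\rm CCE}(\epsilon')$, a welfare-better $\pi''$ with the same exploitability would reduce the welfare term. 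When the welfare term is the binding one, this strictly lowers the max, a contradiction. When the exploitability term binds instead, mixing $\pi''$ slightly toward lower exploitability (using concavity and continuity) strictly lowers both terms.

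The main obstacle is exactly this tie-breaking and binding-term issue: ensuring strict rather than weak Pareto optimality and handling the boundary cases $w=0$ and $\mathrm{Expl}(\pi_\epsilon)<\epsilon$. I would address it with the mixing argument along the concave frontier, using that both $\mathrm{Expl}$ (convex) and $\mathrm{SW}$ (linear) behave well under convex combinations of joint strategies.
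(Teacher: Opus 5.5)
Your proposal is correct and follows essentially the paper's route: the same rewriting of the objective as $\max\{w\,\mathrm{Expl}(\pi),(1-w)g\}$ with $g=\frac{1}{N}(\mathrm{OPT}-\mathrm{SW}(\pi))$, the same equalizing weight $w=g^*/(\epsilon+g^*)$, and, for the converse, the same case split in which one mixes toward a CCE to strictly lower exploitability when that term is tight. Your explicit handling of $w=0$ is more careful than the paper, whose step ``exploitability $>\epsilon$ forces value $>w\epsilon$'' silently needs $w>0$; for the converse, however, aim only for the paper's conclusion that no $\pi'$ with $\mathrm{Expl}(\pi')\le\mathrm{Expl}(\pi^w)$ has larger welfare, because strict Pareto optimality of every minimizer is false in general (two welfare-optimal strategies with different nonpositive exploitabilities both attain the minimal value $0$ for every $w$).
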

The proof is postponed to \Cref{section:proof-tradeoff}. In \Cref{fig:tradeoff-1}, we plot the exploitability and social welfare of the strategy induced by varying $w$. For the Stag Hunt, because one of the Nash equilibria already attains maximal social welfare, the social welfare remains fixed at its optimal value for all $w \in [0, 1)$.

\begin{figure}[h]
    \centering
    \includegraphics[width=0.95\linewidth]{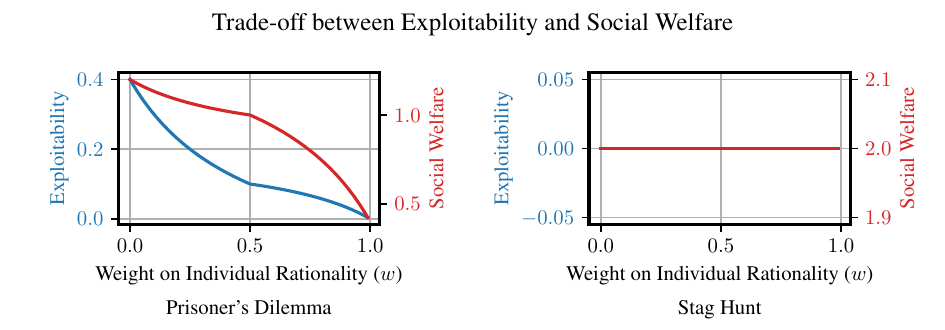}
    \caption{The trade-off between exploitability and social welfare in Prisoner's Dilemma and the Stag Hunt.}
    \label{fig:tradeoff-1}
\end{figure}

With \Cref{lemma:equivalence-tradeoff}, we can compute $\ewf^{\rm CCE}$ by solving \Cref{eq:weighted-mase} for different values of $w$. This also yields an efficient algorithm (fixed-parameter tractable in the treewidth of the \Cref{def:utility-dependence-graph}, as in \Cref{section:computation-MASE}) for computing the socially optimal approximate CCE. 

Fix a target tolerance $\epsilon>0$, We can recover an optimal $\epsilon$-approximate CCE as follows.
By \Cref{lemma:equivalence-tradeoff}, each choice of $w\in[0,1)$ induces a point on the \Cref{def:exploitability-welfare-frontier}, with the resulting solution attaining some exploitability level $\epsilon'$. Moreover, $\epsilon'$ decreases monotonically as $w$ increases. This monotone correspondence suggests a simple approach: perform a binary search over $w$ to find the optimal strategy whose exploitability does not exceed the target $\epsilon$. The details are given in \Cref{alg:binary-search}.

\begin{algorithm}[h]
\caption{Binary Search}
\label{alg:binary-search}
\SetAlgoLined

\KwInput{The exploitability tolerance $\epsilon$ and the floating-point tolerance ${\rm EPS}$.}
\KwOutput{$\ewf^{\rm CCE}(\epsilon)$.}

$l\gets 0, r\gets 1$.

\While{$r-l>{\rm EPS}$}{
    $w\gets \frac{l+r}{2}$

    Solve \Cref{eq:weighted-mase} associated with $w$.

    Compute the exploitability $\epsilon'$ and the social welfare $sw$ of the resulting strategy.

    \eIf{$\epsilon' \leq \epsilon$}{
        $sw^{\rm final}\gets sw$
        
        $r \gets w$
    }{
        $l\gets w$
    }
}

\KwReturn{$sw^{\rm final}$.}

\end{algorithm}

Despite the tractability of our approach when \Cref{def:utility-dependence-graph} has a bounded treewidth, computing the \Cref{def:exploitability-welfare-frontier} is intractable in general succinct games: even evaluating the value at $\epsilon=0$ (\emph{i.e.}, maximum welfare over exact CCEs) is {\sf NP}-hard in a number of standard succinct game classes.
\begin{lemma}
\label{lemma:ewf-hardness}
    Computing $\ewf^{\rm CCE}(0)$ is {\sf NP}-hard in each of the following succinct games:
    \begin{enumerate}
        \item Bounded-degree, bipartite graphical games.
        \item Polymatrix games.
        \item Hypergraphical games.
        \item Congestion games.
        \item Local effect games.
        \item Facility location games.
        \item Network design games.
        \item Scheduling games.
    \end{enumerate}
\end{lemma}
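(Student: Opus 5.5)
The natural route is to transfer known hardness results for the welfare-maximizing (exact) correlated equilibrium to the coarse correlated setting. The key source is \citet{papadimitriou2008computing-against-hope}. They show that for each of the listed succinct classes, computing a CE maximizing social welfare (the ``optimal CE'' problem) is {\sf NP}-hard. Their reductions are essentially of the following form: given an instance of an {\sf NP}-hard problem (e.g.\ \textsc{Max-Independent-Set}, \textsc{3SAT}, or a pure-equilibrium existence question), build a game in which \emph{every} pure profile of a certain kind is automatically an equilibrium. The welfare of that profile then encodes the combinatorial objective, and the optimum over correlated equilibria is attained by such pure profiles.

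The plan is to revisit each reduction and verify that it works verbatim with CCE in place of CE. I would first check the ordering $\text{CE}\subseteq\text{CCE}$, so that $\ewf^{\rm CCE}(0)\ge$ optimal CE welfare. Then I would argue the reverse inequality on the constructed instances. Specifically, I would show that the maximal-welfare distribution over $\Delta^{\cA}$ in these instances is supported on pure profiles that are already pure Nash equilibria (hence CE). Alternatively, I would show that the welfare upper bound used in their soundness argument only uses coarse (unilateral, unconditional) deviation constraints. In many of their constructions the soundness direction is ``any equilibrium with welfare $\ge k$ yields a certificate.'' That argument usually bounds welfare via the incentive constraint that a player prefers a fixed safe action, which is a CCE-type constraint. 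If so, $\ewf^{\rm CCE}(0)$ equals the optimal CE welfare on those instances, and {\sf NP}-hardness transfers.

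Concretely, for each class I would take the reduction from \citet{papadimitriou2008computing-against-hope} and do three things. First, recall the construction (e.g.\ a bipartite bounded-degree graphical game encoding independent sets, where each vertex player may ``enter'' and is punished by an edge player otherwise). Second, check completeness, i.e.\ that a good combinatorial solution gives a pure NE with the target welfare. Third, check soundness against coarse deviations: for any $\pi$ with welfare above the threshold, some player's expected gain from a fixed deviation $\hat a_i$ is positive unless the support encodes a good solution. Polymatrix and hypergraphical games can be reduced from graphical games, since the bounded-degree graphical instance is expressible in them. Congestion, local effect, facility location, network design and scheduling games follow their respective reductions similarly.

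The main obstacle is soundness under the weaker CCE constraints. A CCE could mix profiles so that deviations that are unprofitable unconditionally are still profitable conditionally, and so it might attain higher welfare than any CE. Where a reduction's soundness genuinely uses conditional (swap) deviations, I would modify the gadget. The first fix to try is to add a dominant ``opt-out'' action with a payoff exactly at the threshold, which makes the needed constraint a CCE constraint. If that fails, I would instead reduce directly from \textsc{Max-Independent-Set} to each class, using such safe-action gadgets.
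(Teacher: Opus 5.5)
Building on \citet{papadimitriou2008computing-against-hope} is the right instinct, but your proposal never supplies the argument that moves hardness from CE to CCE. Instead it defers to a per-class re-checking of reductions you reconstruct from memory, with a fallback of designing new gadgets for eight game classes. The missing idea is that the relevant hard problem is deciding whether a \emph{dominant outcome} exists: a joint action at which every player simultaneously attains $\max_{\ba}\cU_i(\ba)$. Theorem 5.10 of that paper shows this is {\sf NP}-hard in all eight classes. Set the threshold $M\coloneqq\sum_{i=1}^N\max_{\ba\in\cA}\cU_i(\ba)$, which is polynomial-time computable in these classes. Then both directions are immediate.
\begin{itemize}
\item If a dominant outcome exists, then as a pure strategy no player can gain by any unilateral deviation. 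So it is a CCE, and $\ewf^{\rm CCE}(0)=M$.
\item Any $\pi\in\Delta^{\cA}$ with welfare $M$ must be supported only on dominant outcomes. So if none exists, even the \emph{unconstrained} welfare maximum is below $M$, and hence $\ewf^{\rm CCE}(0)<M$.
\end{itemize}
The soundness direction uses no incentive constraints at all. So the obstacle you single out, a CCE reaching more welfare than any CE via coarse mixing, never arises, and no opt-out or safe-action gadget is needed.

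As written, your plan verifies soundness for no class, and some supporting claims are wrong. Polymatrix utilities are sums of pairwise terms, while graphical-game utilities are arbitrary functions of the neighbourhood. For example, $\cU_1=\ind(a_2=a_3)$ with binary actions is not pairwise-additive. So a bounded-degree graphical instance is not in general expressible as a polymatrix game, and that case does not follow ``from graphical games'' as you suggest. The recalled Max-Independent-Set-style constructions are also not what the cited hardness rests on. Replacing the case-by-case plan with the dominant-outcome reduction closes the gap uniformly for all eight classes.
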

The proof is postponed to \Cref{section:proof-tradeoff}. Still, it may be possible to evaluate $\ewf^{\rm CCE}(\epsilon)$ efficiently when $\epsilon$ is sufficiently large. For example, \citet{tsaknakis2007optimization-large-epsilon} gave a polynomial-time algorithm for computing an $\epsilon$-approximate Nash equilibrium (with exploitability at most $\epsilon$) for $\epsilon \geq 0.3393$, whereas computing approximate Nash equilibria in general is {\sf PPAD}-hard \citep{chen2006settling-ppad,daskalakis2009complexity-PPAD}. This suggests that, for large enough $\epsilon$, computing $\ewf^{\rm CCE}(\epsilon)$ may be computationally tractable.

\section{Conclusion}

In this work, we introduced the Minimum Average-Strong Equilibrium (MASE), a tractable solution concept that accounts for multilateral deviations by minimizing each coalition’s average incentive to deviate. We established that computing an approximate MASE is {\sf NP}-hard even with singleton coalitions and proved a lower bound showing unavoidable exponential dependence on the treewidth of the \Cref{def:utility-dependence-graph}. Complementing these hardness results, we design an algorithm that combines a correlator-deviator meta-game with FTPL updates and dynamic programming over a tree decomposition. Its exponential dependence on the treewidth matches the lower bound. We also showed how this framework can be used to compute the exploitability welfare frontier (EWF). Empirically, MASE is substantially more robust to coalition deviations than standard baselines, while improving social welfare in canonical games and often incurring little to no increase in unilateral exploitability. Finally, we showed that minimizing the minimum deviation gain within each coalition, \emph{i.e.}, computing a minimum strong CCE, is also {\sf NP}-hard, even when the joint action space is polynomial in the number of players. This further motivates our focus on minimizing average coalitional gain.

\bibliographystyle{plainnat}
\bibliography{main}

\newpage
\appendix

\section{Linear Programming for Solving \Cref{eq:MASE}}
\label{section:lp}

\Cref{eq:MASE} can be solved by the following linear programming
\begin{align*}
    &\min_{\pi,w} w\\
    &w\geq \frac{1}{|S|}\sum_{i\in S} \EE_{\ba\sim \pi}\sbr{\cU_i\rbr{\hat\ba_S, \ba_{-S}} - \cU_i\rbr{\ba} }~~~\forall S\in\cS,\hat\ba_S\in\cA_S\\
    &\pi(\ba)\geq 0~~~\forall \ba\in\cA\\
    &\sum_{\ba\in\cA} \pi(\ba)=1.
\end{align*}
Since
\begin{align*}
    &\EE_{\ba\sim\pi}\sbr{\cU_i\rbr{\hat\ba_S, \ba_{-S}}}=\sum_{\ba\in\cA} \pi(\ba)\cU_i(\hat\ba_S,\ba_{-S})\\
    &\EE_{\ba\sim\pi}\sbr{\cU_i\rbr{\ba}}=\sum_{\ba\in\cA} \pi(\ba)\cU_i(\ba)
\end{align*}
are linear in $\pi$, the linear programming above is valid. Note that the linear program contains exponentially many variables ($\pi \in \Delta^{\cA}$), so its complexity is necessarily exponential in $N$.

\section{Omitted Proofs in \Cref{section:relation}}
\label{section:relation-proof}

\restate{lemma:non-existence-coalition-proof}

\begin{proof}
    We construct a counterexample using a variant of the game described by \citet{bernheim1987coalition-proof}. Consider three players dividing five coins. Each player $i$ chooses an action from the set of all possible strict rankings over the players, $\cA_i = \cbr{ (P_1 \succ P_2 \succ P_3), \dots, (P_3 \succ P_2 \succ P_1) }$. If two or more players vote for the same ranking, the payoffs are distributed according to that ranking: the first-ranked player receives 3 coins, the second-ranked player receives 2 coins, and the third-ranked player receives 0. If no majority agreement is reached (\emph{i.e.}, all three players vote for distinct rankings), all players receive 0.

    Let $\pi^*$ be any CCE of this game. By definition, any coalition-proof CCE must also be a CCE. We prove the non-existence of a coalition-proof CCE by demonstrating that any valid CCE $\pi^*$ admits a self-enforcing deviation by a coalition.

    Assume without loss of generality that the expected utilities of players 1, 2, and 3 are non-increasing:
    \begin{align*}
        \EE_{\ba\sim\pi^*}\sbr{\cU_1(\ba)}\geq \EE_{\ba\sim\pi^*}\sbr{\cU_2(\ba)}\geq \EE_{\ba\sim\pi^*}\sbr{\cU_3(\ba)}.
    \end{align*}

    Consider the coalition $S=\cbr{2, 3}$ and a deviation strategy where both members deterministically vote for the ranking $P_2 \succ P_3 \succ P_1$. Under this deviation, players 2 and 3 form a majority regardless of player 1's action, guaranteeing themselves payoffs of 3 and 2, respectively. This deviation is both valid and self-enforcing:
    \begin{enumerate}
        \item \textbf{Improvement:} The total sum of utilities in any outcome is at most 5. Given the non-increasing order of utilities, it must hold that player 2's original expected utility is strictly less than 3 ($ \EE_{\ba\sim\pi^*}\sbr{\cU_2(\ba)}<3$) and player 3's is strictly less than 2 ($ \EE_{\ba\sim\pi^*}\sbr{\cU_3(\ba)}<2$). Thus, the deviation strictly improves the payoff for both coalition members.
        \item \textbf{Self-Enforcement:} The deviation constitutes a coalition-proof CCE in the reduced game restricted to coalition $S$. Player 2 has no incentive to deviate further, as player 2 already achieves the maximum possible utility of 3 coins. Player 3 also has no incentive to deviate. Any further deviation by Player 3 breaks the agreement with Player 2; thus, Player 3 can only accrue positive utility by matching Player 1's strategy, $\pi^*_1$. However, if coordinating with $\pi^*_1$ were to yield a utility strictly greater than 2, Player 3 would have had a profitable unilateral deviation from the original equilibrium $\pi^*$. This contradicts the assumption that $\pi^*$ is a CCE.
    \end{enumerate}
    
    Since there exists a valid, self-enforcing deviation for the coalition $\cbr{2, 3}$, $\pi^*$ cannot be a coalition-proof CCE. As this holds for any CCE $\pi^*$, no coalition-proof CCE exists in this game. \qedhere
\end{proof}

\section{Omitted Proofs in \Cref{section:hardness}}
\label{section:hardness-proof}

This section presents the omitted proofs in \Cref{section:hardness}.

\subsection{Proof of \Cref{theorem:complexity-strong-equilibrium}}
\label{section:proof-mase-hardness}

\restate{theorem:complexity-strong-equilibrium}

\begin{proof}

We will introduce the allocation problem ({\sf NP}-hard) and show that it can be reduced to computing the correlated strong equilibrium.

\begin{definition}[Allocation Problem]
    There are $n$ agents and $m$ goods. An assignment $X\colon [m]\to [n]$ is a mapping from each good to an agent. Agent $i$'s utility is $u_i(X)$ for an assignment $X$. A stochastic allocation $\bp\in\Delta^{[m]^{[n]}}$ is a distribution over all possible assignments. The egalitarian social welfare (ESW) maximization is defined as
    \begin{align}
        \max_{\bp\in\Delta^{[m]^{[n]}}} \min_{i\in [n]} \sum_{X\in [m]^{[n]}} p(X)u_i(X).\label{eq:def-allocation}
    \end{align}
\end{definition}

    For any allocation problem, we can create a game with $N=n+m$ players. The action set of player $i\leq n$ is $\cA_i=\cbr{0,1}$, while the action set of player $j>n$ is $\cA_j=[n]$. For any joint action $\ba\in\cA$, $\cU_i(\ba)=u_i(\ba_{-[n]})$ for $i\leq n$ when $a_1=a_2=\dots=a_n$, otherwise $\cU_i(\ba)=-u_i(\ba_{-[n]})$. For $j>n$, $\cU_j(\ba)=0$. Moreover, let $\cS=\cbr{\cbr{1}, \cbr{2},\dots,\cbr{n}}$. We further define
    \begin{align*}
        a\coloneqq& \sum_{\substack{\ba\in\cA\colon\\a_2=a_3=\dots=a_n=0}}\pi(\ba)u_i\rbr{\ba_{-[n]}}\\
        b\coloneqq& \sum_{\substack{\ba\in\cA\colon\\a_2=a_3=\dots=a_n=1}}\pi(\ba)u_i\rbr{\ba_{-[n]}}\\
        c\coloneqq& -\sum_{\substack{\ba\in\cA\colon\\\exists i,j\in [n]\setminus\cbr{1},a_i\not=a_j}}\pi(\ba)u_i\rbr{\ba_{-[n]}}.
    \end{align*}
    Then, the gap of player $1$ is lower bounded by
    \begin{align*}
        \max_{\hat a_1\in\cbr{0,1}}\EE_{\ba\sim\pi}\sbr{\cU_i\rbr{\hat a_1, \ba_{-1}} - \cU_i\rbr{\ba}}\geq& \max\rbr{a,b}+c - (a+b+c)=-\min\rbr{a,b}.
    \end{align*}
    The equation holds when $a_1$ is always equal to $a_2$ when $a_2=a_3=\dots=a_n$.
    Therefore, the optimal strategy $\pi$ should satisfy $\pi_{[n]}((0,\dots 0))=\pi_{[n]}((1,\dots 1))=\frac{1}{2}$. 

    Then, \Cref{eq:MASE} is equivalent to 
    \begin{align*}
        \min_{\pi_{[m]}\in \Delta^{[m]^{[n]}}} \max_{i\in [n]} \sum_{\ba\in \cA} - \pi(\ba) u_i(\ba_{-[n]})=\min_{\pi_{[m]}\in \Delta^{[m]^{[n]}}} \max_{i\in [n]} \sum_{\ba_{-[n]}\in [m]^{[n]}} - \pi_{-[n]}(\ba_{-[n]}) u_i(\ba_{-[n]}),
    \end{align*}
    which is equivalent to \Cref{eq:def-allocation}. Finally, according to \citet[Corollary 1]{kawase2020max-hardness-allocation}, it is {\sf NP}-hard to approximate \Cref{eq:def-allocation} up to $1-\frac{1}{e}$.

    The hard instance constructed in \citet{kawase2020max-hardness-allocation} satisfies that $\max_i\max_{X\in [m]^{[n]}} u_i(X)=Poly(n,m)$ and the solution to \Cref{eq:def-allocation} is $1$. Therefore, $\text{Poly}(N,\frac{1}{\epsilon})$ algorithm does not exist unless {\sf P=NP} for solving an $\epsilon$-MASE.
\end{proof}

\subsection{Proof of \Cref{theorem:tree-width-important}}
\label{section:tree-width-proof}

\restate{theorem:tree-width-important}

\begin{proof}
    According to \citet{lokshtanov2011known-treewidth-complexity}, under SETH, $q$-coloring cannot be solved in $\cO((q-\zeta)^{\rm tw}\cdot \text{Poly}(|I|))$ for arbitrary graph $G$, when a tree decomposition of width $tw$ is given.\footnote{As summarized in \citet{esmer2024fundamental-q-coloring}, the proof of the $q$-coloring complexity implicitly implies that the complexity is lower bounded by $\cO(q^{tw}\cdot \text{Poly}(|I|))$, even though a tree decomposition of width $tw$ is given. In other words, aside from computing a tree decomposition, the $q$-coloring itself has an intrinsic computational barrier.} In the sequel, we construct a game such that computing $\frac{1}{N}$-approximate \Cref{eq:MASE} is equivalent to determining the $q$-coloring.

    For any $q$-coloring problem on $G=(V, E)$, we will construct a game with $N=|V|+|E|$ players. For each player $i\leq |V|$, the action set $\cA_i=\cbr{1,2,\dots,q}$ and the utility function $\cU_i\equiv 0$ is a constant function equal to zero. For each player $j>|V|$, the action set is $\cbr{1}$ and the utility function is $\cU_j(\ba)=\ind\rbr{a_{e_{j-|V|,1}}\not=a_{e_{j-|V|,2}}}$, where $(e_{j-|V|, 1}, e_{j-|V|, 2})$ is the $(j-|V|)^{th}$ edge in $E$ and $\ind$ is the indicator function (equals one when the argument is true and otherwise zero). In this game, $\cS=\cbr{\cbr{i_1,i_2,j+|V|}\given e_j=(i_1,i_2)}$.

    Firstly, for any proper coloring $\bc\in [q]^{|V|}$, the associated pure strategy is $\pi^{\bc}$, where $\pi_i^{\bc}(a_i)=1$ if and only if $a_i=c_i$ and $0$ otherwise. It satisfies \Cref{eq:MASE}. Because for any $\cS\ni S=(i_1,i_2,j+|V|)$, the maximum of $\frac{1}{|S|}\sum_{i\in S} \cU_i(\pi)$ is $\frac{1}{3}$, which is attained when the colors of node $i_1,i_2$ are different. Therefore, $\pi^{\bc}$ obtains the maximum for every coalition $S\in\cS$, which implies the satisfaction of \Cref{eq:MASE}.

    Secondly, for any joint strategy $\pi\in \Delta^{\cA}$ satisfying \Cref{eq:MASE}, we have 
    \begin{align*}
        \min_{S\in\cS} \frac{\sum_{i\in S} \cU_i(\pi)}{|S|} \leq \frac{1}{|\cS|} \sum_{S\in\cS} \frac{\sum_{i\in S} \cU_i(\pi)}{|S|} =& \sum_{\ba\in\cA} \pi(\ba) \frac{1}{|\cS|} \sum_{S\in\cS} \frac{\sum_{i\in S} \cU_i(\ba)}{|S|}\\
        \leq& \frac{1}{3}-\frac{1}{3|E|}\sum_{\ba\in\cA} \pi(\ba) \ind\rbr{\ba_{[|V|]}\text{ is not a proper coloring}}.
    \end{align*}
    Because there must exist at least one edge with both of its nodes in the same color for any improper coloring. On the other hand, let $\hat\pi=\pi^{\bc}$ for some proper coloring $\bc$. Then, for any $S\in\cS$, we have
    \begin{align*}
        \frac{\sum_{i\in S} \cU_i(\hat\pi)}{|S|} = \frac{1}{3}.
    \end{align*}
    Therefore, the approximation error of \Cref{eq:MASE} is at least
    \begin{align*}
        \frac{1}{3|E|}\sum_{\ba\in\cA} \pi(\ba) \ind\rbr{\ba_{[|V|]}\text{ is not a proper coloring}}
    \end{align*}
    for any joint strategy $\pi$. When we get $\frac{1}{9N^2}$ approximation, since $|E|\leq N^2$, we have
    \begin{align*}
        \sum_{\ba\in\cA} \pi(\ba) \ind\rbr{\ba_{[|V|]}\text{ is not a proper coloring}}\leq \frac{1}{3}.
    \end{align*}
    Therefore, when sampling $\ba\sim \pi$, we will get a proper coloring with probability at least $\frac{2}{3}$, which is in complexity class {\sf RP}. As a result, when {\sf P=RP}, the time complexity of computing \Cref{eq:MASE} is at least $O^*(A^{tw(G)})$, where $G$ is the trust graph and $A$ is the size of the maximal action set.
\end{proof}

\subsection{Proof of \Cref{theorem:strong-CCE-hardness}}
\label{section:strong-CCE-hardness-proof}

\restate{theorem:strong-CCE-hardness}

\begin{proof}
    The proof is inspired by \citet[Theorem 1]{borgs2010myth-folk-theorem}, who showed that approximating
    \begin{align*}
        \min_{\pi_1\in\Delta^{\cA_1}}\min_{\pi_2\in\Delta^{\cA_2}}\max_{\pi_3\in\Delta^{\cA_3}} \EE_{\ba\sim (\pi_1,\pi_2,\pi_3)}\sbr{\cU_3(\ba)}
    \end{align*} 
    is {\sf NP}-hard via a reduction from 3-colorability.

    We similarly reduce 3-colorability to the problem of finding an approximate \Cref{eq:strong-CCE}. Given any graph $G=(V,E)$ with $|V|\geq 4$, we construct a game with $N=3|V|+2$ players and $\cS=\cbr{S}$, where $S=\cbr{1,2,\dots, 3|V|, 3|V|+2}$. Thus, there is only a single coalition of interest. The objective of strong CCE is therefore
    \begin{align}
        &\min_{\pi\in \Delta^{\cA}} ~\max_{\hat\pi_S\in\Delta^{\cA_S}} ~\min_{i\in S} \EE_{\ba\sim \pi, \hat\ba_S\sim \hat\pi_S}\sbr{\cU_i\rbr{\hat\ba_S, \ba_{-S}} - \cU_i\rbr{\ba} }\notag\\
        \overset{(i)}{\Leftrightarrow}& \min_{\pi\in \Delta^{\cA}} ~\max_{\hat\pi_S\in\Delta^{\cA_S}} ~\min_{\balpha\in \Delta^{\cS}} \sum_{i\in S} \alpha_i~ \EE_{\ba\sim \pi, \hat\ba_S\sim \hat\pi_S}\sbr{\cU_i\rbr{\hat\ba_S, \ba_{-S}} - \cU_i\rbr{\ba} }\notag\\
        \overset{(ii)}{\Leftrightarrow}& \min_{\pi\in \Delta^{\cA}} ~\min_{\balpha\in \Delta^{\cS}} ~\max_{\hat\pi_S\in\Delta^{\cA_S}} \sum_{i\in S} \alpha_i~ \EE_{\ba\sim \pi, \hat\ba_S\sim \hat\pi_S}\sbr{\cU_i\rbr{\hat\ba_S, \ba_{-S}} - \cU_i\rbr{\ba} }. \label{eq:strong-CCE-proof-1}
    \end{align}
    $(i)$ holds because
    \begin{align*}
        \sum_{i\in S} \alpha_i~ \EE_{\ba\sim \pi, \hat\ba_S\sim \hat\pi_S}\sbr{\cU_i\rbr{\hat\ba_S, \ba_{-S}} - \cU_i\rbr{\ba} }
    \end{align*}
    is linear in $\balpha$, so the minimum is attained at an extreme point. In other words, $\balpha$ places all of its mass on a single player in $S$. $(ii)$ follows from the minimax theorem.

    Intuitively, we construct the game so that $\pi$, $\balpha$, and $\hat\pi_S$ correspond to the strategies of three different agents. Agent 1 and agent 2 each choose a vertex in $V$ together with a color in $[3]$. Agent 3 then chooses either agent 1 or agent 2 and tries to predict the vertex selected by that agent. We will show that, under the utility functions defined below, the value of \Cref{eq:strong-CCE-proof-1} differs depending on whether $G$ is 3-colorable. Consequently, if we can approximate this value with sufficient accuracy, then we can decide whether $G$ is 3-colorable.
    
   The action sets of the game are defined by
    \begin{align*}
        \cA_i=\begin{cases}
            3|V|&i=3|V|+1\\
            2|V|&i=3|V|+2\\
            1&i\leq 3|V|,
        \end{cases}
    \end{align*}
    for all $i\in [N]$. Hence, every joint action can be written as $((v_1, c_1), (v_3, i_3))$, where $v_1,v_3\in V$, $c_1\in[3]$, and $i_3\in[2]$. Thus, agent 1 proposes a vertex $v_1$ together with a color $c_1$, while agent 3 predicts that player $i_3$ proposed vertex $v_3$. In addition, for each player $i\leq 3|V|$, we encode its index as $(v_2,c_2)$ with $v_2\in V$ and $c_2\in[3]$. This represents the proposal of agent 2. For any joint action $((v_1,c_1),(v_3,i_3))$, the utility functions are defined by
    \begin{align*}
        &\cU_{3|V|+1}\rbr{((v_1, c_1), (v_3, i_3))} = \cU_{3|V|+2}\rbr{((v_1, c_1), (v_3, i_3))} = 1\\
        &\cU_{(v_2, c_2)}\rbr{((v_1, c_1), (v_3, i_3))}=\begin{cases}
            1&v_1=v_2,~~c_1\neq c_2\\
            1&(v_1, v_2)\in E,~~c_1= c_2\\
            1&v_{i_3} = v_3\\
            0&\text{Otherwise.}
        \end{cases}
    \end{align*}
    In words, a player receives a reward $1$ whenever the proposals of agents 1 and 2 are inconsistent, or whenever agent 3 makes a correct prediction. Hence, in the corresponding \Cref{def:utility-dependence-graph}, the edge set is
    \begin{align*}
        \cE=\cbr{(i, 3|V|+1)}_{i=1}^{3|V|} \cup \cbr{(i, 3|V|+2)}_{i=1}^{3|V|} \cup \cbr{(3|V|+1, 3|V|+2)}.
    \end{align*}
    In other words, every one of the first $3|V|$ vertices is adjacent to both $3|V|+1$ and $3|V|+2$, and these latter two vertices are also adjacent. Hence, the resulting graph has treewidth $2$.

   Now fix any joint action $((v_1,c_1),(v_3,i_3))$. We can factor 
   \begin{align*}
       \pi(((v_1,c_1),(v_3,i_3))) = \pi_{-S}((v_1,c_1))\,\pi_S((v_3,i_3)\given (v_1,c_1)),
   \end{align*}
   and only the second factor $\pi_S((v_3,i_3)\given (v_1,c_1))$ affects the term $\cU_i(\ba)$ in \Cref{eq:strong-CCE-proof-1}. Since $\pi$ is maximizing $\cU_i(\ba)$, we may always choose
    \begin{align*}
        \pi_S((v_3, i_3)\given (v_1, c_1)) = \ind\rbr{v_1=v_3\text{ and } i_3=1}.
    \end{align*}
    Therefore, for every $i\leq 3|V|$, we have $\cU_i(\ba)=1$ in \Cref{eq:strong-CCE-proof-1}. For $i>3|V|$, we also have $\cU_i(\ba)=1$ by definition of the utility functions. Thus, $\cU_i(\ba)$ is constant in \Cref{eq:strong-CCE-proof-1} and can be removed from the optimization objective. We may therefore rewrite the objective as
    \begin{align}
        &\min_{\pi_{-S}\in \Delta^{\cA_{-S}}} ~\min_{\balpha\in \Delta^{\cS}} ~\max_{\hat\pi_S\in\Delta^{\cA_S}} \sum_{i\in S} \alpha_i~ \EE_{\ba_{-S}\sim \pi_{-S}, \hat\ba_S\sim \hat\pi_S}\sbr{\cU_i\rbr{\hat\ba_S, \ba_{-S}}}\notag\\
        \overset{(i)}{\Leftrightarrow}&\min_{\pi_{-S}\in \Delta^{\cA_{-S}}} ~\min_{\balpha\in \Delta^{[3|V|]}} ~\max_{\hat\pi_S\in\Delta^{\cA_S}} \sum_{i=1}^{3|V|} \alpha_i~ \EE_{\ba_{-S}\sim \pi_{-S}, \hat\ba_S\sim \hat\pi_S}\sbr{\cU_i\rbr{\hat\ba_S, \ba_{-S}}}. \label{eq:strong-CCE-proof-2}
    \end{align}
    $(i)$ holds because $\cU_i(\ba)=1$ for every $\ba\in\cA$ whenever $i>3|V|$. Hence, we may assume without loss of generality that $\alpha(3|V|+2)=0$, since $\cU_j(\ba)\leq \cU_i(\ba)=1$ for any $\ba\in\cA$, $i>3|V|$, and $j\leq 3|V|$.

    We now analyze \Cref{eq:strong-CCE-proof-2}. First, it is always at least $\frac{1}{|V|}$, because $\hat\pi_S$ (agent 3) can choose $i_3=1$ and select a vertex $v_3\in V$ uniformly at random. Then,
    \begin{align*}
        \Pr_{(v_1,c_1)\sim \pi_{-S}, (v_3, i_3)\sim \hat \pi_S}(v_1=v_3)=\sum_{v\in V} \rbr{\sum_{c_1=1}^3 \pi_{-S}((v, c_1))} \hat\pi_S((v, 1))=\frac{1}{|V|} \sum_{v\in V} \sum_{c_1=1}^3 \pi_{-S}((v, c_1)) = \frac{1}{|V|}.
    \end{align*}
    If $G$ is 3-colorable, then \Cref{eq:strong-CCE-proof-2} is exactly $\frac{1}{|V|}$: agent 1, represented by $\pi_{-S}$, and agent 2, represented by $\balpha$, both choose a uniformly random vertex and then assign to that vertex the color given by the same valid 3-coloring. Under this choice, the proposals of agents 1 and 2 are always consistent. Therefore, the only way for a utility term to equal $1$ is for agent 3 to predict the chosen vertex correctly, which occurs with probability exactly $\frac{1}{|V|}$.

    In contrast, suppose that $G$ is not 3-colorable. If either agent 1 or agent 2 assigns probability at least $\frac{1}{|V|}+\frac{1}{3|V|^2}$ to some vertex, then \Cref{eq:strong-CCE-proof-2} is at least $\frac{1}{|V|}+\frac{1}{3|V|^2}$ by letting agent 3 guessing that vertex deterministically.

    It remains to consider the case in which both agent 1 and agent 2 assign a probability at most $\frac{1}{|V|}+\frac{1}{3|V|^2}$ to every vertex. Then, for any $v_1\in V$,
    \begin{align*}
        \sum_{c_1=1}^3 \pi_S((v_1, c_1))\geq 1-(|V|-1)\rbr{\frac{1}{|V|} + \frac{1}{3|V|^2}}\geq \frac{2}{3|V|},
    \end{align*}
    and similarly,
    \begin{align*}
        \sum_{c_2=1}^3 \alpha((v_2,c_2))\geq \frac{2}{3|V|}.
    \end{align*}
    Using the shorthand
    \begin{align*}
        \pi_{-S}(v_1)=\sum_{c_1=1}^3 \pi_{-S}((v_1, c_1))
    \end{align*}
    for the marginal distribution on vertices, and similarly for $\balpha$, we obtain
    \begin{align*}
        &\Pr_{(v_1,c_1)\sim\pi_{-S}, (v_2,c_2)\sim \balpha}\rbr{v_1=v_2,~~c_1\neq c_2\text{ or } (v_1, v_2)\in E,~~c_1= c_2}\\
        =&\sum_{v_1, v_2\in V} \pi_1(v_1) \alpha(v_2) \Pr_{c_1\sim\pi_{-S}(\cdot\given v_1), c_2\sim\alpha(\cdot \given v_2)}\rbr{v_1=v_2,~~c_1\neq c_2\text{ or } (v_1, v_2)\in E,~~c_1= c_2}\\
        \geq& \frac{4}{9|V|^2}\sum_{v_1, v_2\in V} \Pr_{c_1\sim\pi_{-S}(\cdot\given v_1), c_2\sim\alpha(\cdot \given v_2)}\rbr{v_1=v_2,~~c_1\neq c_2\text{ or } (v_1, v_2)\in E,~~c_1= c_2\given v_1, v_2}\\
        \overset{(i)}{\geq}& \frac{4}{9|V|^2}.
    \end{align*}
    $(i)$ holds because
    \begin{align*}
        \sum_{v_1, v_2\in V} \Pr_{c_1\sim\pi_{-S}(\cdot\given v_1), c_2\sim\alpha(\cdot \given v_2)}\rbr{v_1=v_2,~~c_1\neq c_2\text{ or } (v_1, v_2)\in E,~~c_1= c_2\given v_1, v_2}
    \end{align*}
    can be interpreted as the expected number of violations over all pairs of coloring plans. Since $G$ is not 3-colorable, any two coloring plans must contain at least one violation: either they assign different colors to the same vertex, or they assign the same color to two adjacent vertices. Hence, every pair of coloring plans has at least one inconsistency, and therefore, this expectation is at least one.

    Finally, let agent 3 guess agent 1's proposed vertex uniformly at random. Then \Cref{eq:strong-CCE-proof-2} is at least
    \begin{align*}
        \frac{4}{9|V|^2} + \rbr{1-\frac{4}{9|V|^2}.}\frac{1}{|V|} = \frac{1}{|V|} + \frac{4}{9|V|^2}\rbr{1-\frac{1}{|V|}}\overset{(i)}{\geq} \frac{1}{|V|} + \frac{1}{3|V|^2}.
    \end{align*}
    $(i)$ holds whenever $|V|\ge 4$. Therefore, if we can approximate \Cref{eq:strong-CCE} within $\frac{1}{N^2}<\frac{1}{3|V|^2}$, then we can decide whether $G$ is 3-colorable. \qedhere
\end{proof}

\section{Proof of \Cref{theorem:existence}}
\label{section:proof-existence}

\restate{theorem:existence}
\begin{proof}

Let $D\coloneqq \sum_{S\in\cS}\sum_{i\in S}\abr{\cA_{S\cap \cN(i)}}$. For any joint strategy $\pi\in\Delta^{\cA}$, consider the vector $\bv^{\pi}\in \RR^D$, where
\begin{align*}
    v^{\pi}(S, i, \hat \ba_{S\cap \cN(i)})=\EE_{\ba\sim\pi}\sbr{\cU_i(\hat \ba_{S\cap \cN(i)}, \ba_{-\rbr{S\cap \cN(i)}}) - \cU_i(\ba)},
\end{align*}
for any $S\in\cS$ and $\hat \ba_{S\cap \cN(i)}\in\cA_{S\cap \cN(i)}$. By definition of $\bv^{\pi}$, it is linear in $\pi$. Therefore, the vertex set of $\cbr{\bv^{\pi}\given \pi\in \Delta^{\cA}}$ should correspond to a subset of $\Delta^{\cA}$'s vertex set, which is the set of all pure strategies.

By Carath\'{e}odory's theorem, for any $\pi^*\in\Delta^{\cA}$, $\bv^{\pi^*}$ can be represented as the linear combination of $D + 1$ vertices, which further implies it can be written as a linear combination of $D + 1$ vectors in $\cbr{\bv^{\pi}\given \pi\text{ is a pure strategy}}$. Then, when $\bv^{\pi^*}=\sum_{k=1}^{D + 1} \lambda^k \bv^{\pi^k}$ with $\blambda\in\Delta^{D + 1}$ and $\pi^1,\pi^2,\dots$ are pure strategies, due to the linearity of $\bv^{\pi}$, we have
\begin{align*}
    \bv^{\pi^*}=\sum_{k=1}^{D + 1} \lambda^k \bv^{\pi^{k}}= \bv^{\sum_{k=1}^{D + 1} \lambda^k\pi^k}.
\end{align*}
Finally,
\begin{align*}
    &\max_{S\in \cS} \max_{\hat\ba_S\in\cA_S} \frac{1}{|S|}\sum_{i\in S} \EE_{\ba\sim \pi}\sbr{\cU_i\rbr{\hat\ba_S, \ba_{-S}} - \cU_i\rbr{\ba} }\\
    =&\max_{S\in \cS} \max_{\hat\ba_S\in\cA_S} \frac{1}{|S|}\sum_{i\in S} \EE_{\ba\sim \pi}\sbr{\cU_i\rbr{\hat\ba_{S\cap \cN(i)}, \ba_{-(S\cap \cN(i))}} - \cU_i\rbr{\ba} }\\
    =&\max_{S\in \cS} \max_{\hat\ba_S\in\cA_S} \frac{1}{|S|} \sum_{i\in S} v^{\pi^*}(S, i, \hat \ba_{S\cap \cN(i)})\\
    =&\max_{S\in \cS} \max_{\hat\ba_S\in\cA_S} \frac{1}{|S|} \sum_{i\in S} v^{\sum_{k=1}^{D + 1} \lambda^k\pi^k}(S, i, \hat \ba_{S\cap \cN(i)}).
\end{align*}
Hence, once $\pi^*$ satisfies \Cref{eq:MASE}, there exists a linear combination of $D + 1$ pure strategies also satisfying \Cref{eq:MASE}. Given $\cN(i)\leq \tw{\cG}$, we have $D\leq \sum_{S\in\cS} |S| \cdot A^{\tw{\cG}}$.
\end{proof}

\section{The Optimality of Dynamic Programming on Tree Decomposition}

This section shows that the dynamic programming (\emph{e.g.}, \Cref{eq:pi-dp}) will compute the optimality. Let $u_i(\ba_{\cN(i)})$ be the contribution of player $i$'s utility to the final objective. Recall that $\cN_i^S\coloneqq \cN(i)\cap S$. Then, in \Cref{eq:pi-dp},
\begin{align*}
    u_i(\ba_{\cN(i)})=-\sum_{\tau=1}^t \sum_{S\in\cS}\frac{1}{|S|} \sum_{\substack{i\in S}}~\sum_{\hat \ba_{\cN_i^S}\in\cA_{\cN_i^S}} ~ \mu^{(\tau)}\rbr{S, \hat \ba_{\cN_i^S}} \rbr{\cU_i\rbr{\hat \ba_{\cN_i^S}, \ba_{\cN(i)\setminus S}} - \cU_i(\ba_{\cN(i)})}
\end{align*}
at timestep $t$. We consider the following update rule in this section, which generalizes \Cref{eq:pi-dp} and \Cref{eq:mu-dp} (see the proof of \Cref{lemma:variation-d-max} and \Cref{lemma:variation-g-max}),
\begin{align}
    h(B, \ba_B)=&\sum_{\substack{i\in [N]\colon\\i\text{ assigned to }B}} u_i(\ba_{\cN(i)})+\sum_{B'\in C(B)} \max_{\substack{\ba_{B'}'\in\cA_{B'}\colon\\\ba_{B\cap B'}=\ba_{B\cap B'}'}} h(B', \ba_{B'}').\label{eq:general-dp}
\end{align}

In the following, we will show that \Cref{eq:general-dp} is optimal.
\begin{lemma}
\label{lemma:optimality}
    For any bag $B\in\cT$, let $\text{st}(B)\coloneqq \cbr{i}_{i\text{ assigned to }B}\cup \bigcup_{B'\in C(B)}\text{st}(B')$ be the set of players assigned to $B$ and bags in its subtree. Then, for any bag $B\in\cT$ and $\ba_B\in\cA_B$, we have
    \begin{align}
        h(B, \ba_B) = \max_{\ba_{-B}\in\cA_{-B}} \sum_{i\in \text{st}(B)} u_i(\ba_{\cN(i)}).\label{eq:optimality}
    \end{align}
\end{lemma}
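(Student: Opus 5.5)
We argue by structural induction on the rooted tree decomposition, from the leaves up to the root. The key structural fact, from \Cref{item:td-property3}, is a separation property. For a bag $B$ with child $B'$, let $V(B')$ be the union of bags in the subtree rooted at $B'$. Any player appearing both in $V(B')$ and outside that subtree must lie in $B\cap B'$. Consequently, the subtrees of distinct children of $B$ share players only through $B$.

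We also use that each $i$ assigned to a bag $\tilde B$ satisfies $\cN(i)\subseteq\tilde B$. Hence $\sum_{i\in\text{st}(B')}u_i(\ba_{\cN(i)})$ depends only on $\ba_{V(B')}$.

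\textbf{Base case.} If $B$ is a leaf, then $C(B)=\emptyset$ and $\text{st}(B)$ consists of the players assigned to $B$. Each term $u_i(\ba_{\cN(i)})$ depends only on $\ba_B$, so the maximum over $\ba_{-B}$ is vacuous and \Cref{eq:optimality} coincides with \Cref{eq:general-dp}.

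\textbf{Inductive step.} Fix $B$ and $\ba_B$, and assume \Cref{eq:optimality} holds for every child $B'\in C(B)$. Split the sum as
\begin{align*}
\sum_{i\in\text{st}(B)}u_i=\sum_{i\text{ assigned to }B}u_i(\ba_{\cN(i)})+\sum_{B'\in C(B)}\sum_{i\in\text{st}(B')}u_i(\ba_{\cN(i)}).
\end{align*}
The sets $\text{st}(B')$ are disjoint because each player is assigned to exactly one bag. The first sum is fixed once $\ba_B$ is fixed.

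The $B'$-th term of the second sum depends only on $\ba_{V(B')}$. By the separation property, the variables $\ba_{V(B')\setminus B}$ for different children are disjoint and also disjoint from $B$. The maximization over $\ba_{-B}$ therefore decouples into independent maximizations, one per child, each over $\ba_{V(B')\setminus B}$ with $\ba_{B\cap B'}$ held at its value in $\ba_B$. Coordinates outside $B\cup\bigcup_{B'}V(B')$ do not appear at all.

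For each child, we write the maximum over $\ba_{V(B')\setminus B}$ as a maximum over $\ba'_{B'}$ consistent with $\ba_B$ on $B\cap B'$, followed by a maximum over the remaining coordinates outside $B'$. Applying the induction hypothesis to $B'$ turns the inner maximum into $h(B',\ba'_{B'})$. This yields exactly $\max_{\ba'_{B'}:\,\ba'_{B\cap B'}=\ba_{B\cap B'}}h(B',\ba'_{B'})$, the term appearing in \Cref{eq:general-dp}. Summing over children gives the claim.

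The main obstacle is making the decoupling rigorous. First, one must show that $V(B')\cap B = V(B')\cap B' \cap B$, i.e., that a player in the subtree of $B'$ and also in $B$ lies in $B'$; this follows from \Cref{item:td-property3}, since the path from any bag in the subtree of $B'$ to $B$ passes through $B'$. Second, one must show that two children's subtrees intersect only within $B$, which again uses the path through $B$. Third, one must check that fixing $\ba_{B'}$ and then maximizing over $\ba_{-B'}$ in the inductive hypothesis ranges over exactly the coordinates $V(B')\setminus B'$; extra coordinates are harmless because the objective does not depend on them.
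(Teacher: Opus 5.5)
Your proof is correct and follows essentially the same route as the paper. Both argue by structural induction from the leaves. Property 3 of the tree decomposition is used to show that each child's subtree interacts with the rest only through $B\cap B'$, and with other children's subtrees only through $B$, so the maximum over $\ba_{-B}$ splits into independent per-child maxima matching the recursion for $h$. Your $V(B')$ formulation just packages more explicitly the paper's facts that $\cN(i)\cap(B\setminus B')=\emptyset$ for $i\in\text{st}(B')$ and $\cN(i')\cap\cN(i'')\subseteq B$ across distinct children.
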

The proof is postponed to the end of this section.
Note that for the root bag $B^r$, since $\text{st}(B^r)=[N]$, \Cref{lemma:optimality} implies that $\max_{\ba_{B^r}\in \cA_{B^r}} h(B^r, \ba_{B^r})=\max_{\ba\in\cA} \sum_{i=1}^N u_i(\ba_{\cN(i)})$. Therefore, we find the maximum of $\sum_{i=1}^N u_i(\ba_{\cN(i)})$, and the optimal joint action $\ba\in\cA$ can be extracted recursively.

Specifically, let 
\begin{equation}
\begin{split}
    &\ba_{B^r}^*=\argmax_{\ba_{B^r}\in \cA_{B^r}} h(B^r, \ba_{B^r})\\
    \forall B\in C(B^r),~~~& \ba_{B\setminus B^r}^*=\argmax_{\ba_{B\setminus B^r}\in\cA_{B\setminus B^r}} d\rbr{B, (\ba_{B\setminus B^r}, \ba^{(t+1)}_{B\cap B^r})}.
\end{split}
\label{eq:find-argmax-dp}
\end{equation}
We will do this recursively until we find the whole $\ba^*\in\cA$. The tie-breaking rule can be arbitrary, and we use the lexicographic order of joint actions for simplicity. Hence, we prove the optimality of the update rule \Cref{eq:general-dp}. \qed

\restate{lemma:optimality}

\begin{proof}
    For leaf bags $B$ ($C(B)=\emptyset$), for any joint action $\ba_B\in\cA_B$, we have
    \begin{align*}
        h(B, \ba_B)=\sum_{\substack{i\in [N]\colon\\i\text{ assigned to }B}} u_i(\ba_{\cN(i)})=&\sum_{i\in \text{st}(B)} u_i(\ba_{\cN(i)})\overset{(i)}{=} \max_{\ba_{-B}\in\cA_{-B}} \sum_{i\in \text{st}(B)} u_i(\ba_{\cN(i)}).
    \end{align*}
    $(i)$ is because $\cN(i)\subseteq B$ for any $i$ assigned to $B$ by definition. Additionally, since $B$ is a leaf bag, $\text{st}(B)=\cbr{i\in B\colon i\text{ assigned to }B}$.

    Then, for any bag $B$ with all of its children $B'\in C(B)$ satisfying \Cref{eq:optimality}, we have
    \begin{align*}
        h(B, \ba_B)=&\sum_{\substack{i\in [N]\colon\\i\text{ assigned to }B}} u_i(\ba_{\cN(i)})+\sum_{B'\in C(B)} \max_{\substack{\ba_{B'}'\in\cA_{B'}\colon\\\ba_{B\cap B'}=\ba_{B\cap B'}'}} h(B', \ba_{B'}')\\
        \overset{(i)}{=}& \sum_{\substack{i\in [N]\colon\\i\text{ assigned to }B}} u_i(\ba_{\cN(i)})+\sum_{B'\in C(B)} \max_{\substack{\ba_{B'}'\in\cA_{B'}\colon\\\ba_{B\cap B'}=\ba_{B\cap B'}'}} \max_{\ba_{-B'}'\in\cA_{-B'}} \sum_{i\in \text{st}(B')} u_i(\ba_{\cN(i)}')\\
        =&\sum_{\substack{i\in [N]\colon\\i\text{ assigned to }B}} u_i(\ba_{\cN(i)})+\sum_{B'\in C(B)} \max_{\substack{\ba'\in\cA\colon\\\ba_{B\cap B'}=\ba_{B\cap B'}'}} \sum_{i\in \text{st}(B')} u_i(\ba_{\cN(i)}').
    \end{align*}
    $(i)$ uses the induction hypothesis. By \Cref{item:td-property3}, for any $B'\in C(B)$ and $i\in \text{st}(B')$, $\cN(i)\cap \rbr{B\setminus B'}=\emptyset$. Because for any $i\in \text{st}(B')$, there must be a bag $B''$ in the subtree of $B'$ such that $\cN(i)\subseteq B''$, and \Cref{item:td-property3} will be violated if $\cN(i)\cap {B\setminus B'}\not=\emptyset$. Then, modifying the constraint $\ba_{B\cap B'}=\ba_{B\cap B'}'$ to $\ba_B=\ba_B'$ will not change the value of $u_i(\ba_{\cN(i)}')$ for any $i\in \text{st}(B')$. Hence,
    \begin{align*}
        h(B, \ba_B)=&\sum_{\substack{i\in [N]\colon\\i\text{ assigned to }B}} u_i(\ba_{\cN(i)})+\sum_{B'\in C(B)} \max_{\substack{\ba'\in\cA\colon\\\ba_B=\ba_B'}} \sum_{i\in \text{st}(B')} u_i(\ba_{\cN(i)}').
    \end{align*}
    Furthermore, by \Cref{item:td-property3}, for any $B',B''\in C(B)$ and $i'\in \text{st}(B'), i''\in \text{st}(B'')$, we have $\cN(i')\cap \cN(i'')\subseteq B$. Finally,
    \begin{align*}
        h(B, \ba_B)=&\sum_{\substack{i\in [N]\colon\\i\text{ assigned to }B}} u_i(\ba_{\cN(i)})+\max_{\substack{\ba'\in\cA\colon\\\ba_B=\ba_B'}} \sum_{B'\in C(B)} \sum_{i\in \text{st}(B')} u_i(\ba_{\cN(i)}')\\
        =&\max_{\ba_{-B}\in\cA_{-B}} \rbr{\sum_{\substack{i\in [N]\colon\\i\text{ assigned to }B}} u_i(\ba_{\cN(i)}) +  \sum_{B'\in C(B)} \sum_{i\in \text{st}(B')} u_i(\ba_{\cN(i)}')}\\
        =& \max_{\ba_{-B}\in\cA_{-B}} \sum_{i\in \text{st}(B)} u_i(\ba_{\cN(i)}).
    \end{align*}
    This completes the induction.\qedhere
\end{proof}

\section{Proof of \Cref{theorem:pi-no-regret}}
\label{section:pi-no-regret-proof}

\restate{theorem:pi-no-regret}

\begin{proof}
The proof of \Cref{theorem:pi-no-regret} can be decomposed into three steps.

Firstly, we show that without loss of generality, if FTPL with a fixed noise $\tilde \bn$ for all timesteps $t=1,2,\dots$ attains sublinear regret when the adversary is oblivious,\footnote{An oblivious adversary will choose all the utility functions at timestep $0$, while an adaptive adversary will choose the utility functions at timestep $t$ according to $\pi^{(1)},\pi^{(2)},\dots,\pi^{(t-1)}$.} then FTPL with independent noise vectors $\tilde \bn^{(t)}$ also attains the same regret confronting an adaptive adversary. The reduction to the oblivious setting is common in the literature \citep{agarwal2019learning-nonconvex,suggala2020online-nonconvex}, and we include it here for completeness.

Secondly, we will show that the regret of a fictitious algorithm $\pi^{(t+1)}\in \argmin_{\pi\in\Delta^{\cA}} \sum_{\tau=1}^{{\color{red}(t+1)}} F\rbr{\pi,\mu^{(\tau)}} + \inner{\tilde \bn^{(t+1)}}{\pi}$ is sublinear.

Finally, we will show that the regret of \Cref{eq:update-rule} and that of the fictitious algorithm are close.

\subsection{Fixed Noise Vector}

In this section, for completeness, we will show a reduction from an adaptive adversary to an oblivious adversary. For ease of representation, we will take correlator ($\pi$) as the \emph{no-regret learner}, and the deviator ($\mu$) as the \emph{adversary}.

An adaptive adversary determines the utility function at timestep $t$, which is $\mu^{(t)}$ in this section, according to our past strategies, $\pi^{(1)},\dots,\pi^{(t-1)}$. In contrast, an oblivious adversary determines all utility functions, \emph{i.e.}, $\mu^{(1)},\dots,\mu^{(T)}$, at the beginning (timestep $0$), such that $\mu^{(t)}$ is irrelevant to $\pi^{(1)},\dots,\pi^{(t-1)}$. In the following, we will show that a sublinear regret against an oblivious adversary implies a sublinear regret against an adaptive adversary.

Intuitively, when the random noise $\tilde \bn^{(1)},\dots,\tilde \bn^{(T)}$ are independent, $\pi^{(t)}$ only depends on $\mu^{(1)},\dots,\mu^{(t-1)}$, which is known to both the oblivious and adaptive adversary, due to the update rule \Cref{eq:update-rule}. Hence, an additional observation on $\pi^{(1)},\dots,\pi^{(t-1)}$ does not make adversary more powerful. Formally, we have the following lemma \citep[Lemma 4.1]{cesa2006prediction-Hedge}.

\begin{lemma}[Reformulation of Lemma 4.1 in \citet{cesa2006prediction-Hedge}]
\label{lemma:cesa-oblivious}
    Consider any randomized no-regret learner and the \emph{distribution} of the decision variable $\pi^{(t)}$ is fully determined by $\mu^{(1)},\dots,\mu^{(t-1)}$. Assume the no-regret learner's regret against any sequence of $\mu^{(1)},\dots,\mu^{(T)}$ generated by an oblivious adversary satisfies that
    \begin{align*}
        \underbrace{\EE_{\tilde \bn^{(1)},\dots,\tilde \bn^{(T)}}\sbr{\max_{\hat\pi\in\Delta^{\cA}} \sum_{t=1}^T F\rbr{\pi^{(t)},\mu^{(t)}} - F\rbr{\hat\pi,\mu^{(t)}}}}_{\text{Expected Regret}} \leq R.
    \end{align*}
    Then, for any sequence of $\mu^{(1)},\dots,\mu^{(T)}$ generated by an adaptive adversary and $\delta>0$, with probability at least $1-\delta$, we have
    \begin{align*}
        \max_{\hat\pi\in\Delta^{\cA}} \sum_{t=1}^T F\rbr{\pi^{(t)},\mu^{(t)}} - F\rbr{\hat\pi,\mu^{(t)}}\leq R + \sqrt{2 T\log \frac{1}{\delta}}.
    \end{align*}
\end{lemma}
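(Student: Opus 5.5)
The plan is to split the adaptive regret into a martingale part and an ``expected'' part, and to control the expected part using the oblivious guarantee. Let $\cF_{t}$ denote the $\sigma$-algebra generated by the noise vectors $\tilde\bn^{(1)},\dots,\tilde\bn^{(t)}$ together with any internal randomness of the adversary up to round $t$. Under an adaptive adversary, $\mu^{(t)}$ is a function of $\pi^{(1)},\dots,\pi^{(t-1)}$ and hence is $\cF_{t-1}$-measurable. By hypothesis, the conditional law of $\pi^{(t)}$ given $\cF_{t-1}$ is a fixed map $p_t(\mu^{(1)},\dots,\mu^{(t-1)})$ of the past deviator strategies. For FTPL with fresh independent noise $\tilde\bn^{(t)}$ at every round this holds, since $\pi^{(t)}$ is a deterministic function of $\mu^{(1:t-1)}$ and $\tilde\bn^{(t)}$. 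Define
\begin{align*}
\overbar F_t \coloneqq \EE\sbr{F(\pi^{(t)},\mu^{(t)}) \given \cF_{t-1}} = \EE_{\pi\sim p_t(\mu^{(1:t-1)})}\sbr{F(\pi,\mu^{(t)})},
\end{align*}
and $X_t\coloneqq F(\pi^{(t)},\mu^{(t)})-\overbar F_t$. The realized regret then equals
\begin{align*}
\sum_t X_t + \rbr{\sum_t \overbar F_t - \min_{\hat\pi}\sum_t F(\hat\pi,\mu^{(t)})}.
\end{align*}

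\textbf{Step 1 (martingale term).} The sequence $(X_t)$ is a martingale difference sequence with respect to $(\cF_t)$. Since $\cU_i\in[0,1]$, every average gain lies in $[-1,1]$, so $F(\pi,\mu)\in[-1,1]$. Conditionally on $\cF_{t-1}$, $F(\pi^{(t)},\mu^{(t)})$ therefore lies in an interval of length $2$. The Azuma--Hoeffding inequality with ranges $b_t-a_t=2$ gives
\begin{align*}
\Pr\rbr{\sum_t X_t\geq \epsilon}\leq \exp\rbr{-\frac{2\epsilon^2}{4T}}.
\end{align*}
Setting $\epsilon=\sqrt{2T\log(1/\delta)}$ makes the right-hand side equal to $\delta$, which yields exactly the additive term in the statement.

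\textbf{Step 2 (expected term is at most $R$ pathwise).} Fix any realization of $\mu^{(1)},\dots,\mu^{(T)}$ and consider the oblivious adversary that commits to precisely this sequence at time $0$. Against this adversary, the marginal law of the learner's $t$-th decision is $p_t(\mu^{(1:t-1)})$, the same map as before, because it depends only on the past $\mu$'s. Hence $\EE[F(\pi^{(t)},\mu^{(t)})]=\overbar F_t$. The comparator term $\min_{\hat\pi}\sum_t F(\hat\pi,\mu^{(t)})$ does not depend on the noise. It can therefore be pulled out of the expectation, so the oblivious expected regret equals $\sum_t\overbar F_t-\min_{\hat\pi}\sum_tF(\hat\pi,\mu^{(t)})$. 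By assumption this quantity is at most $R$. Since this holds for every realized sequence, the bracketed term is at most $R$ almost surely. Combining this with Step 1 proves the lemma.

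The main obstacle is Step 2. One has to argue carefully that the oblivious regret equals the sum of the $\overbar F_t$, even though in the oblivious run the learner's decisions are generated with noise independent of the realization of the adaptive run. This relies only on linearity of expectation over individual rounds and on the fact that $p_t$ depends on $\mu$ alone, not on the realized past $\pi$'s. In particular, we do not need the joint law of $(\pi^{(1)},\dots,\pi^{(T)})$ to coincide across the two runs, only the per-round marginals.
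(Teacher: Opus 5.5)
Your proof is correct. The paper does not prove this lemma itself but cites Lemma 4.1 of \citet{cesa2006prediction-Hedge}, and your argument is the standard one behind it: apply Hoeffding--Azuma to the range-$2$ martingale differences $F(\pi^{(t)},\mu^{(t)})-\bar{F}_t$ (which gives exactly $\sqrt{2T\log(1/\delta)}$), and bound $\sum_t \bar{F}_t-\min_{\hat\pi}\sum_t F(\hat\pi,\mu^{(t)})\le R$ pathwise by treating the realized sequence as an oblivious one. One small indexing fix: when the adversary is itself randomized (as the deviator's FTPL is), put its round-$t$ randomness into $\mathcal{F}_{t-1}$ so that $\mu^{(t)}$ is genuinely $\mathcal{F}_{t-1}$-measurable; this is harmless because that randomness is independent of $\tilde\bn^{(t)}$.
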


It is easy to see that the distribution of $\pi^{(t)}$ generated by \Cref{eq:update-rule}, whose randomness is induced by $\tilde \bn^{(t)}$, is fully determined by $\mu^{(1)},\dots,\mu^{(t-1)}$, given $\tilde \bn^{(t)}$ is generated by a fixed distribution independently at each timestep.

Then, we will show that the expected regret of FTPL using independent noise vectors and FTPL with a fixed noise vector is the same, while facing an oblivious adversary.
\begin{align*}
    \EE_{\tilde \bn^{(1)},\dots,\tilde \bn^{(T)}}\sbr{\sum_{t=1}^T F\rbr{\pi^{(t)},\mu^{(t)}}}=&\sum_{t=1}^T \EE_{\tilde \bn^{(1)},\dots,\tilde \bn^{(T)}}\sbr{F\rbr{\pi^{(t)},\mu^{(t)}}}\\
    \overset{(i)}{=}&\sum_{t=1}^T \EE_{\tilde \bn^{(t)}} \sbr{F\rbr{\pi^{(t)},\mu^{(t)}}}.
\end{align*}
$(i)$ uses the fact that both $\pi^{(t)}$ and $\mu^{(t)}$ are independent of $\tilde \bn^{(1)},\dots,\tilde \bn^{(t-1)}$, when the adversary controlling $\mu$ is oblivious. Finally, the expectation $\EE_{\tilde\bn^{(1)}}\sbr{F\rbr{\pi^{(t)},\mu^{(t)}}}$ of
\begin{align*}
    \pi^{(t+1)}\in \argmin_{\pi\in\Delta^{\cA}} \sum_{\tau=1}^t F\rbr{\pi,\mu^{(\tau)}} - \inner{\tilde\bn^{(1)}}{\pi}
\end{align*}
is equal to $\EE_{\tilde \bn^{(t)}} \sbr{F\rbr{\pi^{(t)},\mu^{(t)}}}$, when $\bn^{(t)}$ and $\bn^{(1)}$ are sampled from an identical distribution.
In summary,
\begin{align*}
    &\text{fixed noise and an oblivious adversary (Expectation)}\\
    \Rightarrow&\text{independent noise and an oblivious adversary (Expectation)}\\
    \Rightarrow&\text{independent noise and an adaptive adversary (High Probability Bound)}.
\end{align*}
Hence, the problem reduces to proving sublinear regret against an oblivious adversary with a fixed noise vector for all timesteps.

\subsection{Low Regret with Accurate Prediction}

The discussion above suggests that we only need to show the sublinear regret against an oblivious adversary, when all timesteps share the same noise vector. In other words, we consider the following update rule,
\begin{align}
    \pi^{(t+1)}\in \argmin_{\pi\in\Delta^{\cA}} \sum_{\tau=1}^t F\rbr{\pi,\mu^{(\tau)}} - \inner{\tilde\bn}{\pi},
\end{align}
where $\tilde\bn,\tilde\bn^{(1)},\dots,\tilde\bn^{(T)}$ are identically distributed.

Next, we will show that if the regret minimizer can make the decision $\pi^{(t+1)}$ with an accurate prediction of $\mu^{(t+1)}$, then we can achieve sublinear regret. In particular, the decision variable at timestep $t+1$ is chosen according to \Cref{eq:pred-br}.
\begin{align}
    \pi^{(t+1)}\in\argmin_{\pi\in\Delta^{\cA}} \sum_{\tau=1}^{t{\color{red} +1}} F\rbr{\pi,\mu^{(\tau)}} - \inner{\tilde \bn}{\pi}.\label{eq:pred-br}
\end{align}
Actually, we can see that \Cref{eq:pred-br} is exactly the original update rule of $\pi^{(t+2)}$. Therefore, we will prove the following lemma in the sequel.
\begin{lemma}
\label{lemma:pred-br}
    Consider \Cref{eq:pred-br}. For any timestep $t=1,2,\dots$, $\mu^{(1)},\mu^{(2)},\dots$, and any $\hat\pi\in\Delta^{\cA}$, we have
    \begin{align}
        \sum_{\tau=1}^t \rbr{F\rbr{\pi^{(\tau{\color{red} +1})},\mu^{(\tau)}} - F\rbr{\hat\pi,\mu^{(\tau)}}}\leq \inner{\tilde \bn}{\pi^{(2)} - \hat\pi}.\label{lemma:pred-br-bound}
    \end{align}
    \proof
    We will prove the lemma by induction. When $t=1$, we have
    \begin{align*}
        &F\rbr{\pi^{(2)},\mu^{(1)}} - F\rbr{\hat\pi,\mu^{(1)}}\\
        =&\rbr{F\rbr{\pi^{(2)},\mu^{(1)}} - \inner{\tilde \bn}{\pi^{(2)}}} - \rbr{F\rbr{\hat\pi,\mu^{(1)}} - \inner{\tilde \bn}{\hat\pi}} + \inner{\tilde \bn}{\pi^{(2)} - \hat\pi}\\
        \overset{(i)}{\leq}& \inner{\tilde \bn}{\pi^{(2)} - \hat\pi}.
    \end{align*}
    $(i)$ is because $\pi^{(2)}\in \argmin_{\pi\in\Delta^{\cA}} F\rbr{\pi,\mu^{(1)}} - \inner{\tilde \bn}{\pi}$.

    Next, we will show that when \Cref{lemma:pred-br-bound} holds for $t=t_0$, then it also holds for $t=t_0+1$. For any $\hat\pi\in\Delta^{\cA}$, we have
    \begin{align*}
        &\sum_{\tau=1}^{t_0+1} \rbr{F\rbr{\pi^{(\tau+1)},\mu^{(\tau)}} - F\rbr{\hat\pi,\mu^{(\tau)}}}\\
        =& \sum_{\tau=1}^{t_0+1} F\rbr{\pi^{(\tau+1)},\mu^{(\tau)}} - \rbr{\sum_{\tau=1}^{t_0+1} F\rbr{\hat\pi,\mu^{(\tau)}} - \inner{\tilde \bn}{\hat\pi}} - \inner{\tilde \bn}{\hat\pi} \\
        \overset{(i)}{\leq}& \sum_{\tau=1}^{t_0+1} F\rbr{\pi^{(\tau+1)},\mu^{(\tau)}} - \rbr{\sum_{\tau=1}^{t_0+1} F\rbr{\pi^{(t_0+2)},\mu^{(\tau)}} - \inner{\tilde \bn}{\pi^{(t_0+2)}}} - \inner{\tilde \bn}{\hat\pi} \\
        =& \sum_{\tau=1}^{t_0} \rbr{F\rbr{\pi^{(\tau+1)},\mu^{(\tau)}} - F\rbr{\pi^{(t_0+2)},\mu^{(\tau)}}} + \inner{\tilde \bn}{\pi^{(t_0+2)} - \hat\pi}.
    \end{align*}
    $(i)$ is because $\pi^{(t_0+2)}\in \argmin_{\pi\in\Delta^{\cA}} \sum_{\tau=1}^{t_0+1} F\rbr{\pi,\mu^{(\tau)}} - \inner{\tilde \bn}{\pi}$. Next, by setting $\hat\pi=\pi^{(t_0+2)}$ in the induction hypothesis, we have
    \begin{align*}
        &\sum_{\tau=1}^{t_0} \rbr{F\rbr{\pi^{(\tau+1)},\mu^{(\tau)}} - F\rbr{\pi^{(t_0+2)},\mu^{(\tau)}}} + \inner{\tilde \bn}{\pi^{(t_0+2)} - \hat\pi}\\
        \leq& \inner{\tilde \bn}{\pi^{(2)} - \pi^{(t_0+2)}} + \inner{\tilde \bn}{\pi^{(t_0+2)} - \hat\pi}\\
        =& \inner{\tilde \bn}{\pi^{(2)} - \hat\pi}.\qedhere
    \end{align*}
\end{lemma}

\subsection{Sublinear Variation}

In this section, we will show that the regret of FTPL with/without a prediction of $\mu^{(t+1)}$ is close. Formally, for any $\hat\pi\in\Delta^{\cA}$, we have
\begin{align*}
    &\sum_{t=1}^T \rbr{F\rbr{\pi^{(t)},\mu^{(t)}} - F\rbr{\hat\pi,\mu^{(t)}}}\\
    =&\sum_{t=1}^T \rbr{F\rbr{\pi^{(t+1)},\mu^{(t)}} - F\rbr{\hat\pi,\mu^{(t)}}} + \sum_{t=1}^T \rbr{F\rbr{\pi^{(t)},\mu^{(t)}} - F\rbr{\pi^{(t+1)},\mu^{(t)}}}\\
    \overset{(i)}{\leq}& \inner{\tilde \bn}{\pi^{(2)} - \hat\pi} + \sum_{t=1}^T \rbr{F\rbr{\pi^{(t)},\mu^{(t)}} - F\rbr{\pi^{(t+1)},\mu^{(t)}}}.
\end{align*}
$(i)$ uses \Cref{lemma:pred-br}.

Moreover, since $\cU_i(\ba)\in [0, 1]$ for any $i\in [N]$ and $\ba\in\cA$, we have
\begin{align*}
    \max_{\substack{\pi\in\Delta^{\cA},\\\mu\in\Delta^{\bigtimes_{S\in\cS} \cA_S}}} \abr{F\rbr{\pi, \mu}}\leq 1.
\end{align*}
Then,
\begin{align*}
    \EE\sbr{F\rbr{\pi^{(t)},\mu^{(t)}} - F\rbr{\pi^{(t+1)},\mu^{(t)}}}\leq \Pr_{\tilde \bn}\rbr{\pi^{(t)}\not=\pi^{(t+1)}}.
\end{align*}
Hence, we only need to lower bound $\Pr_{\tilde \bn}\rbr{\pi^{(t)}=\pi^{(t+1)}}$ in the sequel. Recall that $\pi^{(t)}$ is a pure strategy with $\pi^{(t)}\rbr{\ba^{(t)}}=1$ for some joint action $\ba^{(t)}\in\cA$. For any bag $B\in\cT$, let $fa(B)$ denote its father ($fa(B)=\emptyset$ if $B$ is the root). Then,
\begin{align*}
    \Pr_{\tilde \bn}\rbr{\pi^{(t)}=\pi^{(t+1)}}=&\Pr_{\tilde \bn}\rbr{\ba^{(t)}=\ba^{(t+1)}}\\
    \overset{(i)}{=}&\prod_{B\in\cT} \Pr_{\tilde \bn}\rbr{\ba_{B\setminus fa(B)}^{(t)}=\ba_{B\setminus fa(B)}^{(t+1)}\biggiven \ba_{fa(B)}^{(t)}=\ba_{fa(B)}^{(t+1)}}.
\end{align*}
According to \Cref{eq:find-argmax-dp-pi}, each bag $B\in\cT$ only determines $\ba_{B\setminus fa(B)}^{(t)}$. Since $\bn^{(t)}(B,\cdot)$ is sampled independently for every bag $B$, it follows that $\ba_{B'\setminus B}^{(t)},\ba_{B''\setminus B}^{(t)}$ are independent for $B',B''\in C(B)$ with $B'\not=B''$, by \Cref{item:td-property3}. Hence, $(i)$ holds.

For any $B\in\cT$, to lower bound $\Pr_{\tilde \bn}\rbr{\ba_{B\setminus fa(B)}^{(t)}=\ba_{B\setminus fa(B)}^{(t+1)}\biggiven \ba_{fa(B)}^{(t)}=\ba_{fa(B)}^{(t+1)}}$, we will first get its lower bound while further conditioning on $\ba^{(t)}_{B\cup fa(B)}$'s value and $\bn(B,\cdot)$'s value.

Then, $\Pr_{\tilde \bn}\rbr{\ba_{B\setminus fa(B)}^{(t)}=\ba_{B\setminus fa(B)}^{(t+1)}\biggiven \ba_{fa(B)}^{(t)}=\ba_{fa(B)}^{(t+1)}}$ is equal to this conditioned probability integrating over all possible values of $\ba^{(t)}_{B\cup fa(B)}$ and $\bn(B,\cdot)$. Formally, we want to lower bound
\begin{align*}
    p_B^{(t)}(\ba_{B\cup fa(B)}', \bx)\coloneqq\Pr\Big(&\ba_{B\setminus fa(B)}^{(t+1)}=\ba_{B\setminus fa(B)}'\Biggiven \ba_{fa(B)}^{(t)}=\ba_{fa(B)}^{(t+1)}~,~\ba_{B\cup fa(B)}^{(t)}=\ba_{B\cup fa(B)}'~,~\\
    &\text{ and } \forall \ba_{B\setminus fa(B)}\in\cA_{B\setminus fa(B)}\setminus\cbr{\ba_{B\setminus fa(B)}'},\\
    &n(B, (\ba'_{B\cap fa(B)}, \ba_{B\setminus fa(B)}))=x((\ba'_{B\cap fa(B)},\ba_{B\setminus fa(B)}))\Big)
\end{align*}
for any $\ba_{B\cup fa(B)}'\in\cA_{B\cup fa(B)}$ and $\bx\in\RR^{\cA_B}$. Then,
\begin{align*}
    &\Pr_{\tilde \bn}\rbr{\ba_{B\setminus fa(B)}^{(t)}=\ba_{B\setminus fa(B)}^{(t+1)}\biggiven \ba_{fa(B)}^{(t)}=\ba_{fa(B)}^{(t+1)}}\geq \inf_{\substack{\ba_{B\cup fa(B)}'\in\cA_{B\cup fa(B)},\\\bx\in\RR^{\cA_B}}} p_B^{(t)}(\ba_{B\cup fa(B)}', \bx),
\end{align*}
since $\Pr_{\tilde \bn}\rbr{\ba_{B\setminus fa(B)}^{(t)}=\ba_{B\setminus fa(B)}^{(t+1)}\biggiven \ba_{fa(B)}^{(t)}=\ba_{fa(B)}^{(t+1)}}$ is equal to $p_B^{(t)}(\ba_{B\cup fa(B)}', \bx)$ integrating over $\ba_{B\cup fa(B)}'$ and $\bx$.

Since $\ba_{B\setminus fa(B)}^{(t)}=\ba_{B\setminus fa(B)}'$, for any $\ba_{B\setminus fa(B)}\in \cA_{B\setminus fa(B)}\setminus\cbr{\ba_{B\setminus fa(B)}'}$, we have
\begin{align*}
    &d^{(t)}\rbr{B, \ba_B'}\leq d^{(t)}\rbr{B, (\ba_{B\cap fa(B)}', \ba_{B\setminus fa(B)})}.
\end{align*}
This can be equivalently written as
\begin{align*}
    n\rbr{B, \ba_B'}\geq& \rbr{d^{(t)}\rbr{B, \ba_B'} + n\rbr{B, \ba_B'}} -d^{(t)}\rbr{B, (\ba_{B\cap fa(B)}', \ba_{B\setminus fa(B)})}.
\end{align*}
Then, $\ba_{B\setminus fa(B)}^{(t+1)}=\ba_{B\setminus fa(B)}'$ is equivalent to
\begin{align}
    n\rbr{B, \ba_B'}\geq& \rbr{d^{(t+1)}\rbr{B, \ba_B'} + n\rbr{B, \ba_B'}} - d^{(t+1)}\rbr{B, (\ba_{B\cap fa(B)}', \ba_{B\setminus fa(B)})}\label{eq:t-plus-1-optimal}\\
    =& \rbr{d^{(t)}\rbr{B, \ba_B'} + n\rbr{B, \ba_B'}} -d^{(t)}\rbr{B, (\ba_{B\cap fa(B)}', \ba_{B\setminus fa(B)})}\notag\\
    &+ \rbr{d^{(t)}\rbr{B, (\ba_{B\cap fa(B)}', \ba_{B\setminus fa(B)})} - d^{(t+1)}\rbr{B, (\ba_{B\cap fa(B)}', \ba_{B\setminus fa(B)})}} \notag\\
    & + \rbr{d^{(t+1)}\rbr{B, \ba_B'} - d^{(t)}\rbr{B, \ba_B'}}.\notag
\end{align}
for any $\ba_{B\setminus fa(B)}\in \cA_{B\setminus fa(B)}$. In \Cref{lemma:variation-d-max}, we show that the variation of $\bd$ is bounded by $1$. Therefore,
\begin{align*}
    n\rbr{B, \ba_B'}\geq \rbr{d^{(t)}\rbr{B, \ba_B'} + n\rbr{B, \ba_B'}} -d^{(t)}\rbr{B, (\ba_{B\cap fa(B)}', \ba_{B\setminus fa(B)})} + 2
\end{align*}
implies \Cref{eq:t-plus-1-optimal}.
\begin{lemma}
\label{lemma:variation-d-max}
Consider the update rule \Cref{eq:update-rule}. For any timestep $t=1,2,\dots,T$, bag $B\in\cT$, joint action $\ba_B\in\cA_B$, and noise $\tilde \bn\in\RR^{\bigtimes_{B\in\cT} \cA_B}$, we have
\begin{align*}
    \abr{d^{(t+1)}(B, \ba_B) - d^{(t)}(B, \ba_B)}\leq 1.
\end{align*}
\end{lemma}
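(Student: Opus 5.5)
The plan is to express $d^{(t)}(B,\ba_B)$ as the minimum of an explicit global objective over the subtree of $B$. The difference $d^{(t+1)}-d^{(t)}$ then becomes a difference of two minima whose objectives differ only by the single round-$t$ loss of the deviator, and that loss is at most $1$ in absolute value. Throughout I use the fixed-noise reduction of the previous subsection, so $d^{(t)}$ and $d^{(t+1)}$ are built with the same noise $n(B,\cdot)$.

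\textbf{Step 1 (closed form via \Cref{lemma:optimality}).} Apply \Cref{lemma:optimality} with $\min$ in place of $\max$ (equivalently, negate everything). Besides the utility terms, I add one ``pseudo-player'' per bag $B''$, assigned to $B''$, whose neighbourhood is $B''$ and whose contribution is $-n(B'',\ba_{B''})$. The proof of \Cref{lemma:optimality} only uses that each assigned term depends on coordinates inside its bag together with \Cref{item:td-property3}, so it goes through unchanged. With $\ell_i^{(\tau)}(\ba)\coloneqq\sum_{S\ni i}\frac{1}{|S|}\sum_{\hat\ba_S}\mu^{(\tau)}(S,\hat\ba_S)\rbr{\cU_i(\hat\ba_{S\cap\cN(i)},\ba_{\cN(i)\setminus S})-\cU_i(\ba_{\cN(i)})}$, this gives
\begin{align*}
d^{(t+1)}(B,\ba_B)=\min_{\ba_{-B}}\Big[\sum_{\tau=1}^{t}\sum_{i\in \text{st}(B)}\ell_i^{(\tau)}(\ba)-\sum_{B''\in\text{sub}(B)} n(B'',\ba_{B''})\Big],
\end{align*}
where $\text{sub}(B)$ is the set of bags in the subtree rooted at $B$. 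The formula for $d^{(t)}$ is identical except that the outer sum stops at $t-1$.

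\textbf{Step 2 (perturbation of a minimum).} Write $d^{(t+1)}(B,\ba_B)=\min_{\ba_{-B}}[\Phi(\ba)+g(\ba)]$ and $d^{(t)}(B,\ba_B)=\min_{\ba_{-B}}\Phi(\ba)$, where $g(\ba)=\sum_{i\in\text{st}(B)}\ell_i^{(t)}(\ba)$. The elementary inequality $\abr{\min(\Phi+g)-\min\Phi}\le\sup\abr{g}$ reduces the claim to showing $\sup_{\ba}\abr{g(\ba)}\le 1$.

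\textbf{Step 3 (bounding the one-round loss).} By \Cref{eq:update-rule}, $\mu^{(t)}$ is a pure strategy, so it puts all its mass on a single pair $(S^*,\hat\ba_{S^*})$. Hence
\begin{align*}
g(\ba)=\frac{1}{|S^*|}\sum_{i\in S^*\cap\text{st}(B)}\rbr{\cU_i(\cdot)-\cU_i(\ba)}.
\end{align*}
Each summand lies in $[-1,1]$ because $\cU_i\in[0,1]$, and there are at most $|S^*|$ summands, so $\abr{g}\le 1$.

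\textbf{Main obstacle.} The delicate part is Step 1: checking that the noise terms, which are attached to bags rather than players, and the $\min$ form of the recursion in \Cref{eq:pi-dp} both fit \Cref{lemma:optimality}. This requires that the children's subtrees interact only through $B$, which is exactly \Cref{item:td-property3}. It also requires that each $\ell_i$ depends only on $\ba_{\cN(i)}$ with $\cN(i)$ contained in the bag to which $i$ is assigned; this holds because deviation coordinates outside $\cN(i)$ do not affect $\cU_i$. Once Step 1 is in place, the rest is immediate.
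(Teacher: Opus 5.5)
Your proposal is correct and follows essentially the same route as the paper: the paper also adds one noise pseudo-player per bag and invokes \Cref{lemma:optimality} to write $d^{(t)}(B,\ba_B)$ as a minimum over $\ba_{-B}$ of a sum over $\text{st}(B)$, then bounds the change by evaluating each objective at the other's optimizer, which is your Step 2 inequality proved inline. The only differences are cosmetic. The paper bounds the one-round increment by the total mass $\sum_{S\in\cS}\frac{|S\cap \text{st}(B)|}{|S|}\sum_{\hat\ba_S}\mu(S,\hat\ba_S)\le 1$ instead of using purity of $\mu^{(t)}$, and your indexing (the increment involves $\mu^{(t)}$, not $\mu^{(t+1)}$) and your consistent use of $\min$ are actually cleaner than the paper's write-up.
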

The proof is postponed to the end of this section. Let 
\begin{align*}
    w=\max_{\ba_{B\setminus fa(B)}\in \cA_{B\setminus fa(B)}\setminus\cbr{\ba_{B\setminus fa(B)}'}}\allowbreak \rbr{d^{(t)}\rbr{B, \ba_B'} + n\rbr{B, \ba_B'}} - d^{(t)}\rbr{B, (\ba_{B\cap fa(B)}', \ba_{B\setminus fa(B)})}.
\end{align*}
Note that $w$ only depends on $\mu^{(1)},\dots,\mu^{(t-1)}$ and $\bx$. Then,
\begin{align*}
    p_B^{(t)}(\ba_{B\cup fa(B)}', \bx)\geq&\Pr\rbr{n\rbr{B, \ba_B'}\geq w + 2 \given n\rbr{B, \ba_B'}\geq w}\\
    =& \frac{\Pr\rbr{n\rbr{B, \ba_B'}\geq w + 2}}{\Pr\rbr{n\rbr{B, \ba_B'}\geq w}}\\
    \overset{(i)}{=}& \frac{\exp\rbr{-\eta\rbr{w+ 2}}}{\exp\rbr{-\eta w}}\\
    =&\exp\rbr{-2\eta}.
\end{align*}
$(i)$ is because $n\rbr{B, \ba_B'}\sim \Exp(\eta)$. Finally, by union bound,
\begin{align*}
    \Pr_{\tilde \bn}\rbr{\pi^{(t)}=\pi^{(t+1)}}\geq& 1 - \sum_{B\in\cT} \rbr{1-p_B^{(t)}(\ba_{B\cup fa(B)}', \bx)}\\
    \geq& 1 - \sum_{B\in\cT} \rbr{1-\exp\rbr{-2\eta}}\\
    \geq& 1-\sum_{B\in\cT} 2\eta\\
    =&1-2\eta\abr{\cT}.
\end{align*}
Therefore,
\begin{align*}
    &\EE\sbr{\sum_{t=1}^T F\rbr{\pi^{(t)},\mu^{(t)}} - F\rbr{\hat\pi,\mu^{(t)}}}\\
    \leq& \EE\sbr{\inner{\tilde \bn}{\pi^{(2)} - \hat\pi}} + \sum_{t=1}^T \EE\sbr{F\rbr{\pi^{(t)},\mu^{(t)}} - F\rbr{\pi^{(t+1)},\mu^{(t)}}}\\
    \leq&  \EE\sbr{\inner{\tilde \bn}{\pi^{(2)} - \hat\pi}} + 2\eta\abr{\cT} T.
\end{align*}
Since $\exp(x)\geq 1+x$ for any $x\in\RR$, $\rbr{1-\exp\rbr{-2\eta\abr{\cT}}}\leq 2\eta\abr{\cT}$. Additionally,
\begin{align*}
    \EE\sbr{\inner{\tilde \bn}{\pi^{(2)} - \hat\pi}}\overset{(i)}{\leq} \EE\sbr{\nbr{\tilde \bn}_\infty\cdot \nbr{\pi^{(2)} - \hat\pi}_1}\leq& 2\EE\sbr{\nbr{\tilde \bn}_\infty}\\
    \leq& 2\sum_{B\in\cT} \max_{\ba_B\in\cA_B} n(B, \ba_B)\\
    \overset{(i)}{\leq}& 2\sum_{B\in\cT} \frac{1+\log \abr{\cA_B}}{\eta}.
\end{align*}
$(i)$ is by \holder. $(ii)$ is because the expectation of the maximum of $n$ i.i.d. random variable sampled from $\Exp(\eta)$ is upper bounded by $\frac{1+\log n}{\eta}$ \citep{agarwal2019learning-nonconvex}. Furthermore, $\log \abr{\cA_B}\leq |B|\cdot\log A\leq \rbr{\tw{\cG}+1}\log A$. Hence,
\begin{align*}
    \EE\sbr{\sum_{t=1}^T F\rbr{\pi^{(t)},\mu^{(t)}} - F\rbr{\hat\pi,\mu^{(t)}}}\leq 2\abr{\cT} \frac{1+\rbr{\tw{\cG}+1}\log A}{\eta} + 2\eta\abr{\cT} T.
\end{align*}
\qed
\end{proof}

\subsection{Proof of Auxiliary Lemmas}

\restate{lemma:variation-d-max}

\begin{proof}
    Recall \Cref{lemma:optimality}. We can add $|\cT|$ players as the noise player, each assigned to a bag in $\cT$, with $u_i(\ba_B)=n(B, \ba_B)$ so that $\cN(i)=B$. Recall that $u_i$ is the contribution of player $i$ to the objective function $F$. Then, by \Cref{lemma:optimality}, for any $B\in\cT$ and $\ba_B\in\cA_B$, we have
    \begin{align*}
        d^{(t)}(B,\ba_B)=\min_{\ba_{-B}\in\cA_{-B}} \sum_{i\in \text{st}(B)} u_i^{(t)}(\ba_{\cN(i)}),
    \end{align*}
    where
    \begin{align*}
        &u_i^{(t)}(\ba_{\cN(i)})=-\sum_{\tau=1}^t \sum_{\substack{S\in\cS\colon\\i\in S}} ~ \frac{1}{|S|} \sum_{\hat\ba_{\cN_i^S}\in\cA_{\cN_i^S}} \mu^{(\tau)}\rbr{S, \hat\ba_{\cN_i^S}} \rbr{\cU_i\rbr{\hat\ba_{\cN_i^S}, \ba_{\cN(i)\setminus S}} - \cU_i(\ba_{\cN(i)})}\text{ for }i\in [N]\\
        &u_i^{(t)}(\ba_{\cN(i)})=n(B, \ba_{\cN(i)})\text{ for $i$ as the noise player assigned to bag } B.
    \end{align*}
    For any noise player, we can see that $u_i^{(t)}(\ba_{\cN(i)}) - u_i^{(t+1)}(\ba_{\cN(i)})=0$. For any $i\in [N]$,
    \begin{align*}
        &\abr{u_i^{(t)}(\ba_{\cN(i)}) - u_i^{(t+1)}(\ba_{\cN(i)})}\\
        =&\abr{\sum_{\substack{S\in\cS\colon\\i\in S}} ~ \frac{1}{|S|}\sum_{\hat\ba_{\cN_i^S}\in\cA_{\cN_i^S}} \mu^{(t+1)}\rbr{S, \hat\ba_{\cN_i^S}} \rbr{\cU_i\rbr{\hat\ba_{\cN_i^S}, \ba_{\cN(i)\setminus S}} - \cU_i(\ba_{\cN(i)})}}\\
        \leq&\sum_{\substack{S\in\cS\colon\\i\in S}} ~ \frac{1}{|S|}\sum_{\hat\ba_{\cN_i^S}\in\cA_{\cN_i^S}} \mu^{(t+1)}\rbr{S, \hat\ba_{\cN_i^S}} \abr{\cU_i\rbr{\hat\ba_{\cN_i^S}, \ba_{\cN(i)\setminus S}} - \cU_i(\ba_{\cN(i)})}\\
        \overset{(i)}{\leq}& \sum_{\substack{S\in\cS\colon\\i\in S}} ~ \frac{1}{|S|}\sum_{\hat\ba_{\cN_i^S}\in\cA_{\cN_i^S}} \mu^{(t+1)}\rbr{S, \hat\ba_{\cN_i^S}}.
    \end{align*}
    $(i)$ is because $\cU_i(\ba_{\cN(i)})\in [0,1]$ for any $\ba_{\cN(i)}\in\cA_{\cN(i)}$.
    
    Recall that by definition, $\ba_{-B}^{(t)}=\argmax_{\ba_{-B}\in\cA_{-B}} \sum_{i\in \text{st}(B)} u_i^{(t)}(\ba_{\cN(i)})$. Then,
    \begin{align*}
        &d^{(t)}(B,\ba_B)\\
        =&\sum_{i\in \text{st}(B)} u_i^{(t)}\rbr{(\ba_{\cN(i)\cap B}, \ba_{\cN(i)\setminus B}^{(t)})}\\
        \geq& \sum_{i\in \text{st}(B)} u_i^{(t)}\rbr{(\ba_{\cN(i)\cap B}, \ba_{\cN(i)\setminus B}^{(t+1)})}\\
        =& \sum_{i\in \text{st}(B)} u_i^{(t+1)}\rbr{(\ba_{\cN(i)\cap B}, \ba_{\cN(i)\setminus B}^{(t+1)})} +\rbr{ u_i^{(t+1)}\rbr{(\ba_{\cN(i)\cap B}, \ba_{\cN(i)\setminus B}^{(t+1)})} - u_i^{(t)}\rbr{(\ba_{\cN(i)\cap B}, \ba_{\cN(i)\setminus B}^{(t+1)})} }\\
        \geq& \sum_{i\in \text{st}(B)} \rbr{u_i^{(t+1)}\rbr{(\ba_{\cN(i)\cap B}, \ba_{\cN(i)\setminus B}^{(t+1)})} - \sum_{\substack{S\in\cS\colon\\i\in S}} ~ \frac{1}{|S|}\sum_{\hat\ba_{\cN_i^S}\in\cA_{\cN_i^S}} \mu^{(t+1)}\rbr{S, \hat\ba_{\cN_i^S}}}\\
        =& d^{(t+1)}(B,\ba_B) - \sum_{S\in\cS} ~\frac{1}{|S|} \sum_{i\in S\cap \text{st}(B)}~ \sum_{\hat\ba_{\cN_i^S}\in\cA_{\cN_i^S}} \mu^{(t+1)}\rbr{S, \hat\ba_{\cN_i^S}}\\
        \geq& d^{(t+1)}(B,\ba_B) - 1.
    \end{align*}
    Similarly, we can get the upper bound that $d^{(t)}(B,\ba_B)\leq d^{(t+1)}(B,\ba_B) + 1$. Hence, the proof is completed.
\end{proof}

\section{Efficient Update of $\mu$}
\label{section:mu-update}

This section provides the omitted details regarding the update procedure for $\mu$ and presents the complete proof of \Cref{theorem:mu-no-regret}.

\subsection{Efficient Update of $\mu$}

The procedure for updating $\mu$ closely parallels that of $\pi$. Specifically, we iterate over all coalitions $S\in\cS$ and, for each $S$, determine the optimal action $\ba_S\in\cA_S$. To achieve this, we maintain a dynamic programming vector $\bg_S\in\RR^{\bigtimes_{B\in\cT}\cA_B}$ for each $S\in\cS$, which is updated according to
\begin{equation}
\begin{split}
    g_S^{(t+1)}(B, \hat\ba_B) =& \frac{1}{|S|} \sum_{\tau=1}^t \sum_{\substack{i\in S\colon\\i\text{ assigned to }B}} {\color{blue}\sum_{\ba\in\cA}} \pi^{(\tau)}(\ba) \rbr{\cU_i\rbr{\hat\ba_{B\cap S}, \ba_{B\setminus S}} - \cU_i\rbr{\ba_B}} \notag\\
    &+\sum_{B'\in C(B)} \max_{\substack{\hat\ba_{B'}'\in\cA_{B'}\colon\\\hat\ba_{B\cap B'}=\hat\ba'_{B\cap B'}}} g^{(t+1)}(B', \hat \ba_{B'}') + m^{(t+1)}(B, \hat \ba_B).
\end{split}
\label{eq:mu-dp-1}
\end{equation}
At first sight, the {\color{blue} $\sum_{\ba\in\cA}$} appears computationally prohibitive, since $\cA$ is exponentially large. Fortunately, the update becomes tractable once we recall that $\pi^{(\tau)}$ is always a pure strategy for $\tau\geq 1$. Denote by $\ba^{(\tau)}$ the joint action selected by $\pi^{(\tau)}$. Then \Cref{eq:mu-dp-1} simplifies to
\begin{equation}
\begin{split}
    g_S^{(t+1)}(B, \hat\ba_B) =& \frac{1}{|S|} \sum_{\tau=1}^t \sum_{\substack{i\in S\colon\\i\text{ assigned to }B}} \rbr{\cU_i\rbr{\hat\ba_{B\cap S}, {\color{blue} \ba_{B\setminus S}^{(\tau)}}} -  \cU_i\rbr{{\color{blue}\ba_B^{(\tau)}}}} \\
    &+\sum_{B'\in C(B)} \max_{\substack{\hat\ba_{B'}'\in\cA_{B'}\colon\\\hat\ba_{B\cap B'}=\hat\ba'_{B\cap B'}}} g^{(t+1)}(B', \hat \ba_{B'}') + m_S^{(t+1)}(B, \hat \ba_B).
\end{split}
\label{eq:mu-dp}
\end{equation}
After completing the dynamic programming updates, we focus on the root bag $B^r$ of the tree decomposition. The selected coalition is then $S^{(t+1)}=\allowbreak\argmax_{S\in\cS} \allowbreak\max_{\hat \ba_{B^r}\in\cA_{B^r}}\allowbreak g_S^{(t+1)} (B^r,\hat \ba_{B^r})$. Next, we apply the reconstruction procedure in \Cref{eq:find-argmax-dp-pi} on $\bg_{S^{(t+1)}}^{(t+1)}$ to extract a joint action $\hat\ba^{(t+1)}\in\cA$. Finally, we update $\mu^{(t+1)}\rbr{S^{(t+1)}, \hat\ba^{(t+1)}}=1$.

This procedure ensures that $\mu$ can be updated efficiently while maintaining consistency with the tree decomposition structure. Analogous to the update of $\pi$, the regret of this process can be bounded.

\subsection{Proof of \Cref{theorem:mu-no-regret}}
\label{section:proof-mu-no-regret}

\restate{theorem:mu-no-regret}

\begin{proof}
    The proof of \Cref{theorem:mu-no-regret} is similar to that of \Cref{theorem:pi-no-regret}. By using a similar argument as the proof of \Cref{theorem:pi-no-regret}, for any $\hat\mu\in\Delta^{\bigtimes_{S\in\cS} \cA_S}$, we have
    \begin{align*}
        &\sum_{t=1}^T \rbr{F\rbr{\pi^{(t)},\hat \mu} - F\rbr{\pi^{(t)},\mu^{(t)}}}\\
        \leq& \inner{\tilde \bbm}{\hat\mu - \mu^{(2)}} + \sum_{t=1}^T \rbr{F\rbr{\pi^{(t)},\mu^{(t+1)}} - F\rbr{\pi^{(t)},\mu^{(t)}}}.
    \end{align*}
    Next, by introducing the counterpart of \Cref{lemma:variation-d-max} in the following, the rest of the proof follows that of \Cref{theorem:pi-no-regret}.
    \begin{lemma}
        \label{lemma:variation-g-max}
        Consider the update rule \Cref{eq:update-rule}. For any timestep $t=1,2,\dots,T$, bag $B\in\cT$, joint action $\hat\ba_B\in\cA_B$, coalition $S\in\cS$, and noise $\tilde \bbm\in\RR^{\bigtimes_{B\in\cT} \cA_B}$, we have
        \begin{align*}
            \abr{g_S^{(t+1)}(B, \hat\ba_B) - g_S^{(t)}(B, \hat\ba_B)}\leq 1.
        \end{align*}
\end{lemma}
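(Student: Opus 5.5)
The plan is to mirror the proof of \Cref{lemma:variation-d-max}, using \Cref{lemma:optimality} to turn $g_S^{(t)}$ into a maximization of a sum of local contributions. Fix $S\in\cS$. For each player $i\in S$, define the per-player contribution
\begin{align*}
u_{i,S}^{(t)}(\hat\ba_{\cN(i)})\coloneqq\frac{1}{|S|}\sum_{\tau=1}^{t-1}\rbr{\cU_i\rbr{\hat\ba_{\cN(i)\cap S},\ba^{(\tau)}_{\cN(i)\setminus S}}-\cU_i\rbr{\ba^{(\tau)}_{\cN(i)}}}.
\end{align*}
Here $\ba^{(\tau)}$ is the pure joint action chosen by $\pi^{(\tau)}$. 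For players $i\notin S$ we set the contribution to zero. For each bag $B$ we also add a noise player with $\cN=B$ and contribution $m_S(B,\hat\ba_B)$. Since the noise is fixed across timesteps, this contribution does not depend on $t$. Then \Cref{eq:mu-dp} has exactly the form of \Cref{eq:general-dp}, so \Cref{lemma:optimality} gives
\begin{align*}
g_S^{(t)}(B,\hat\ba_B)=\max_{\hat\ba_{-B}}\sum_{i\in\text{st}(B)}u^{(t)}_{i,S}(\hat\ba_{\cN(i)}).
\end{align*}
The key point is that coordinates of $\hat\ba$ outside $S$ never enter any contribution. So the maximization is over a fixed function family, and only the data $\ba^{(\tau)}$ changes with $t$.

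Next I bound the per-step change of each contribution. Passing from $t$ to $t+1$ adds exactly one summand, namely
\begin{align*}
\frac{1}{|S|}\rbr{\cU_i(\hat\ba_{\cN(i)\cap S},\ba^{(t)}_{\cN(i)\setminus S})-\cU_i(\ba^{(t)}_{\cN(i)})}.
\end{align*}
Because utilities lie in $[0,1]$, this summand has absolute value at most $1/|S|$. Noise players contribute no change.

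Finally I compare the two maxima. Let $\hat\ba^{(t)}_{-B}$ attain the maximum at time $t$. Plugging it into the time-$(t+1)$ objective gives
\begin{align*}
g_S^{(t+1)}(B,\hat\ba_B)\ge g_S^{(t)}(B,\hat\ba_B)-\sum_{i\in S\cap\text{st}(B)}\frac{1}{|S|}\ge g_S^{(t)}(B,\hat\ba_B)-1.
\end{align*}
The symmetric argument, using the time-$(t+1)$ maximizer in the time-$t$ objective, gives the reverse inequality. Together these yield $\abr{g_S^{(t+1)}(B,\hat\ba_B)-g_S^{(t)}(B,\hat\ba_B)}\le 1$.

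The main obstacle is bookkeeping rather than analysis. First, the sum over $i$ assigned to bags in the subtree counts each player of $S$ at most once, since each player is assigned to exactly one bag; this is what makes the total change at most $|S|\cdot\frac{1}{|S|}=1$. Second, \Cref{eq:mu-dp} writes the children's recursion with $g^{(t+1)}$ but means $g_S^{(t+1)}$. Third, the case $S\not\subseteq B$ should be checked: only the coordinates $\hat\ba_{\cN(i)\cap S}$ matter, and they are handled by the max over $\hat\ba_{-B}$ exactly as in \Cref{lemma:optimality}.
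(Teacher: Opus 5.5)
Your proposal is correct and follows the same route as the paper's proof. Both represent $g_S^{(t)}$ via \Cref{lemma:optimality} with added fixed-noise players, bound each player's per-step change by $1/|S|$, and compare the two maxima by plugging one time's maximizer into the other objective, so the total change is at most $|S\cap \text{st}(B)|/|S|\le 1$. Your side remark that coordinates of $\hat\ba$ outside $S$ never enter any contribution is not accurate, because the noise terms $m_S(B,\hat\ba_B)$ depend on all of $\hat\ba_B$. The argument never uses that remark, though; it only needs the noise contributions to be constant in $t$.
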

The proof is postponed to the end of this section.
    Let $S^{(t)},\hat\ba^{(t)}$ denote the coalition and action the deviator picks at timestep $t$, \emph{i.e.}, $\mu^{(t)}\rbr{S^{(t)},\hat\ba^{(t)}}=1$. Then,
    \begin{align*}
        &\EE\sbr{F\rbr{\pi^{(t)},\mu^{(t+1)}} - F\rbr{\pi^{(t)},\mu^{(t)}}} \\
        \leq& \Pr_{\tilde\bbm}\rbr{\mu^{(t+1)}\not=\mu^{(t)}}\\
        =&\Pr_{\tilde\bbm}\rbr{S^{(t)}=S^{(t+1)}} \prod_{B\in\cT} \Pr_{\tilde \bbm}\rbr{\hat\ba_{B\setminus fa(B)}^{(t)}=\hat\ba_{B\setminus fa(B)}^{(t+1)}\biggiven \hat\ba_{fa(B)}^{(t)}=\hat\ba_{fa(B)}^{(t+1)}, S^{(t)}=S^{(t+1)}}\\
        \overset{(i)}{\leq}& 1-\exp\rbr{2\eta |\cT|}.
    \end{align*}
    $(i)$ is because choosing $S^{(t)}$ is equivalent to adding a new player in the root bag $B^r$, whose action is to select the coalition. Finally, for any $\delta>0$, with probability at least $1-\delta$,
    \begin{align*}
        &\sum_{t=1}^T \rbr{F\rbr{\pi^{(t)},\hat \mu} - F\rbr{\pi^{(t)},\mu^{(t)}}}\\
        \leq& 2\abr{\cT} \frac{1+\rbr{\tw{\cG}+1}\log A}{\eta} + 2\eta\abr{\cT} T + \sqrt{2T\log\frac{1}{\delta}}.\qedhere
    \end{align*}
\end{proof}

\subsection{Proof of Auxiliary Lemmas}

\restate{lemma:variation-g-max}
\begin{proof}
    For any $S\in\cS$, the upper bound of $\abr{g_S^{(t+1)}(B, \hat\ba_B) - g_S^{(t)}(B, \hat\ba_B)}$ can be obtained similarly to the proof of \Cref{lemma:variation-d-max} by choosing
    \begin{align*}
        &u_i^{(t)}(\ba_{\cN(i)})=\frac{1}{|S|} \sum_{\tau=1}^t \sum_{\substack{i\in S\colon\\i\text{ assigned to }B}} \sum_{\ba\in\cA} \pi^{(\tau)}(\ba) \rbr{\cU_i\rbr{\hat\ba_{B\cap S}, \ba_{B\setminus S}} - \cU_i\rbr{\ba_B}}\text{ for }i\in [N]\\
        &u_i^{(t)}(\ba_{\cN(i)})=m_S(B, \ba_{\cN(i)})\text{ for $i$ as the noise player assigned to bag } B.
    \end{align*}
    Then,
    \begin{align*}
        \abr{u_i^{(t)}(\ba_{\cN(i)}) - u_i^{(t+1)}(\ba_{\cN(i)})}
        =&\abr{\frac{1}{|S|} \sum_{\substack{i\in S\colon\\i\text{ assigned to }B}} \sum_{\ba\in\cA} \pi^{(t+1)}(\ba) \rbr{\cU_i\rbr{\hat\ba_{B\cap S}, \ba_{B\setminus S}} - \cU_i\rbr{\ba_B}}}\\
        \overset{(i)}{\leq}& \frac{1}{|S|} \sum_{\substack{i\in S\colon\\i\text{ assigned to }B}} \sum_{\ba\in\cA} \pi^{(t+1)}(\ba).
    \end{align*}
    $(i)$ is by the fact that $\cU_i\rbr{\hat\ba_{B\cap S}, \ba_{B\setminus S}}, \cU_i\rbr{\ba_B}\in [0, 1]$. The rest of the proof follows that of \Cref{lemma:variation-d-max}, and thus we complete the proof.
\end{proof}

\section{Additional Experimental Results}
\label{section:experiment-details}

In this section, we detail our experimental setup and report additional results for two further games: the Chicken game \citep{bergstrom1998evolution-chicken} and Pigou’s network \citep{pigou1920economics-pigou}. All experiments are conducted on a 13th Gen Intel(R) Core(TM) i7-13700K @ 3.40\,GHz. Error bars for \textcolor[HTML]{6c3483}{MASE} and \textcolor[HTML]{B85C38}{FTPL} indicate $\pm 1\sigma$ over $100$ random seeds ($0,1,\dots,99$). Across all experiments, we set the learning rate to $\eta=0.01$ and run for $T=10,000$ timesteps.

\subsection{Experimental Details}

For all baselines, we run each algorithm independently for each player, thus the average strategy converges to a CCE \citep{hazan2016introduction}. For \textcolor[HTML]{ff9800}{FTRL}, \textcolor[HTML]{a31214}{Hedge}, and \textcolor[HTML]{25597a}{OMD}, each player $i\in [N]$ is initialized with a uniform distribution $\pi_i^{(1)}$ over all actions. \textcolor[HTML]{6c3483}{MASE} and \textcolor[HTML]{B85C38}{FTPL} are initialized with a pure strategy chosen uniformly at random from all pure strategies.

\subsection{Utility Functions}

This subsection specifies the utility functions for all four games.

\paragraph{Prisoner’s Dilemma.} The utility matrix is shown in \Cref{table:PD-appendix}. If both prisoners confess, they receive reduced sentences. If one confesses while the other defects, the confessor is imprisoned and the defector is released immediately. If both defect, both are imprisoned for longer than in the mutual-confession case to penalize dishonesty.

\begin{table}[h]
    \centering
    \begin{tabular}{|c|c|c|}
    \hline
     & \textbf{Confess (C)} & \textbf{Defect (D)} \\
    \hline
    \textbf{Confess (C)} & $(0.6,0.6)$ & $(0,1)$ \\
    \hline
    \textbf{Defect (D)} & $(1,0)$ & $(0.2,0.2)$ \\
    \hline
    \end{tabular}
    \caption{Utility matrix of Prisoner’s Dilemma. Each entry $(a,b)$ denotes the payoffs to the row player ($a$) and the column player ($b$).}
    \label{table:PD-appendix}
\end{table}

\paragraph{Stag Hunt.} The utility matrix is shown in \Cref{table:SH-appendix}. A stag yields a higher reward, but it can only be hunted successfully if both players choose Stag; a solo stag attempt yields nothing. A hare provides a smaller payoff but can be secured by a single player.

\begin{table}[h]
    \centering
    \begin{tabular}{|c|c|c|}
    \hline
     & \textbf{Stag (S)} & \textbf{Hare (H)} \\
    \hline
    \textbf{Stag (S)} & $(1, 1)$ & $(0.1,0.8)$ \\
    \hline
    \textbf{Hare (H)} & $(0.8,0.1)$ & $(0.5,0.5)$ \\
    \hline
    \end{tabular}
    \caption{Utility matrix of the Stag Hunt. Each entry $(a,b)$ denotes the payoff of the row player ($a$) and the column player ($b$).}
    \label{table:SH-appendix}
\end{table}

\paragraph{Chicken Game.} Two drivers head toward each other and can either swerve or go straight. If one goes straight while the other swerves, the swerving player "loses". If both go straight, they crash.

\begin{table}[h]
    \centering
    \begin{tabular}{|c|c|c|}
    \hline
     & \textbf{Swerve (Sw)} & \textbf{Straight (St)} \\
    \hline
    \textbf{Swerve (Sw)} & $(5/6, 5/6)$ & $(2/3, 1)$ \\
    \hline
    \textbf{Straight (St)} & $(1,2/3)$ & $(0,0)$ \\
    \hline
    \end{tabular}
    \caption{Utility matrix of the Chicken game. Each entry $(a,b)$ denotes the payoff of the row player ($a$) and the column player ($b$).}
    \label{table:chicken-appendix}
\end{table}

\paragraph{Pigou Network.} We use a three-player variant of Pigou’s network. Each player chooses a \emph{fast} or \emph{slow} route. The slow route yields a constant utility of $0.25$. The fast route yields utility $1.5 - 0.5 \cdot (\text{number of players choosing the fast route})$, reflecting congestion.

\subsection{Additional Experimental Results}

\Cref{fig:additional} reports additional experiments on the Chicken game and Pigou’s network. \textcolor[HTML]{6c3483}{MASE} consistently outperforms the baselines in both coalition exploitability and social welfare. In Pigou’s network, purely self-interested players overuse the fast route, which in equilibrium becomes slow. By contrast, when players form coalitions and consider average utility within a coalition, they share the routes so that everyone is better off.

\begin{figure}
    \centering
    \includegraphics[width=0.95\linewidth]{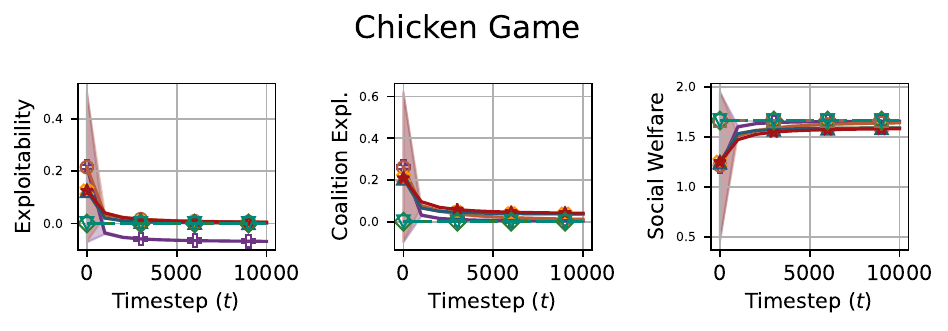}
    \includegraphics[width=0.95\linewidth]{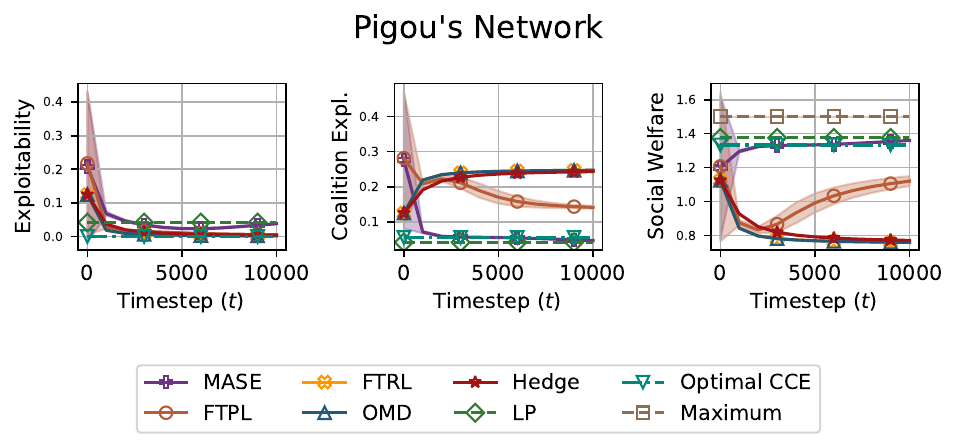}
    \caption{\textcolor[HTML]{2e7d32}{LP} refers to the linear program in \Cref{section:lp}. \textcolor[HTML]{8C6E54}{Maximum} marks the maximum social welfare. \textcolor[HTML]{6c3483}{MASE} outperforms the baselines in both games in terms of coalition exploitability and social welfare.}
    \label{fig:additional}
\end{figure}

\Cref{fig:tradeoff-2} shows the trade-off between exploitability and social welfare in the Chicken game and Pigou’s network.

\begin{figure}[h!]
    \centering
    \includegraphics[width=0.95\linewidth]{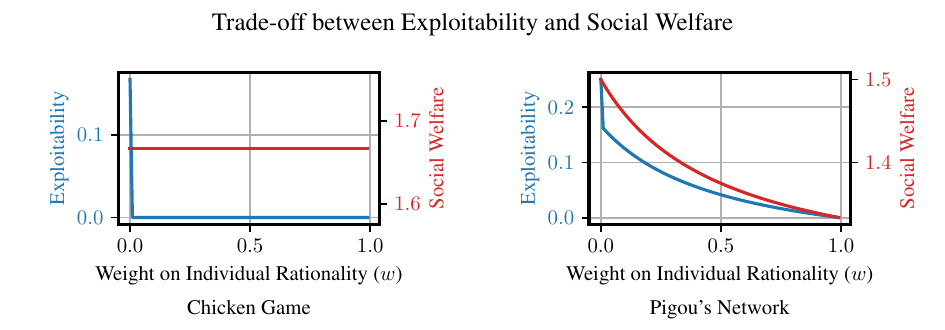}
    \caption{The trade-off between exploitability and social welfare in the Chicken game and the Pigou’s network.}
    \label{fig:tradeoff-2}
\end{figure}

\Cref{fig:time} reports the runtime of the algorithm for polymatrix games with varying numbers of players, action set sizes, and interaction densities.

\begin{figure}[h!]
    \centering
    \includegraphics[width=0.95\linewidth]{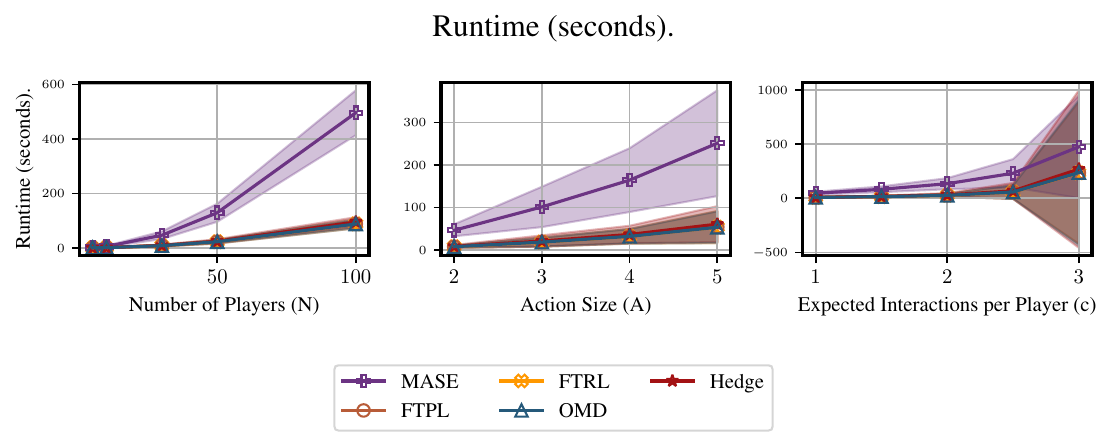}
    \caption{The runtime of the algorithm in different polymatrix games.}
    \label{fig:time}
\end{figure}

\section{Polymatrix Games}
\label{section:polymatrix}

In this section, we present experimental details for MASE on games with a larger number of players. We select polymatrix games as the benchmark for these large-scale experiments. This choice is motivated by their inherent graphical structure, which allows for the efficient generation of instances with a low treewidth of their \Cref{def:utility-dependence-graph}.

We begin with the formal definition. A polymatrix game has a corresponding undirected graph $\cG^U=(\cV^U, \cE^U)$, with $\cV^U=[N]$. For any joint action $\ba\in\cA$, the utility of any player $i$ is defined as:
\begin{align}
    \cU_i(\ba)\coloneqq \sum_{(i,j)\in \cE^U} \cU_{i,j}(a_i, a_j),
\end{align}
where $\cU_{i,j}\colon \cA_i\times\cA_j\to [0, 1]$ represents the interaction between players $i$ and $j$. In other words, only players who are connected in $\cG^U$ interact, and a player's total utility is the summation of these pairwise interactions.

If we construct the \Cref{def:utility-dependence-graph} directly, then the tree decomposition may explode unwillingly, \emph{e.g.}, \Cref{fig:polymatrix} (a). We can see that the treewidth of $\cG^U$ is one while the treewidth of the \Cref{def:utility-dependence-graph} is three.

Constructing the \Cref{def:utility-dependence-graph} directly from the polymatrix game can cause its treewidth to explode. For example, in \Cref{fig:polymatrix} (a), the original graph $\cG^U$ has a treewidth of one, while the resulting \Cref{def:utility-dependence-graph} has a treewidth of three.

To prevent this, we construct a strategically equivalent game (note that this new game is \emph{not} a polymatrix game). This construction explicitly models the pairwise interactions as new players:
\begin{itemize}
    \item For any original edge $(i,j)\in\cE^U$, we introduce two edge players, $e_{i,j}$ and $e_{j,i}$.
    \item Each edge player $e_{i,j}$ has a singleton action set, $\abr{\cA_{e_{i,j}}}=1$, meaning it has only a single strategy.\item The utility function of an edge player $e_{i,j}$ is defined as the original interaction utility: $\tilde\cU_{e_{i,j}}=\cU_{i,j}$.
    \item The utility function of an original vertex player $i$ (one of the original $N$ players) is now a constant zero: $\tilde\cU_i\equiv 0$.
\end{itemize}
This transformation is illustrated in \Cref{fig:polymatrix} (b). The \Cref{def:utility-dependence-graph} for this new game, shown on the right of \Cref{fig:polymatrix} (b), now has a treewidth of $\max\rbr{\tw{\cG^U}, 2}$. This method effectively bounds the treewidth and avoids the undesirable explosion.

\begin{figure}[t]
    \centering
    \includegraphics[width=0.5\linewidth]{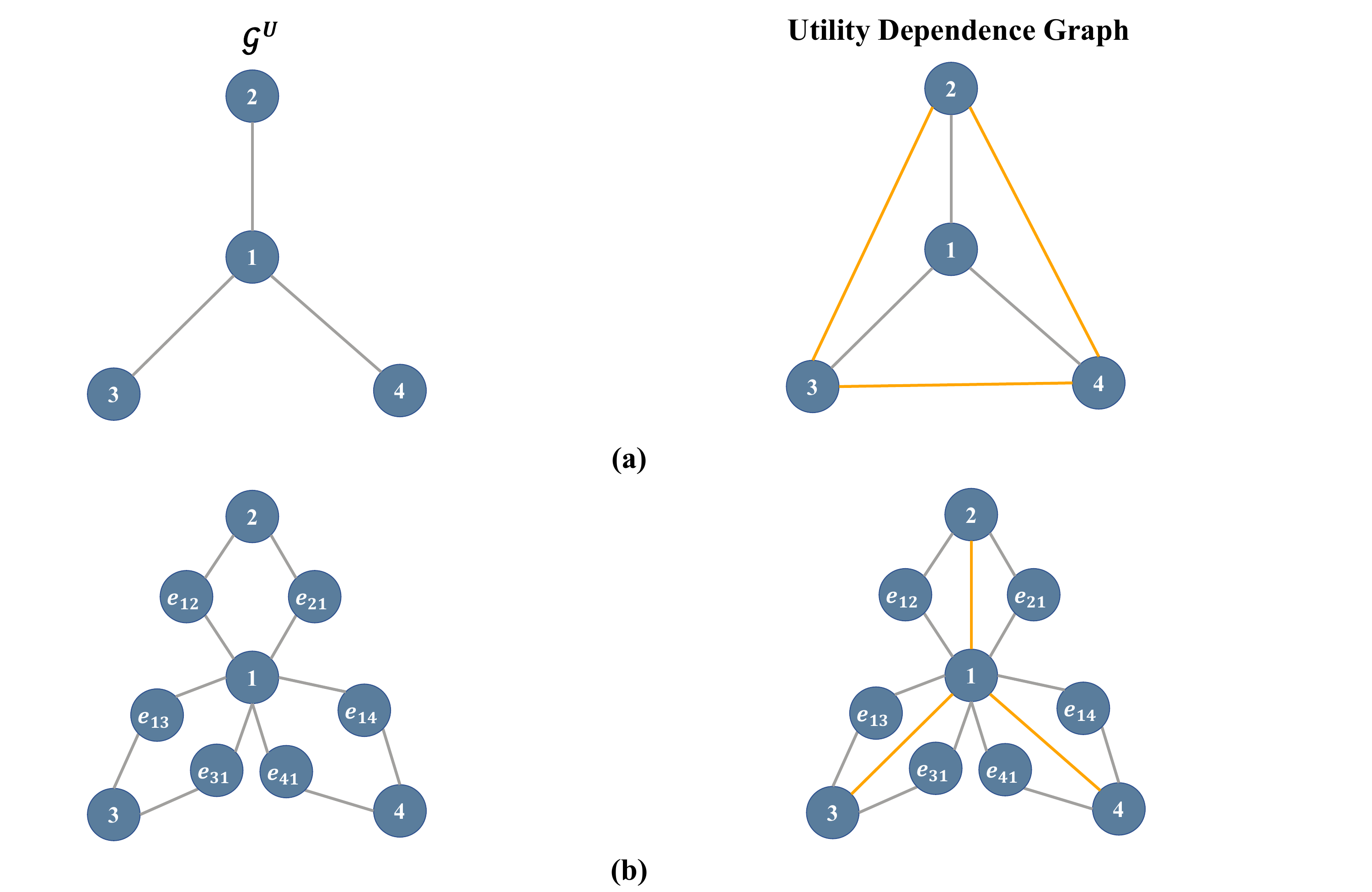}
    \caption{(a) The original graph $\cG^U$ corresponding to the polymatrix game (left) and the \Cref{def:utility-dependence-graph}. (b) The strategically equivalent game and its \Cref{def:utility-dependence-graph}.}
    \label{fig:polymatrix}
\end{figure}

Next, we show that the new game and the original polymatrix game are strategically equivalent. In other words, for any joint strategy $\pi\in\Delta^{\cA}$, the maximum average deviation gain, $\max_{S\in \cS}\allowbreak \max_{\hat\ba_S\in\cA_S}\allowbreak \frac{1}{|S|}\sum_{i\in S}\allowbreak \EE_{\ba\sim \pi}\sbr{\cU_i\rbr{\hat\ba_S, \ba_{-S}} - \cU_i\rbr{\ba} }$, does not change. Recall that since the edge players have only a single action, $\pi\in\Delta^{\cA}$ (a distribution over the original players' joint actions) is sufficient to specify the joint strategy in both games.

\begin{lemma}
\label{lemma:equivalent-game}
The new game described above is equivalent to the original polymatrix game, when $\tilde\cS=\cbr{S\cup \cbr{e_{i,j}}_{i\in S\wedge (i,j)\in \cE^U}}_{S\in\cS}$. Formally, for any joint strategy $\pi\in\Delta^{\cA}$, we have \begin{align*} 
    &\max_{S\in \cS} \max_{\hat\ba_S\in\cA_S} \frac{1}{|S|}\sum_{i\in S} \EE_{\ba\sim \pi}\sbr{\cU_i\rbr{\hat\ba_S, \ba_{-S}} - \cU_i\rbr{\ba} }\\
    =&\max_{\tilde S\in \tilde\cS} \max_{\hat\ba_{\tilde S}\in\cA_{\tilde S}} \frac{1}{|\tilde S\cap [N]|}\sum_{i\in \tilde S} \EE_{\ba\sim \pi}\sbr{\tilde \cU_i\rbr{\hat\ba_{\tilde S\cap [N]}, \ba_{-(\tilde S\cap [N])}} - \tilde\cU_i\rbr{\ba} }. 
\end{align*} 
\end{lemma}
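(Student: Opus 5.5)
The plan is to build a bijection between $\cS$ and $\tilde\cS$ and then show that the inner objectives agree term by term. The map $S\mapsto \tilde S \coloneqq S\cup\cbr{e_{i,j}}_{i\in S\wedge (i,j)\in\cE^U}$ is a bijection, since $\tilde S\cap[N]=S$ and so $S$ is recovered from $\tilde S$. Two things follow immediately. First, the normalization $\frac{1}{|\tilde S\cap [N]|}$ equals $\frac{1}{|S|}$. Second, every edge player has a singleton action set, so $\cA_{\tilde S}$ is in natural bijection with $\cA_S$, and maximizing over $\hat\ba_{\tilde S}$ amounts to maximizing over $\hat\ba_S$. It therefore suffices to fix $S$ and $\hat\ba_S$ and check that the two sums of deviation gains coincide.

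Fix $S\in\cS$ and $\hat\ba_S\in\cA_S$. On the right-hand side, the original vertex players $i\in S$ contribute nothing, because $\tilde\cU_i\equiv 0$. The remaining contributions come from the edge players $e_{i,j}$ with $i\in S$ and $(i,j)\in\cE^U$. Each such player contributes
\begin{align*}
\EE_{\ba\sim\pi}\sbr{\cU_{i,j}\rbr{(\hat\ba_S,\ba_{-S})_i,(\hat\ba_S,\ba_{-S})_j} - \cU_{i,j}(a_i,a_j)},
\end{align*}
where the deviation replaces the actions of $S$ wherever they appear, in particular $a_j$ whenever $j\in S$. On the left-hand side, expand $\cU_i(\ba)=\sum_{(i,j)\in\cE^U}\cU_{i,j}(a_i,a_j)$ and use linearity of expectation. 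The gain of player $i\in S$ is then the sum over neighbors $j$ of exactly the same bracketed expression. Summing over $i\in S$, the left-hand sum is the sum over the pairs $(i,j)$ with $i\in S$ and $(i,j)\in\cE^U$. Each such ordered pair corresponds to precisely one edge player $e_{i,j}\in\tilde S$, so the two sums are identical.

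The main point to handle carefully is the convention for ordered pairs. When both $i,j\in S$, the edge $(i,j)$ contributes twice on the left, once as $\cU_{i,j}$ in player $i$'s utility and once as $\cU_{j,i}$ in player $j$'s utility. This is matched on the right by the two distinct edge players $e_{i,j}$ and $e_{j,i}$, both of which belong to $\tilde S$. When $j\notin S$, only $e_{i,j}$ belongs to $\tilde S$, and only player $i$'s term appears on the left, so the matching again holds. I would also check the deviation notation on the right: it replaces only $\tilde S\cap[N]$, and this is consistent with edge players having no real action to deviate. With the inner values equal for every $(S,\hat\ba_S)$, taking maxima on both sides gives the claimed equality.
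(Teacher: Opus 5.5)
Your proposal is correct and follows essentially the same route as the paper's proof. The steps match: identify $\tilde S$ with $S$ via $\tilde S\cap[N]=S$ and the singleton action sets of the edge players, drop the original players' terms since $\tilde\cU_i\equiv 0$, and match the remaining edge-player terms to the neighbor-wise expansion of the polymatrix utilities. Your explicit treatment of the case $i,j\in S$, where both $e_{i,j}$ and $e_{j,i}$ lie in $\tilde S$, spells out what the paper compresses into its step (iii).
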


The proof is postponed to the end of this section. This equivalence allows us to solve the new game instead of the original one. Our algorithm can minimize a more general objective, $\max_{S\in \cS}\allowbreak \max_{\hat\ba_S\in\cA_S}\allowbreak w_S\allowbreak\sum_{i\in S}\allowbreak \EE_{\ba\sim \pi}\sbr{\cU_i\rbr{\hat\ba_S, \ba_{-S}} - \cU_i\rbr{\ba} }$, for any weight vector $\bw\in\RR^S$.\footnote{Both the implementation and the proof only use the linearity of the objective. Hence, any weighted sum can fit into the framework.} We can therefore apply our algorithm to this new, strategically equivalent game.

\subsection{Experimental Details}

We generate random polymatrix games using the following procedure:
\begin{itemize}
    \item Each pair of players $(i,j)$ is connected independently with probability $\frac{c}{N-1}$, where $N$ is the total number of players. This results in an expected degree of $c$ for each player in $\cG^U$.
    \item For each connected pair $(i,j)$, the interaction utilities $\cU_{i,j}(a_i,a_j)$ are sampled independently and uniformly from $[0,1]$ for all action pairs $a_i\in\cA_i,a_j\in\cA_j$. These pairwise utilities are then normalized according to the formula:
\begin{align*}
    \frac{\cU_{i,j}(a_i,a_j) - \min_{k\in[N], \hat\ba\in\cA} \cU_k(\hat\ba)}{\max_{k\in[N], \hat\ba\in\cA} \cU_k(\hat\ba) - \min_{k\in[N], \hat\ba\in\cA} \cU_k(\hat\ba)}.
\end{align*}
This process ensures that the final total utility $\cU_i(\ba)$ for any player $i\in[N]$ and joint action $\ba\in\cA$ falls within the within the range $[0,1]$.
\end{itemize}

Consistent with the experiments on small games, we average the results over \emph{100 runs} for each hyper-parameter setting (using seeds 0–99). All algorithms use a learning rate of $\eta=0.01$, and error bars represent $1~\sigma$. For these larger games, we set the number of timesteps to $T=100,000$ and a uniform action set size $\abr{\cA_i}=A$ for all players $i\in [N]$.

The hyper-parameters for the ablation studies are as follows:
\begin{itemize}
\item \textbf{Ablation on $N$}: $A=2$ and $c=1$.
\item \textbf{Ablation on $A$}: $N=30$ and $c=1$.
\item \textbf{Ablation on $c$}: $N=30$ and $A=2$.
\end{itemize}

Furthermore, to accelerate the algorithm, without loss of generality, we only need to consider $\cS=\cbr{\cbr{i}}_{i\in [N]}\cup \cbr{\cbr{i,j}\given i,j\in [N], (i,j)\in\cG^U}$ to minimize the coalition exploitability for any coalitions with no more than two players. In other words, for coalitions of two players, we only need to consider the case when they are connected in $\cG^U$. As shown in the following lemma.
\begin{lemma}
\label{lemma:connected-players-coalition}
    For any joint strategy $\pi\in\Delta^{\cA}$, by letting $\cS=\cbr{\cbr{i}}_{i\in [N]}\cup \cbr{\cbr{i,j}\given i,j\in [N]\wedge (i,j)\in\cG^U}$, we have 
    \begin{align*} 
    &\max_{S\in \cS} \max_{\hat\ba_S\in\cA_S} \frac{1}{|S|}\sum_{i\in S} \EE_{\ba\sim \pi}\sbr{\cU_i\rbr{\hat\ba_S, \ba_{-S}} - \cU_i\rbr{\ba} }\\
    =&\max_{\substack{S\in \cbr{\cbr{i}}_{i\in [N]}\\\cup \cbr{\cbr{i,j}\given i,j\in [N]\wedge i\not=j}}} \max_{\hat\ba_S\in\cA_S} \frac{1}{|S|}\sum_{i\in S} \EE_{\ba\sim \pi}\sbr{\cU_i\rbr{\hat\ba_S, \ba_{-S}} - \cU_i\rbr{\ba} }.
\end{align*}
\end{lemma}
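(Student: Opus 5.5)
The plan is to prove the two inequalities separately. The restricted family $\cS=\cbr{\cbr{i}}_{i\in [N]}\cup \cbr{\cbr{i,j}\given (i,j)\in\cG^U}$ is a subfamily of $\cbr{\cbr{i}}_{i\in [N]}\cup\cbr{\cbr{i,j}\given i\neq j}$. Hence the left-hand side is at most the right-hand side trivially, since we maximize over fewer coalitions. For the reverse inequality, it suffices to show one thing: for every pair $\cbr{i,j}$ with $i\neq j$ and $(i,j)\notin\cE^U$, and every $\hat\ba_{\cbr{i,j}}=(\hat a_i,\hat a_j)$, the average deviation gain of $\cbr{i,j}$ is at most the larger of the two singleton gains, which already appear in both families.

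The key step uses the polymatrix structure. By definition $\cU_i(\ba)=\sum_{(i,k)\in\cE^U}\cU_{i,k}(a_i,a_k)$. If $(i,j)\notin\cE^U$, then no term of $\cU_i$ involves $a_j$, so for every $\ba$
\begin{align*}
\cU_i\rbr{(\hat a_i,\hat a_j),\ba_{-\cbr{i,j}}}=\cU_i\rbr{\hat a_i,\ba_{-i}},
\end{align*}
and symmetrically $\cU_j\rbr{(\hat a_i,\hat a_j),\ba_{-\cbr{i,j}}}=\cU_j\rbr{\hat a_j,\ba_{-j}}$. Taking expectations under $\ba\sim\pi$, the pair's average gain splits as
\begin{align*}
\frac{1}{2}\rbr{g_i(\hat a_i)+g_j(\hat a_j)},\qquad g_k(\hat a_k)\coloneqq \EE_{\ba\sim\pi}\sbr{\cU_k(\hat a_k,\ba_{-k})-\cU_k(\ba)}.
\end{align*}

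This average is at most $\max\rbr{g_i(\hat a_i),g_j(\hat a_j)}$. That quantity in turn is at most the maximum over the singleton coalitions $\cbr{i}$ and $\cbr{j}$ of their best unilateral gains. Both singletons lie in the restricted family $\cS$, so the non-adjacent pair never strictly increases the maximum, and the two sides are equal.

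The only real point to check is the decoupling identity. It relies on $\cU_i$ summing only over edges incident to $i$, so that $j\notin\cN(i)$ whenever $(i,j)\notin\cE^U$. Everything else is linearity of expectation and the fact that an average of two numbers is at most their maximum. Neither step depends on $\pi$ being a product distribution.
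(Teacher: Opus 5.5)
Your proof is correct and takes essentially the same approach as the paper. For a non-adjacent pair, the polymatrix structure collapses each member's deviated utility to its unilateral deviation, so the pair's average gain is at most the larger singleton gain; you also state explicitly the trivial inclusion direction, which the paper leaves implicit.
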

The proof is postponed to the end of this section.

\subsection{Proof of the Auxiliary Lemma}

\restate{lemma:equivalent-game}
\begin{proof}
    For any $S\in\cS$, let $\tilde S$ be its correspondence in $\tilde \cS$. Then,
    \begin{align*}
        &\max_{\hat\ba_{\tilde S}\in\cA_{\tilde S}} \frac{1}{|\tilde S\cap [N]|}\sum_{i\in \tilde S} \EE_{\ba\sim \pi}\sbr{\tilde \cU_i\rbr{\hat\ba_{\tilde S\cap [N]}, \ba_{-(\tilde S\cap [N])}} - \tilde\cU_i\rbr{\ba} }\\
        \overset{(i)}{=}&\max_{\hat\ba_S\in\cA_S} \frac{1}{|S|}\sum_{i\in \tilde S} \EE_{\ba\sim \pi}\sbr{\tilde \cU_i\rbr{\hat\ba_S, \ba_{-S}} - \tilde\cU_i\rbr{\ba} }\\
        \overset{(ii)}{=}& \max_{\hat\ba_S\in\cA_S} \frac{1}{|S|}\sum_{i\in S}~\sum_{j\colon (i,j)\in \cE^U} \EE_{\ba\sim \pi}\sbr{\tilde \cU_{e_{i,j}}\rbr{\hat\ba_S, \ba_{-S}} - \tilde\cU_{e_{i,j}}\rbr{\ba} }\\
        \overset{(iii)}{=}&\max_{\hat\ba_S\in\cA_S} \frac{1}{|S|}\sum_{i\in S} \EE_{\ba\sim \pi}\sbr{\cU_i\rbr{\hat\ba_S, \ba_{-S}} - \cU_i\rbr{\ba} }.
    \end{align*}
    $(i)$ uses the fact that $|\cA_{e_{i,j}}|=1$ and $\tilde S\cap [N]=S$. $(ii)$ is because $\cU_i\equiv 0$ for any $i\in [N]$. $(iii)$ is by the definition of $\tilde \cU_{e_{i,j}}$ and $\tilde S$.
\end{proof}

\restate{lemma:connected-players-coalition}
\begin{proof}
    For any disconnected players $i,j$ and $S=\cbr{i,j}$, we can see that
    \begin{align*}
        &\max_{\hat\ba_S\in\cA_S} \frac{1}{|S|}\sum_{k\in S} \EE_{\ba\sim \pi}\sbr{\cU_k\rbr{\hat\ba_S, \ba_{-S}} - \cU_k\rbr{\ba} }\\
        \leq& \max_{\hat\ba_S\in\cA_S} \max_{k\in S} \EE_{\ba\sim \pi}\sbr{\cU_k\rbr{\hat\ba_S, \ba_{-S}} - \cU_k\rbr{\ba} }\\
        =&\max_{k\in S} \max_{\hat\ba_S\in\cA_S} \EE_{\ba\sim \pi}\sbr{\cU_k\rbr{\hat\ba_S, \ba_{-S}} - \cU_k\rbr{\ba} }\\
        =& \max_{k\in S} \max_{\hat\ba_S\in\cA_S} \EE_{\ba\sim \pi}\sbr{\sum_{k'\in [N]\colon (k,k')\in\cG^U} \cU_{k, k'}\rbr{\hat\ba_S, \ba_{-S}} - \cU_{k, k'}\rbr{\ba} }\\
        \overset{(i)}{=}& \max_{k\in S} \max_{\hat\ba_S\in\cA_S} \EE_{\ba\sim \pi}\sbr{\sum_{k'\in [N]\colon (k,k')\in\cG^U} \cU_{k, k'}\rbr{\hat\ba_k, \ba_{-k}} - \cU_{k, k'}\rbr{\ba} }\\
        =&\max_{k\in S} \max_{\hat\ba_S\in\cA_S} \EE_{\ba\sim \pi}\sbr{\cU_k\rbr{\hat\ba_k, \ba_{-k}} - \cU_k\rbr{\ba} }.
    \end{align*}
    $(i)$ is because $k'\not\in S$ since $k\in S=\cbr{i,j}$ and $i,j$ are not connected. Therefore, since the coalition exploitability of $S$ is upper bounded by the maximum of that of coalitions $\cbr{i}$ and $\cbr{j}$, we do not need to consider $\cbr{i,j}$.

    Actually, the argument can be generalized to coalitions of any size $M$. If we want to consider the coalition exploitability for coalitions no more than size $M$, then we only need to consider all connected coalitions of size no more than size $M$ by an induction similar to the proof above.
\end{proof}

\section{Omitted Proofs for \Cref{section:tradeoff}}
\label{section:proof-tradeoff}

\restate{lemma:pareto-frontier-concavity}

\begin{proof}
    First, note that the set of $\epsilon$-CCEs is a polytope. In particular, it can be written as
\[
\epsilon\text{-CCE}=\cbr{\one^\top \pi = 1 \ \wedge\ \bB\pi\leq \epsilon \ \wedge\ \pi\geq 0},
\]
where $\one\in \RR^{\cA}$ denotes the all-ones vector. Moreover, social welfare is linear in $\pi$, and can be expressed as $\bc^\top \pi$ for some vector $\bc\in\RR^{\cA}$. Hence, for a fixed $\epsilon$, computing an optimal $\pi\in\epsilon\text{-CCE}$ is equivalent to the linear program
    \begin{align}
        &\max \bc^\top\pi\notag\\
        \text{s.t.}\quad& \one^\top \pi = 1\tag{Primal}\\
        &\bB\pi\leq \epsilon\notag\\
        &\pi\geq 0.\notag
    \end{align}
    As the feasible set expands with increasing $\epsilon$, the EWF is non-decreasing. The dual of the linear program above is
    \begin{align}
        \min_{\lambda,\bmu}\quad & \lambda + \epsilon\cdot \one^\top\bmu\notag\\
        \text{s.t.}\quad & \lambda\,\one + \bB^\top \bmu \geq \bc,\tag{Dual}\\
        & \bmu \geq 0,\qquad \lambda \in \mathbb{R}\ \text{(free)}.\notag
    \end{align}
    The primal is feasible and bounded for every $\epsilon\geq 0$ (every Nash equilibrium exists \citep{nash1950equilibrium-nash-def} and is a CCE), so by strong duality the primal and dual attain the same optimal value. Importantly, the dual feasible region does not depend on $\epsilon$, and we denote it by $\cD$. Since the dual optimum is finite by strong duality, it is attained at a vertex of $\cD$. Because $\cD$'s vertex set $\text{Vertex}(\cD)$ is finite, we obtain
    \begin{align*}
        \ewf^{\rm CCE}(\epsilon)=\min_{(\lambda,\bmu)\in \text{Vertex}(\cD)} \lambda + \epsilon\cdot \one^\top\bmu.
    \end{align*}
    Finally, $\ewf^{\rm CCE}$ is the pointwise minimum of finitely many linear functions of $\epsilon$, and is therefore piecewise linear and concave. \qedhere
\end{proof}

\restate{lemma:slope-unbounded}

\begin{proof}
    In the Stag Hunt, the slope can be zero because a social-welfare-maximizing joint strategy is itself a Nash equilibrium. Next, we give an example in which the slope is unbounded.

    \begin{table}[h]
    \centering
    \begin{tabular}{|c|c|c|}
    \hline
     & \textbf{Confess (C)} & \textbf{Defect (D)} \\
    \hline
    \textbf{Confess (C)} & $(0.6,0.6)$ & $(0.2-\zeta,1)$ \\
    \hline
    \textbf{Defect (D)} & $(1,0.2-\zeta)$ & $(0.2,0.2)$ \\
    \hline
    \end{tabular}
    \caption{Utility matrix for a variant of Prisoner’s Dilemma, where $\zeta>0$ is a constant. Each entry $(a,b)$ gives the payoff to the row player ($a$) and the column player ($b$).}
    \label{table:PD-variant}
\end{table}

    Consider the variant of Prisoner’s Dilemma in \Cref{table:PD-variant}. It is straightforward to verify that the unique CCE remains $(D,D)$. Now consider the following strategy\footnote{In fact, this strategy is not correlated, so \Cref{lemma:slope-unbounded} extends to $\ewf^{\rm NE}$ and $\ewf^{\rm CE}$ as well.}, which is a linear mixture of the pure strategies $(D,D)$ and $(C,D)$.
    \begin{align*}
        (1-p)\cdot(D,D) + p\cdot (C, D).
    \end{align*}
    Its social welfare equals
    \begin{align*}
        0.4(1-p) + (1.2-\zeta)p=0.4 + (0.8-\zeta)p,
    \end{align*}
    while its exploitability is $\zeta\cdot p$. Hence, the slope at $\epsilon=0$ is at least
    \begin{align*}
        \frac{0.4 + (0.8-\zeta)p - 0.4}{\zeta\cdot p - 0.0} = \frac{0.8-\zeta}{\zeta}.
    \end{align*}
    Letting $\zeta\to 0$ makes this ratio diverge, so the slope can be arbitrarily large. \qedhere
\end{proof}

\restate{lemma:equivalence-tradeoff}

\begin{proof}
    For any $\epsilon > 0$, let $\pi^*$ be the strategy that maximizes social welfare subject to its exploitability being at most $\epsilon$. Let $g^* = \max_{\hat\ba\in\cA} \frac{1}{N}\sum_{i=1}^N \EE_{\ba\sim \pi^*}\sbr{\cU_i\rbr{\hat\ba} - \cU_i\rbr{\ba}}$ be the maximum gain for the grand coalition $[N]$. Let $w=\frac{g^*}{\epsilon + g^*}$. Then, by construction, the objective value for $\pi^*$ under \Cref{eq:weighted-mase} is:
    \begin{align*}
        \max_{S\in \cS} \max_{\hat\ba_S\in\cA_S} \frac{w_S}{|S|}\sum_{i\in S} \EE_{\ba\sim \pi^*}[\dots]
        &= \max\left( w \cdot (\text{exploitability}), (1-w) \cdot g^* \right) \\
        &\leq \max\left( w \epsilon, (1-w) g^* \right) = \frac{\epsilon g^*}{\epsilon + g^*}.
    \end{align*}
    Any strategy $\hat{\pi}$ with exploitability $>\epsilon$ would have an objective value $> w\epsilon = \frac{\epsilon g^*}{\epsilon + g^*}$, which is worse than the value $\pi^*$ achieves. Therefore, any optimal solution to \Cref{eq:weighted-mase} must have exploitability at most $\epsilon$. Since $\pi^*$ by definition maximizes social welfare (i.e., minimizes the coalition gain $g^*$) among all strategies in this set, it must also be an optimal solution to \Cref{eq:weighted-mase}.

    Conversely, for any $w\in [0, 1)$, let $\pi^*$ be the corresponding strategy that optimizes \Cref{eq:weighted-mase}. Let its exploitability be
    \begin{align*}
        \epsilon=\max_{i\in [N]} \max_{\hat a_i\in\cA_i} \EE_{\ba\sim \pi^*}\sbr{\cU_i\rbr{\hat a_i, \ba_{-i}} - \cU_i\rbr{\ba}}.
    \end{align*}
    We will show by contradiction that no strategy $\pi'$ exists such that $\text{exploitability}(\pi') \leq \epsilon$ and $SW(\pi') > SW(\pi^*)$ (which implies $g' < g^*$, where $g'$ is the gain for the grand coalition under $\pi'$).
    
    Suppose such a $\pi'$ exists. We analyze two cases:

    \paragraph{Case 1: $\text{exploitability}(\pi') < \epsilon$.}
    Since $\pi'$ has both strictly lower exploitability than $\pi^*$ and $g' < g^*$ (higher social welfare), its objective value is $\max(w \cdot \text{exploitability}(\pi'), (1-w)g')$. This is strictly less than $\max(w\epsilon, (1-w)g^*)$, which is the objective value of $\pi^*$. This contradicts the optimality of $\pi^*$.

    \paragraph{Case 2: $\text{exploitability}(\pi') = \epsilon$.}
    If $\epsilon > 0$, choose a small $\delta>0$ and consider the mixed strategy $\pi_{new} = (1-\delta)\pi'+\delta \pi''$, where $\pi''$ is an arbitrary CCE, which is guaranteed to exist \citep{nash1950equilibrium-nash-def}. For any $i\in [N]$ and $\hat a_i\in\cA_i$, we have:
    \begin{align*}
        &\EE_{\ba\sim \pi_{new}}\sbr{\cU_i\rbr{\hat a_i, \ba_{-i}} - \cU_i\rbr{\ba}}\\
        =&(1-\delta)\EE_{\ba\sim \pi'}\sbr{\cU_i\rbr{\hat a_i, \ba_{-i}} - \cU_i\rbr{\ba}} + \delta \EE_{\ba\sim \pi''}\sbr{\cU_i\rbr{\hat a_i, \ba_{-i}} - \cU_i\rbr{\ba}}\\
        \overset{(i)}{\leq}& (1-\delta)\EE_{\ba\sim \pi'}\sbr{\cU_i\rbr{\hat a_i, \ba_{-i}} - \cU_i\rbr{\ba}} \\
        \leq& (1-\delta)\epsilon.
    \end{align*}
    Step $(i)$ holds because $\pi''$ is a CCE, so its exploitability $\EE_{\pi''}[\dots]$ is $\leq 0$.
    Since $\epsilon > 0$, the new strategy $\pi_{new}$ has $\text{exploitability}(\pi_{new}) < \epsilon$. By continuity, for sufficiently small $\delta$, $SW(\pi_{new})$ remains strictly higher than $SW(\pi^*)$ (since $SW(\pi') > SW(\pi^*)$). This puts us in Case 1, which leads to a contradiction.

    If $\epsilon \leq 0$, then $\text{exploitability}(\pi^*) \leq 0$. The objective value for $\pi^*$ is $\max(w\epsilon, (1-w)g^*) = (1-w)g^*$ (since $w\epsilon \leq 0$ and $(1-w)g^* \geq 0$ by definition). The hypothetical strategy $\pi'$ has $\text{exploitability}(\pi') = \epsilon \leq 0$ and $g' < g^*$. Its objective value is $\max(w\epsilon, (1-w)g') = (1-w)g'$. Since $g' < g^*$ and $w<1$, the objective value of $\pi'$ is strictly less than that of $\pi^*$, which contradicts the optimality of $\pi^*$. %

    In all cases, the existence of such a $\pi'$ leads to a contradiction. Thus, $\pi^*$ must be a solution that maximizes social welfare for a given exploitability $\epsilon$.
\end{proof}

\restate{lemma:ewf-hardness}

\begin{proof}
    The proof follows from Theorem 5.10 of \citet{papadimitriou2008computing-against-hope}, which shows that for each of the succinct game classes listed above, deciding whether a \emph{dominant outcome} exists, \emph{i.e.}, whether there is a joint strategy at which every player attains that player's maximum utility, is {\sf NP}-hard. We reduce this decision problem to computing $\ewf^{\rm CCE}(0)$.

    Let $M \coloneqq \sum_{i=1}^N \max_{\ba\in\cA}\cU_i(\ba)$. The quantity $M$ can be computed efficiently in all of the above games \citep{papadimitriou2008computing-against-hope}. If a dominant outcome exists, then the corresponding joint strategy is a CCE, implying $\ewf^{\rm CCE}(0)=M$. Conversely, if no dominant outcome exists, then no CCE can achieve every player’s maximum utility simultaneously, and thus
    \begin{align*}
        \ewf^{\rm CCE}(0) < M.
    \end{align*}
    Therefore, computing $\ewf^{\rm CCE}(0)$ decides whether a dominant outcome exists, establishing {\sf NP}-hardness. \qedhere
\end{proof}

\end{document}